\documentclass{article}
\PassOptionsToPackage{numbers,compress}{natbib}
\usepackage[preprint]{skf}
\usepackage[utf8]{inputenc}
\usepackage[T1]{fontenc}
\usepackage[colorlinks=true,linkcolor=blue,citecolor=blue,urlcolor=blue,hypertexnames=false]{hyperref}
\usepackage{url}
\usepackage{booktabs}
\usepackage{amsfonts,amsmath,amssymb,amsthm}
\usepackage{nicefrac}
\usepackage{microtype}
\usepackage{xcolor}
\usepackage{graphicx}
\usepackage{enumitem}
\usepackage{algorithm}
\usepackage{algorithmic}
\usepackage{chngcntr}
\usepackage{placeins}

\newcommand{\R}{\mathbb{R}}
\newcommand{\N}{\mathbb{N}}
\newcommand{\E}{\mathbb{E}}
\newcommand{\Tr}{\operatorname{Tr}}
\newtheorem{proposition}{Proposition}
\newtheorem{theorem}{Theorem}
\theoremstyle{definition}
\newtheorem{remark}{Remark}

\title{The Score Kalman Filter}

\author{%
  Kaito Iwasaki \\
  Department of Mathematics \\
  University of Michigan \\
  \texttt{kaitoi@umich.edu} \\
  \And
  Anthony Bloch \\
  Department of Mathematics \\
  University of Michigan \\
  \texttt{abloch@umich.edu} \\
  \And
  Taeyoung Lee \\
  Department of Mechanical and \\
  Aerospace Engineering \\
  George Washington University \\
  \texttt{tylee@gwu.edu} \\
  \And
  Maani Ghaffari \\
  Department of Naval Architecture \\
  \& Marine Engineering and \\
  Department of Robotics \\
  University of Michigan \\
  \texttt{maanigj@umich.edu} \\
}

\makeatletter
\renewcommand{\@noticestring}{}
\renewcommand{\@notice}{}
\makeatother

\begin{document}
\maketitle

\begin{abstract}
A central obstacle in nonlinear Bayesian filtering is representing the belief distribution. Moment-based filters address this by propagating polynomial moments and reconstructing a density from them. Recent work completes the predict-update loop via the maximum-entropy (MaxEnt) principle, but each step requires the partition function and its gradient, both $n$-dimensional integrals whose cost scales exponentially, restricting the demonstrated MaxEnt moment filtering to $n \le 4$. We avoid the partition function entirely by combining score matching with Stein's identity. In our setting, score matching reduces the density fit to a single linear solve whose coefficients are assembled directly from the propagated moments. The same parameters then drive Stein's identity to close the moment hierarchy during prediction and to recover posterior moments after each Bayesian update, keeping the full predict-update loop free of partition function evaluation. The resulting \textbf{Score Kalman Filter} (SKF) reduces to the classical information-form Kalman filter as a special case and performs every step through linear algebra. On nonlinear coupled-oscillator networks, the SKF runs through $n=20$ and reports lower RMSE than the EKF, UKF, EnKF, and particle-filter baselines on the tested synthetic benchmarks.
\end{abstract}

\section{Introduction}

Bayesian filtering, the recursive estimation of the state of a dynamical system
from noisy observations, is fundamental to robotics, control, and
signal processing.
The Kalman filter~\cite{kalman1960} solves this problem optimally for
linear-Gaussian systems, but many real-world dynamical systems are nonlinear and produce
non-Gaussian beliefs: skewed, curved distributions from nonlinear dynamics, and multimodal distributions from systems with multiple stable equilibria.
Gaussian filters (the extended Kalman filter
(EKF~\cite{schmidt1966,jazwinski1970}), unscented Kalman filter
(UKF~\cite{julier1997,wan2000}), ensemble Kalman filter
(EnKF~\cite{evensen1994,burgers1998}), and their Lie-group variants
\cite{barrau2017,brossard2020}) are not designed to represent these
density shapes directly.
Particle filters can, but their convergence rate in the number of
particles makes high-accuracy filtering expensive.

A promising middle ground is \emph{moment-based filtering}, which
propagates the moments of the belief density via the system's generator
(Dynkin's formula) and reconstructs the density from those moments.
Teng et al.~\cite{teng2025} recently showed that polynomial MaxEnt
distributions $p(x;\lambda) \propto \exp(-\lambda \cdot \phi(x))$
give a flexible, multimodal representation, demonstrated on SE(2)
localization. Here $\phi(x)$ collects monomials of total degree at most
$r$ and $\lambda \in \R^M$ with $M = \binom{n+r}{n}$ is the
corresponding parameter vector.
The main difficulty in this approach is therefore its computational cost. Fitting $\lambda$ to predicted
moments and recovering posterior moments after a Bayesian update require both
the \emph{partition function}
$Z(\lambda) = \int \exp(-\lambda\cdot\phi(x))\,dx$ and its gradients,
$n$-dimensional quadratures whose cost scales as $O(G^n)$ and becomes
intractable as $n$ grows.

\textbf{Our key observation.}
Score matching~\cite{hyvarinen2005} is a general method for estimating
parameters of non-normalized statistical models by minimizing the
Fisher divergence between the model score and the data score, without
evaluating the normalization constant.
We apply this idea to the polynomial exponential family and show that
the estimator reduces to a \emph{single linear system} on the non-constant
coefficients, with entries assembled directly from the propagated moments
(Proposition~\ref{prop:sm}).
The resulting fit does not require partition function evaluation,
iterative optimization, or numerical integration.
The density-fit cost is $O(M^3)$, requiring only linear algebra.

Two additional ingredients are needed to turn score matching into a
complete filtering algorithm.
First, for nonlinear systems whose moment hierarchy does not close~\cite{kuehn2016}, a \emph{moment closure} is required.
Classical closures either assume Gaussianity~\cite{whittle1957}
or require the partition function~\cite{levermore1996,singh2011}.
We instead exploit the \emph{Stein identity}~\cite{kume2026},
which provides algebraic relations between moments of different orders,
parameterized by the score matching parameters $\lambda$
(Proposition~\ref{prop:stein}).
Second, after a Bayesian measurement update, the posterior moments must
be recovered from the updated parameters $\lambda^+$.
For polynomial measurement models, the update is conjugate:
$\lambda^+ = \lambda^- + \lambda_{\mathrm{lik}}$,
where $\lambda_{\mathrm{lik}}$ is the score parameter induced by the
likelihood and $-$ and $+$ denote pre- and post-measurement parameters
(analogous to the information-form Kalman update). The same
Stein identity yields a linear system for the posterior moments without
evaluating $Z$.
The resulting algorithm, the \textbf{Score Kalman Filter} (SKF),
performs every step of the predict-update loop via linear algebra.
In Section~\ref{sec:stein}, we study extensively how this method extends to higher-dimensional
systems.

\textbf{Contributions.}
(1)~\textbf{Score matching reconstruction (Section~\ref{sec:sm}).}
We show that score matching for the polynomial exponential family
reduces the density fit to a single linear system on the propagated
moments, removing the need for partition-function evaluation, iterative
optimization, or numerical quadrature.
(2)~\textbf{Stein closure and posterior recovery
(Sections~\ref{sec:stein}--\ref{sec:recovery}).}
We use Stein's identity to obtain algebraic moment relations
parameterized by the score matching coefficients $\lambda$, closing the
moment hierarchy during prediction and recovering posterior moments
after each Bayesian update, both as linear solves.
(3)~\textbf{Score Kalman Filter and experiments
(Section~\ref{sec:experiments}).}
We introduce the Score Kalman Filter (SKF) and evaluate it on coupled
oscillator networks. On this structured benchmark, the SKF scales through $n{=}20$ with mean RMSE
below the reported EKF/UKF/EnKF/PF baselines at $n{=}4$--$20$
via the active Stein closure,
extending beyond the
$n \le 4$ regime demonstrated for partition-function-based MaxEnt
moment filtering~\cite{teng2025}.
(4)~\textbf{Recovery of the information-form Kalman filter
(Appendix~\ref{app:gaussian}).} We verify that at $r{=}2$, the SKF
reduces to the information-form Kalman filter exactly.

\section{Background}
\label{sec:background}

\subsection{Stochastic systems and moment propagation}

Consider a continuous-time stochastic system with state $x \in \R^n$
evolving via the It\^o stochastic differential equation (SDE):
\begin{equation}\label{eq:sde}
    dx = X(x)\,dt + h(x)\,dW,
\end{equation}
where $X : \R^n \to \R^n$ is the drift vector field,
$h : \R^n \to \R^{n \times n_w}$ is the diffusion matrix,
and $W$ is a standard $\R^{n_w}$-valued Wiener process.
The density $p(t,x)$ of the state satisfies the Fokker--Planck equation
$\partial_t p = -\nabla \cdot (Xp) + \sum_{ij}\partial_i\partial_j(H_{ij}\, p)$,
where $H = \frac{1}{2}hh^\top$ is the diffusion tensor.
For constant $H$, the second term reduces to $\Tr(H\nabla^2 p)$.

\textbf{Infinitesimal generator.}
Rather than solving the Fokker--Planck PDE for $p$ directly,
we work with the \emph{stochastic Koopman semigroup}
$\{\mathcal{K}_t\}_{t \ge 0}$, a family of linear operators
acting on observable functions $f \colon \R^n \to \R$ by
$\mathcal{K}_t f(x) = \E[f(X_t) \mid X_0 = x]$.
The \emph{infinitesimal generator}
$\mathcal{A} \colon C^2(\R^n) \to C(\R^n)$ is the derivative
of this semigroup at $t = 0$. For $f \in C^2(\R^n)$,
\begin{equation}\label{eq:generator}
    \mathcal{A}f(x) := \lim_{t \searrow 0}
    \frac{\mathcal{K}_t f(x) - f(x)}{t}
    = \nabla f(x) \cdot X(x) + \Tr\!\bigl(H(x)\,\nabla^2 f(x)\bigr),
\end{equation}
where the second equality follows from It\^o's lemma.
The first term captures the deterministic drift and the second
captures the stochastic diffusion.
The generator gives rise to Dynkin's formula,
which describes how expectations of $f$ evolve over time,
\begin{equation}\label{eq:dynkin}
    \frac{d}{dt}\E[f(X_t)] = \E[\mathcal{A}f(X_t)].
\end{equation}

\textbf{Moment propagation.}
Applying Dynkin's formula to the monomial test functions
$\phi_\alpha(x) = x^\alpha$ (multi-index $\alpha \in \N^n$,
$|\alpha| = \sum_i \alpha_i$, and $e_i \in \N^n$ denotes the $i$-th standard basis multi-index) gives a system of ODEs for the moments
$m_\alpha(t) := \E[X_t^\alpha]$:
\begin{equation}\label{eq:moment-ode}
    \frac{d}{dt} m_\alpha(t) = \E\bigl[\mathcal{A}\phi_\alpha(X_t)\bigr].
\end{equation}
We now restrict to \emph{polynomial systems}, where each component
of the drift $X(x)$ is a polynomial in $x$ of total degree at most $d_X$,
and each entry of the diffusion matrix $h(x)$ is a polynomial of total
degree at most $d_h$.
Under this assumption, the generator~\eqref{eq:generator}
applied to a monomial $\phi_\alpha$ of degree $|\alpha|$ produces
two polynomial contributions.
The drift term $\nabla\phi_\alpha \cdot X$ has degree
$|\alpha| - 1 + d_X$,
and the diffusion term $\Tr(H\nabla^2\phi_\alpha)$
has degree $|\alpha| - 2 + 2d_h$, since
$H = \frac{1}{2}hh^\top$ has entries of degree $2d_h$.
Taking expectations, the right-hand side of~\eqref{eq:moment-ode}
involves moments up to degree $|\alpha| + \bar{d}$, where
\begin{equation}\label{eq:dbar}
    \bar{d} = \max(d_X - 1,\; 2d_h - 2)
\end{equation}
is the \emph{excess degree}, measuring how many additional moment degrees
are introduced by the drift nonlinearity and the state-dependence of the noise.
If $\bar{d} = 0$, as in systems with linear drift and additive or
linear multiplicative noise, the moment ODE closes at any truncation order.
If $\bar{d} \ge 1$, as in systems with quadratic drift
or quadratic multiplicative noise, propagating degree-$K$ moments
requires degree-$(K{+}\bar{d})$ moments, producing an infinite, hierarchically
coupled system.

This is the classical \emph{moment closure problem}~\cite{kuehn2016},
arising in SDEs~\cite{whittle1957,gillespie2009},
kinetic theory~\cite{grad1949,levermore1996},
and network dynamics~\cite{kirkwood1935,keeling1999,kiss2017}.
Classical closures either assume Gaussianity~\cite{whittle1957} or require
$Z$-evaluation~\cite{levermore1996,singh2011}.
Our Stein closure (Section~\ref{sec:stein})
exploits the exponential family structure to derive algebraic
moment relations from $\lambda$, without Gaussianity, $Z$, or independence.

\subsection{Maximum-entropy density reconstruction and computational cost}

A moment-based filter must also \emph{reconstruct} a density from
the propagated moments.
The maximum-entropy (MaxEnt) principle~\cite{mead1984}
solves this by maximizing Shannon entropy subject to moment constraints,
yielding the \emph{polynomial exponential family}:
\begin{equation}\label{eq:med}
    p^*(x;\lambda)
    = \frac{1}{Z(\lambda)}\exp \bigl(-\lambda \cdot \phi(x)\bigr),
    \quad
    Z(\lambda) = \int \exp(-\lambda \cdot \phi(x))\,dx,
\end{equation}
where $\phi(x) = (x^\alpha)_{|\alpha| \le r}$ and
$\lambda \in \R^M$, $M = \binom{n+r}{n}$.
Let $\Lambda=\{\lambda:Z(\lambda)<\infty\}$ denote the natural
parameter domain.
This family includes Gaussians ($r{=}2$) and captures
multimodality at $r \ge 4$~\cite{teng2025}.
The \textbf{computational cost} is dominated by evaluating $Z(\lambda)$
and its gradient at every optimization step,
an $n$-dimensional integral with cost $O(G^n)$.
This exponential scaling in $n$ is what the score matching formulation
in Section~\ref{sec:sm} replaces with a single linear solve.

\subsection{Score matching for unnormalized models}

Let $p_{\mathrm{data}}$ denote the true (unknown) density of the state
at a given time, whose moments $m_\alpha = \E_{p_{\mathrm{data}}}[x^\alpha]$
have been propagated by the moment ODE~\eqref{eq:moment-ode}.
We wish to fit the polynomial exponential-family model in~\eqref{eq:med}
to $p_{\mathrm{data}}$,
without the intractable computation of the normalization $Z(\lambda)$.

Hyv\"arinen~\cite{hyvarinen2005} observed that the \emph{score function}
$s(x;\lambda) = \nabla_x \log p(x;\lambda)\in \R^n$ does not depend on $Z$.
For the polynomial exponential-family model in~\eqref{eq:med},
$s(x;\lambda) = -\nabla_x(\lambda \cdot \phi) = -J_\phi(x)\lambda$,
where $[J_\phi]_{i\alpha} = \partial_i \phi_\alpha$, so $J_\phi(x) \in \R^{n \times M}$.
The Fisher divergence between $p_{\mathrm{data}}$ and the model,
$\frac{1}{2}\E_{p_{\mathrm{data}}}[\|s_\lambda - s_{\mathrm{data}}\|^2]$
where $s_{\mathrm{data}} = \nabla_x \log p_{\mathrm{data}}$,
can be rewritten via integration by parts as
\begin{equation}\label{eq:sm-objective}
    J_{\mathrm{SM}}(\lambda) = \E_{p_{\mathrm{data}}}\!\left[
    \sum_i \partial_i s_i(x;\lambda) + \frac{1}{2} s_i(x;\lambda)^2
    \right] + \text{const},
\end{equation}
which depends only on the model score and its divergence,
not on $Z$ or $s_{\mathrm{data}}$.
The expectation in~\eqref{eq:sm-objective} is over $p_{\mathrm{data}}$
and reduces to polynomial moments of $p_{\mathrm{data}}$ when
$s(x;\lambda)$ is polynomial, as we show in Section~\ref{sec:sm}.

\section{Score matching for the polynomial exponential family}
\label{sec:sm}

We now specialize the score matching objective~\eqref{eq:sm-objective}
to the polynomial exponential family~\eqref{eq:med}.
Substituting the polynomial score $s(x;\lambda) = -J_\phi(x)\lambda$,
the expectation over $p_{\mathrm{data}}$ reduces to polynomial moments,
and the objective becomes a quadratic function of $\lambda$,
$J_{\mathrm{SM}}(\lambda) = \frac{1}{2}\lambda^\top A\lambda - b^\top\lambda + \text{const}$,
where $A$ and $b$ are defined below.
Additive constants do not affect the score, so we fix
the zero-multi-index coefficient $\lambda_{\mathbf{0}}$ to $0$
and write $A$ and $b$ for the system after the
constant row and column are removed.

\begin{proposition}[Score matching as a linear system]\label{prop:sm}
For the polynomial MED~\eqref{eq:med}, $A$ and $b$ in the right-hand side of the score matching objective have the form
\begin{equation}\label{eq:Ab}
    A_{\alpha\beta} = \sum_{i=1}^n \alpha_i\,\beta_i\; m_{\alpha+\beta-2e_i},
    \qquad
    b_\alpha = \sum_{i=1}^n \alpha_i(\alpha_i{-}1)\; m_{\alpha-2e_i},
\end{equation}
for $\alpha, \beta \in \N^n$ with $|\alpha|, |\beta| \le r$.
The constant row and column are then omitted according to the convention
above. If $A$ is nonsingular, the score matching estimator is
\begin{equation}\label{eq:lambda-star}
    \lambda^* = A^{-1}b,
    \qquad
    \lambda^*_{\mathbf{0}}=0 .
\end{equation}
Every entry of $A$ and $b$ is a polynomial moment.
$A$ requires moments up to degree $2r{-}2$, and $b$ requires moments up to $r{-}2$.
\end{proposition}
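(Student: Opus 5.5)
The proof is a direct substitution of the polynomial score into the Hyv\"arinen objective~\eqref{eq:sm-objective}, followed by reading off the moments that appear. With $s(x;\lambda) = -J_\phi(x)\lambda$, the $i$-th score component is $s_i = -\sum_\alpha \lambda_\alpha \partial_i x^\alpha = -\sum_\alpha \lambda_\alpha \alpha_i x^{\alpha-e_i}$, where any term with $\alpha_i = 0$ vanishes. I treat the two pieces of the integrand separately and take expectations under $p_{\mathrm{data}}$ term by term. This is legitimate because every term is a monomial whose expectation is a propagated moment $m_\gamma$.

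\textbf{Quadratic term.} We have $\tfrac12\sum_i s_i^2 = \tfrac12\sum_{\alpha,\beta}\lambda_\alpha\lambda_\beta\sum_i \alpha_i\beta_i x^{\alpha+\beta-2e_i}$. Whenever $\alpha_i\beta_i \neq 0$, both $\alpha_i \ge 1$ and $\beta_i \ge 1$, so $\alpha+\beta-2e_i$ is a valid multi-index. Taking expectations gives $\tfrac12\lambda^\top A\lambda$ with $A_{\alpha\beta} = \sum_i \alpha_i\beta_i m_{\alpha+\beta-2e_i}$, which is symmetric.

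\textbf{Divergence term.} We have $\partial_i s_i = -\sum_\alpha \lambda_\alpha \alpha_i(\alpha_i-1) x^{\alpha-2e_i}$, and the coefficient vanishes unless $\alpha_i \ge 2$. Its expectation is therefore $-b^\top\lambda$ with $b_\alpha = \sum_i \alpha_i(\alpha_i-1)m_{\alpha-2e_i}$. This matches the claimed sign convention $J_{\mathrm{SM}} = \tfrac12\lambda^\top A\lambda - b^\top\lambda + \text{const}$.

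\textbf{The constant coefficient.} For $\alpha = \mathbf{0}$, the row and column of $A$ and the entry $b_{\mathbf{0}}$ are identically zero, since every term carries a factor $\alpha_i$. So $\lambda_{\mathbf{0}}$ does not enter $J_{\mathrm{SM}}$, and we may fix it to $0$ and delete it. On the remaining coordinates, $J_{\mathrm{SM}}$ is a quadratic with gradient $A\lambda - b$. If $A$ is nonsingular, the unique stationary point is $\lambda^* = A^{-1}b$.

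To conclude that this is the minimizer, I would note that $A$ is positive semidefinite: $\lambda^\top A\lambda = \E\|J_\phi\lambda\|^2 \ge 0$. A nonsingular PSD matrix is positive definite, so the objective is strictly convex and $\lambda^*$ is its unique minimizer.

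\textbf{Degree count.} The largest index in $A$ is $|\alpha+\beta-2e_i| \le 2r-2$, and the largest index in $b$ is $|\alpha|-2 \le r-2$.

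\textbf{Main obstacle.} The only genuinely delicate step is the one inherited from~\eqref{eq:sm-objective}: the integration by parts requires the boundary terms $p_{\mathrm{data}}\, s_i$ to vanish at infinity, and all moments up to degree $2r-2$ to be finite. I would state these as standing assumptions, namely sufficiently fast decay of $p_{\mathrm{data}}$, consistent with Hyv\"arinen's conditions, rather than prove them. Everything else is bookkeeping of multi-indices.
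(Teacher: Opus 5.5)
Your proposal is correct and follows essentially the same route as the paper's proof: substitute $s=-J_\phi\lambda$ into the Hyv\"arinen objective, identify $A=\E[J_\phi^\top J_\phi]$ and $b=\E[\Delta\phi]$ entrywise, drop the identically zero constant row and column, and solve the normal equations $A\lambda=b$. The only small difference is how positive definiteness is obtained: you get it from the stated nonsingularity hypothesis together with $A\succeq 0$, while the paper additionally proves $A\succ 0$ by the polynomial-gradient argument whenever $p_{\mathrm{data}}$ is not supported on a lower-dimensional set.
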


\begin{proof}[Proof sketch]
$A = \E[J_\phi^\top J_\phi]$ and $b_\alpha = \E[\Delta\phi_\alpha]$
(Laplacian of the monomial $\phi_\alpha$).
Both reduce to moment lookups via $\partial_i(x^\alpha) = \alpha_i x^{\alpha-e_i}$ with the convention that $\alpha_i x^{\alpha - e_i} = 0$ when $\alpha_i = 0$.
The constant row and column vanish because $\nabla \phi_{\mathbf{0}} = 0$.
Under this convention, positive definiteness follows from the
polynomial-gradient argument in Appendix~\ref{app:proof-equiv}.
See Appendix~\ref{app:proofs}.
\end{proof}

\textbf{Key properties.}
(i) Solving \eqref{eq:lambda-star} costs $O(M^3)$, where $M = \binom{n+r}{n}$.
(ii) No partition function $Z(\lambda)$ is evaluated.
(iii) No iterative optimization (BFGS, Newton, etc.) is needed.
(iv) \emph{Graceful degradation} (Proposition~\ref{prop:graceful},
Appendix~\ref{app:perturbation}):
if the propagated moments have relative error
$\|\delta m\|/\|m\|$, the fitted parameters shift by
\begin{equation}\label{eq:graceful}
    \frac{\|\delta\lambda\|}{\|\lambda\|}
    = O\bigl(\kappa(A)\, \|\delta m\| / \|m\|\bigr),
\end{equation}
where $\kappa(A)$ denotes the condition number of $A$,
giving continuous, proportional degradation
rather than the realizability failure of MaxEnt
(see Appendix~\ref{app:perturbation} for details).

\subsection{Score matching recovers the correct parameters for consistent moments}

A natural concern is that score matching (which minimizes the Fisher divergence)
does not explicitly enforce moment constraints, and hence
the fitted density may not match the propagated moments.
The following theorem shows that this concern vanishes when the
moments are consistent with a density in the polynomial
exponential family.

\begin{theorem}[Correct recovery on the model class]\label{thm:sm-maxent}
Let $p_0=p(\cdot;\lambda_{\mathrm{true}})$ be a member of the degree-$r$ polynomial
exponential family~\eqref{eq:med}, with
$\lambda_{\mathrm{true}} \in \operatorname{int}(\Lambda)$ under the
zero-constant convention $\lambda_{\mathbf{0}}=0$.
Form $A$ and $b$ in~\eqref{eq:Ab} from the moments of $p_0$ up to degree
$2r{-}2$. Then the score matching solution recovers the non-constant
coefficients,
\begin{equation}
    \lambda^*_\alpha = (\lambda_{\mathrm{true}})_\alpha,
    \qquad 1 \le |\alpha| \le r.
\end{equation}
Consequently $p(\cdot;\lambda^*)=p_0$.
\end{theorem}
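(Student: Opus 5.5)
The plan is to show that $\lambda_{\mathrm{true}}$ (with its constant entry dropped) satisfies the normal equations $A\lambda=b$, and then to invoke nonsingularity of $A$ to conclude uniqueness. Write $\lambda=\lambda_{\mathrm{true}}$ and $p_0(x)\propto\exp(-\lambda\cdot\phi(x))$. For each non-constant multi-index $\alpha$, Stein's identity (integration by parts) gives, for each coordinate $i$,
\begin{equation*}
\E_{p_0}\bigl[\partial_i g(x)+g(x)\,\partial_i\log p_0(x)\bigr]=0 .
\end{equation*}
I will take $g=\partial_i\phi_\alpha$ and sum over $i$. This yields
\begin{equation*}
\E_{p_0}[\Delta\phi_\alpha]=\E_{p_0}\Bigl[\sum_i\partial_i\phi_\alpha\,\partial_i(\lambda\cdot\phi)\Bigr]=\sum_\beta\lambda_\beta\,\E_{p_0}\bigl[(J_\phi^\top J_\phi)_{\alpha\beta}\bigr].
\end{equation*}
By the moment-lookup computation in Proposition~\ref{prop:sm}, the left side is $b_\alpha$ and the right side is $(A\lambda)_\alpha$. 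The constant index $\beta=\mathbf 0$ contributes nothing because $\nabla\phi_{\mathbf 0}=0$, so the reduced system is satisfied by the non-constant part of $\lambda_{\mathrm{true}}$.

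The main obstacle is justifying the integration by parts, that is, showing the boundary terms vanish and all expectations are finite. Here I would use $\lambda\in\operatorname{int}(\Lambda)$. Interior membership means a neighbourhood of $\lambda$ lies in $\Lambda$, so $Z(\lambda-\epsilon e_\gamma)<\infty$ for small $\epsilon$ and every $\gamma$. Hence $\exp(\epsilon|x^\gamma|)$-type tilts are integrable, and in particular $p_0$ has all polynomial moments; this is the standard analyticity-of-the-cumulant-function argument for exponential families. Since $g\,p_0$ is a polynomial times $p_0$ and $\partial_i(g p_0)$ is likewise polynomial times $p_0$, both lie in $L^1$. A function $F\in W^{1,1}(\R^n)$ satisfies $\int\partial_i F=0$, which I would verify by Fubini together with the fact that an integrable function with integrable derivative on $\R$ has zero limits at $\pm\infty$ along almost every line. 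This removes the boundary terms without needing pointwise decay estimates.

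Next, I need $A$ to be nonsingular so that $A^{-1}b$ is uniquely $\lambda_{\mathrm{true}}$. For $v\neq 0$ indexed by non-constant monomials,
\begin{equation*}
v^\top Av=\E_{p_0}\bigl\|\nabla(v\cdot\phi)\bigr\|^2 .
\end{equation*}
The polynomial $v\cdot\phi$ has no constant term and is nonzero, so its gradient is a nonzero polynomial and vanishes only on a null set. Since $p_0>0$ everywhere, the quadratic form is strictly positive, so $A$ is positive definite (this matches the argument referenced in Appendix~\ref{app:proof-equiv}). Therefore $\lambda^*=A^{-1}b$ equals $\lambda_{\mathrm{true}}$ on $1\le|\alpha|\le r$.

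Finally, $p(\cdot;\lambda^*)$ and $p_0$ share the same unnormalized log-density up to the constant term. After normalization they therefore coincide, which gives $p(\cdot;\lambda^*)=p_0$.
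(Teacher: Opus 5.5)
Your proof is correct, but it takes a different route from the paper's. The paper argues variationally. It writes the Fisher divergence as $D_F(p_0\|p_\lambda)=\frac12(\lambda-\lambda_{\mathrm{true}})^\top A(\lambda-\lambda_{\mathrm{true}})$ and uses positive definiteness of $A$ to make $\lambda_{\mathrm{true}}$ its unique zero. It then invokes Hyv\"arinen's identity $J_{\mathrm{SM}}=D_F+C$, under the boundary assumption (R3), to transfer that minimizer to $J_{\mathrm{SM}}$, whose minimizer is $A^{-1}b$.

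You skip the divergence entirely. Instead you check directly that the population normal equations $A\lambda_{\mathrm{true}}=b$ hold, by applying Stein's identity with the polynomial test functions $\partial_i\phi_\alpha$ and summing over $i$. This is the Stein-method-of-moments reading of score matching, which the paper only mentions in its appendices. The analytic core is shared: Hyv\"arinen's identity is itself Stein's identity with $g=(s_\lambda)_i$, a linear combination of your test functions. The positive-definiteness step is the same in both proofs.

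Your route buys two things:
\begin{itemize}
\item \emph{Brevity:} it avoids the divergence detour altogether.
\item \emph{A rigorous integration by parts:} you derive finiteness of all polynomial moments and the vanishing of boundary terms from $\lambda_{\mathrm{true}}\in\operatorname{int}(\Lambda)$, via exponential moments and the $W^{1,1}$/Fubini argument. The paper instead states (R3) as a pointwise decay assumption with only a heuristic justification.
\end{itemize}

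The paper's route buys the identification of $J_{\mathrm{SM}}$ with the Fisher divergence up to a constant. That identification is what gives $\lambda^*$ its meaning as a Fisher-divergence projection when the moments do not come from the family, as in the off-model remark. Your argument is silent on that case, but the theorem does not require it.
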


\begin{proof}[Proof sketch]
Write $D_F(p_0 \| p_\lambda) = \frac{1}{2}\E_{p_0}[\|s_0 - s_\lambda\|^2]$
for the Fisher divergence, where $s_0 = -J_\phi \lambda_{\mathrm{true}}$
and $s_\lambda = -J_\phi \lambda$ are the polynomial scores.
Then $D_F = \frac{1}{2}(\lambda{-}\lambda_{\mathrm{true}})^\top A_0 (\lambda{-}\lambda_{\mathrm{true}})$,
where $A_0 = \E_{p_0}[J_\phi^\top J_\phi]$.
Since $A_0$ equals $A$ (both are computed from the moments of $p_0$),
and $A$ is positive definite under the convention above,
$D_F = 0$ if and only if $\lambda_\alpha = (\lambda_{\mathrm{true}})_\alpha$
for all $|\alpha| \ge 1$.

By Hyv\"arinen's identity~\cite{hyvarinen2005},
$J_{\mathrm{SM}}(\lambda) = D_F(p_0 \| p_\lambda) + C$
where $C$ is independent of $\lambda$
(this uses the integration-by-parts formula and the
assumption that boundary terms vanish, which holds because
$\lambda_{\mathrm{true}} \in \operatorname{int}(\Lambda)$ ensures
$p_0$ decays sufficiently fast).
Since $D_F$ and $J_{\mathrm{SM}}$ share the same minimizer,
and $J_{\mathrm{SM}}(\lambda) = \frac{1}{2}\lambda^\top A\lambda - b^\top\lambda$
has a unique minimizer under the convention above,
we conclude $\lambda^* = \lambda_{\mathrm{true}}$ on $|\alpha| \ge 1$.
The full proof with regularity conditions is in
Appendix~\ref{app:proof-equiv}.
\end{proof}

\begin{remark}[Connection to MaxEnt]
By a classical result in information theory, the member of the
exponential family $\mathcal{P}_r$ with moments $m$ is the unique
density that maximizes the Shannon entropy among all densities
sharing those moments.
Combined with Theorem~\ref{thm:sm-maxent}, this implies that
\emph{when the moments are consistent with the family},
score matching and MaxEnt produce the same density,
even though the two methods optimize different objectives
(Fisher divergence vs.\ KL divergence).
When the true density lies outside $\mathcal{P}_r$,
the two methods generally give different parameters.
In our filtering pipeline, the moments are propagated by Dynkin's formula
(exact for polynomial systems with appropriate $\bar{d}$),
so the conditions of Theorem~\ref{thm:sm-maxent} are approximately
satisfied, explaining the close moment agreement observed empirically
(Section~\ref{sec:experiments}).
\end{remark}

\section{Stein closure for the moment hierarchy}
\label{sec:stein}

\subsection{The closure problem}

As discussed in Section~\ref{sec:background}, propagating the moments
of degree $|\alpha| \le K$ via~\eqref{eq:moment-ode} requires moments
of degree $K + \bar{d}$ on the right-hand side. The case $\bar{d} = 0$,
which covers linear drift with at most linearly multiplicative noise,
is the easy one because the hierarchy closes at any chosen truncation
$K$ and no auxiliary information is needed. The cases of practical
interest are $\bar{d} \ge 1$, where the moment ODE for a degree-$K$
state vector depends on moments at degree $K + 1$ or higher, and some
rule must supply those higher moments. Classical closures fill this
gap by imposing distributional assumptions on the unresolved moments,
typically Gaussianity~\cite{whittle1957} or independence of pairs and
triples~\cite{kirkwood1935,keeling1999}. We instead derive the higher
moments from the score parameters $\lambda$ that the score matching
fit has already produced, using the Stein identity below.

\subsection{Score-based closure via Stein's identity}

\begin{proposition}[Stein's identity for the polynomial MED]\label{prop:stein}
If $p(x;\lambda)$ has score $s = -J_\phi\lambda$ and
the displayed moments are finite and all boundary terms vanish, then for any
multi-index $\beta$ and coordinate $i$,
\begin{equation}\label{eq:stein}
    \sum_{|\alpha| \le r} \lambda_\alpha\, \alpha_i\; m_{\alpha+\beta-e_i}
    = \beta_i\; m_{\beta-e_i}.
\end{equation}
Terms with negative multi-indices are interpreted as zero.
\end{proposition}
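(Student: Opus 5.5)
The plan is to apply the standard Stein identity $\E_p[\partial_i f + f\, s_i] = 0$ with the monomial test function $f(x) = x^\beta$ and the $i$-th score component, and then expand everything into moments.

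First, for a fixed coordinate $i$, integration by parts gives
\begin{equation*}
\int \partial_i\bigl(x^\beta p(x;\lambda)\bigr)\,dx = 0.
\end{equation*}
This holds under the hypothesis that boundary terms vanish: integrate in $x_i$ first, so that the inner integral is the difference of the limits of $x^\beta p$ as $x_i\to\pm\infty$, which is zero, and then use Fubini, which is justified by the assumed finiteness of the displayed moments. Expanding the derivative with the product rule and using $\partial_i p = s_i\, p$ gives
\begin{equation*}
\E_p[\partial_i x^\beta] + \E_p[x^\beta s_i(x;\lambda)] = 0.
\end{equation*}
Only the score enters here, never $Z(\lambda)$, because $\partial_i \log p = s_i$ does not involve the normalization.

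Second, I substitute the polynomial forms. We have $\partial_i x^\beta = \beta_i x^{\beta-e_i}$, with the convention that this term is zero when $\beta_i=0$; this convention is exactly the instruction that terms with negative multi-indices are read as zero. The score component is
\begin{equation*}
s_i = -[J_\phi\lambda]_i = -\sum_{|\alpha|\le r}\lambda_\alpha\,\alpha_i x^{\alpha-e_i}.
\end{equation*}
Hence
\begin{equation*}
x^\beta s_i = -\sum_{|\alpha|\le r}\lambda_\alpha\,\alpha_i\, x^{\alpha+\beta-e_i}.
\end{equation*}
Whenever $\alpha_i\ge1$, the exponent $\alpha+\beta-e_i$ is a genuine multi-index, and when $\alpha_i=0$ the factor $\alpha_i$ kills the term. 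Taking expectations termwise, which is valid because this is a finite sum of finite moments, turns the identity into
\begin{equation*}
\beta_i m_{\beta-e_i} - \sum_{|\alpha|\le r} \lambda_\alpha\,\alpha_i\, m_{\alpha+\beta-e_i} = 0.
\end{equation*}
Rearranging gives \eqref{eq:stein}. The constant coefficient $\lambda_{\mathbf 0}$ drops out automatically since $\alpha_i=0$ for it, so the zero-constant convention plays no role.

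The only delicate point is the justification of the integration by parts: exchanging the order of integration and ensuring that $x^\beta p$ is integrable with a derivative whose limits vanish at infinity. Since the statement takes these as hypotheses, I will simply make explicit where each one is used. I will also remark that for $\lambda\in\operatorname{int}(\Lambda)$ they hold automatically, because $p$ then decays like $\exp(-c\|x\|^r)$, which dominates every polynomial. This is the same decay argument already invoked in the proof sketch of Theorem~\ref{thm:sm-maxent}.
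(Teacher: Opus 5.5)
Your proposal is correct and matches the paper's proof. Both apply $\E[s_i f + \partial_i f]=0$ with $f(x)=x^\beta$, substitute $s_i=-\sum_{|\alpha|\le r}\lambda_\alpha\alpha_i x^{\alpha-e_i}$, and rearrange; your iterated-integration and Fubini justification of the vanishing boundary term simply spells out the paper's one-line step $\int\partial_i(pf)\,dx=0$ more carefully.
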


\begin{proof}[Proof sketch]
Apply Stein's identity $\E[s_i f + \partial_i f] = 0$
with $f(x) = x^\beta$ and substitute the polynomial score.
The full derivation, including the boundary-term conditions,
is given in Appendix~\ref{app:proofs}.
\end{proof}

Equation~\eqref{eq:stein} is a \textbf{linear relation} between moments,
parameterized by $\lambda$.
To close the hierarchy, we target an unclosed moment $m_\gamma$
with $|\gamma| = K + 1$ by choosing $\alpha_0$ with $|\alpha_0| = r$
and setting $\beta = \gamma - \alpha_0 + e_i$, which gives an equation expressing
$m_\gamma$ in terms of moments of degree $\le |\gamma|$ and $\lambda$.
Specifically, since $|\beta| = K - r + 2$, we have $|\alpha + \beta - e_i| \le K + 1$ and the left-hand side involves moments up to degree $|\gamma| = K + 1$, while the right-hand side depends only on known moments as $|\beta - e_i| = K - r + 1 \le K$.
Collecting all such equations yields a linear system
$\Lambda_1(\lambda)\, m^{(K+1)} = c_1(\lambda, m^{(\le K)})$
for the degree-$(K{+}1)$ unknowns.
For $\bar{d} \ge 2$, the procedure is applied sequentially, with
layer $j$ resolving degree $K{+}j$ from degree $\le K{+}j{-}1$.

The coefficient matrix $\Lambda_1(\lambda)$ has polynomial entries in $\lambda$,
so its rank-deficient locus is an algebraic variety -- either all of $\R^M$
or a proper algebraic subset of Lebesgue measure zero.
For $n{=}2$ (any $r \ge 2$), the first-layer system is square and generically
nonsingular (Appendix~\ref{app:stein-wellposed}). At $r=2$, positive
precision is a simple sufficient condition.
For general $n$ and $r \ge 3$, the simple $n{=}2$ picture gives way
to an explicit counting problem.
Appendix~\ref{app:stein-wellposed} carries out this count.
At $r{=}3$, the augmented full-basis system is overdetermined through
$n{=}15$ and becomes underdetermined at $n{=}16$.
At $r{=}4$, the same crossover moves to the gap between $n{=}35$ and
$n{=}36$.
Our full-basis experiments all fall on the overdetermined side of this
count, and $\Lambda_1$ had full column rank at every time step in those tests
($n \le 10$, $r \le 8$).
For structured systems, we may instead close only the unclosed moments
that appear in the moment ODE. This is the active Stein closure used
for the coupled oscillator scaling experiment through $n{=}20$.
The resulting linear system is overdetermined but typically inconsistent,
and we take its least-squares solution.

\begin{remark}[Factorization caching]\label{rem:factor-cache}
The coefficient matrix $\Lambda_1(\lambda)$ depends only on the score
parameters $\lambda$, not on the moments $m$.
During prediction, $\lambda$ is fixed while $m(t)$ evolves through
$N_{\mathrm{sub}}$ ODE substeps.
Therefore $\Lambda_1$ need be factorized only \emph{once} per prediction
window.
Each substep rebuilds only the right-hand side
$c(\lambda,\, m^{(\le K)})$ and solves via back-substitution,
reducing the closure cost from
$O(N_{\mathrm{sub}}\, N_{\mathrm{eq}}\, N_{\mathrm{unk}}^2)$ to
$O(N_{\mathrm{eq}}\, N_{\mathrm{unk}}^2 + N_{\mathrm{sub}}\, N_{\mathrm{unk}}^2)$.
\end{remark}

\textbf{Self-consistency loop.} At each time step,
$\lambda(t)$ from the previous score matching fit provides the
coefficients in the Stein closure.
After propagation, the new moments yield the next score matching fit,
which is used for the following closure.

\begin{remark}[Moment budget versus identifiability]\label{rem:moment-budget}
Although a member of $\mathcal{P}_r$ is uniquely determined, after fixing
the constant parameter, by its moments up to degree $r$ via the
bijectivity of the moment map (Appendix~\ref{app:expfam}),
the score matching estimator in Proposition~\ref{prop:sm} requires moments up to
degree $2r{-}2$. The gap reflects a distinction between information-theoretic
identifiability and algorithmic recoverability, since the quadratic structure of
$J_{\mathrm{SM}}(\lambda) = \frac{1}{2}\lambda^\top A \lambda - b^\top \lambda$
makes the Gram matrix $A$ depend on moments of degree up to $2r{-}2$. Therefore, the
resulting Stein closure should be interpreted as a \emph{model-based}
reconstruction of higher-order moments (given lower-order moments plus the
polynomial-MED assumption), rather than a universal lower-to-higher moment
generation procedure.
\end{remark}

\subsection{Stein-augmented score matching}

The Stein identity at $|\beta| = \ell$ with $\ell < r$ reproduces the
score matching equations (up to scaling). The first non-redundant constraint
is at $\ell = r$, introducing $m_{2r-1}$.
This augmentation tightens the moment consistency of the fitted $\lambda$
beyond what score matching alone provides.
The connection between the generalized score matching and the generalized method of moments is formalized
by Kume and Walker~\cite{kume2026}.

\section{Measurement update and moment recovery}
\label{sec:update}

\subsection{Score-based Bayesian update}

Let $z \in \R^{n_z}$ be a measurement of the state, given by a function of the state and noise through the likelihood $p(z \mid x)$.
At measurement time, the posterior score adds the likelihood score,
\begin{equation}\label{eq:score-update}
    s_{\mathrm{post}}(x) = s_{\mathrm{prior}}(x) + \nabla_x \log p(z|x).
\end{equation}
Equation~\eqref{eq:score-update} is Bayes' rule in score form:
$\log p(x|z)=\log p(x)+\log p(z|x)-\log p(z)$, and taking
$\nabla_x$ removes the evidence term because $p(z)$ is independent of $x$.
If the likelihood $p(z|x)$ has a polynomial measurement model $g(x)$
of degree $d_g$ with Gaussian noise, the likelihood score
$\nabla_x \log p(z|x) = J_g^\top R^{-1}(z - g(x))$
is polynomial of degree $\le 2d_g - 1$.
When $2d_g - 1 \le r - 1$ (i.e., $d_g \le r/2$), the posterior
score fits within the basis, giving
\begin{equation}\label{eq:conjugate}
    \lambda^+ = \lambda^- + \lambda_{\mathrm{lik}}(z).
\end{equation}
Equation~\eqref{eq:conjugate} is the usual exponential-family
conjugate update written in natural parameters. The update changes
$\lambda$ by addition and does not require evaluating the evidence or
the partition function.

\subsection{Moment recovery from posterior parameters}
\label{sec:recovery}

After the update \eqref{eq:conjugate}, $\lambda^+$ is known but $m^+$ is not.
Although $m^+=\mu(\lambda^+)$ is identifiable on the non-constant minimal
parameterization (Appendix~\ref{app:expfam}), evaluating $\mu$ would again
require $Z$. We instead impose the componentwise Stein identity at
$\lambda=\lambda^+$ and $m_0=1$, using directional rows $(\beta,i)$ with
$|\beta|\ge 1$, $\beta_i\ge 1$,
\begin{equation}\label{eq:stein-recovery}
    \sum_{|\alpha| \le r} \lambda^+_\alpha\, \alpha_i\;
    m^+_{\alpha+\beta-e_i}
    = \beta_i\; m^+_{\beta-e_i}.
\end{equation}
Rows with $|\beta|=\ell$ couple moments up to degree $r+\ell-1$.
Within the tracked budget $K=2r-2$, the exact directional rows are
underdetermined (e.g., $R=12$, $U=27$ for $n=2,r=4$), so we include
higher-$|\beta|$ rows and truncate degree $>K$ terms. This gives a
least-squares projection onto the Stein consistency manifold, with omitted
near-Gaussian centered terms of order $O(\sigma^{K+2})$. Counts, optional
$\beta_i=0$ rows, and conditioning are detailed in Appendix~\ref{app:stein-wellposed}.

\textbf{Conditioning.}
Working in centered coordinates $z = x - \mu$ (where $\mu$ is the
current mean, available from the propagated moments) is essential.
The monomial basis in raw coordinates gives $\|\lambda\| \sim 10^5$,
while centering reduces this to $O(10)$--$O(100)$.
Orthogonal bases (Legendre) provide further improvement~\cite{liu2026mepoly}.
The truncation error decreases as the neglected
higher-order centered moments become smaller,
which is why centering is important for accuracy.

\textbf{Iterative consistency refinement.}
The conjugate update $\lambda^+ = \lambda^- + \lambda_{\mathrm{lik}}$
followed by truncated Stein recovery produces moments $m^+$ that may
not be perfectly consistent with $\lambda^+$.
We enforce consistency by iterating: (i) re-fit $\lambda$ via score
matching to the recovered $m^+$, (ii) re-apply the measurement
$\lambda \leftarrow \lambda_{\mathrm{refit}} + \lambda_{\mathrm{lik}}$,
(iii) re-recover moments.
A fixed point of this iteration is a $(\lambda, m)$ pair that is
mutually consistent under both the Stein identity and the
measurement update.
In experiments where refinement is enabled, a small number of
iterations suffices (the residual drops by 2--3 orders of magnitude).
The high-dimensional coupled-oscillator sweep uses one refinement step
for speed.
The construction connects score matching to Stein's method of moments
(SMoM)~\cite{kume2026,nagai2026}, with the canonical decomposition and
its convergence implications in Appendix~\ref{app:smom}.

\section{The Score Kalman Filter algorithm}
\label{sec:skf}

Algorithm~\ref{alg:skf} assembles the preceding pieces into one
predict-update loop. The filter carries both moments $m$, used for
Dynkin propagation, and score parameters $\lambda$, used for density
fitting and Stein relations. Prediction advances $m$ and refits
$\lambda$. The update adds the likelihood score, recovers posterior
moments by the same Stein system, and reports the standard posterior
mean $\hat{x}_k = \E[x_k \mid z_{1:k}]$ from the recovered first moments
for plots and RMSE.

\begin{algorithm}[t]
\caption{Score Kalman Filter (polynomial MED, degree $r$)}
\label{alg:skf}
\begin{algorithmic}[1]
\REQUIRE Initial moments $m(0)$, basis order $r$, dynamics $(X, h)$, measurement model $(g_{\mathrm{obs}}, R)$
\STATE $K \gets 2r - 2$ \hfill\COMMENT{moment budget}
\STATE $\lambda \gets A(m(0))^{-1}b(m(0))$ \hfill\COMMENT{initial score matching}
\FOR{each time step}
    \STATE \textbf{Predict.}
    \STATE \quad Propagate moments $m(t) \to m(t{+}\Delta t)$ via \eqref{eq:moment-ode},
    using Stein closure \eqref{eq:stein} with $\lambda(t)$ if $\bar{d} \ge 1$
    \STATE \quad $\lambda^- \gets A(m)^{-1}b(m)$ \hfill\COMMENT{score-matching solve}
    \STATE \textbf{Update} at measurement $z$.
    \STATE \quad $\lambda^+ \gets \lambda^- + \lambda_{\mathrm{lik}}(z)$ \hfill\COMMENT{conjugate score addition}
    \STATE \quad Recover $m^+$ from $\lambda^+$ via Stein system \eqref{eq:stein-recovery}
    \hfill\COMMENT{one least-squares solve}
    \STATE \quad $\lambda \gets A(m^+)^{-1}b(m^+)$ \hfill\COMMENT{re-fit $\lambda$}
\ENDFOR
\end{algorithmic}
\end{algorithm}

\textbf{Complexity.} The dense score fit costs $O(M^3)$ for
$M=\binom{n+r}{n}$. If $\bar d\ge 1$, Stein closure adds a solve with
$N_{\mathrm{eq}}$ rows and $N_{\mathrm{unk}}$ unclosed moments
(dense QR: $O(N_{\mathrm{eq}}N_{\mathrm{unk}}^2)$). The factorization or
sparse operator is reused over ODE substeps because $\Lambda_1$ depends
only on $\lambda$ (Remark~\ref{rem:factor-cache}). Structured generators
can restrict closure to active moments, as in the $n=12$--$20$ oscillator
runs. No partition function is evaluated, and $r=2$ recovers the
information-form Kalman filter (Appendix~\ref{app:gaussian}).

\section{Experiments}
\label{sec:experiments}

We benchmark the SKF across SE(2) rigid-body kinematics,
ecosystem dynamics, 3D incompressible fluid advection, and coupled
oscillator filtering up to $n{=}20$. Full moment panels, density
reconstructions, and filter time series for the remaining cases are in
Appendices~\ref{app:moment-accuracy},
\ref{app:additional-experiments}, and \ref{app:coupled-figures}.
Note that all runs use Python (NumPy/SciPy) on a single laptop CPU
(Intel i9-11900H, no GPU or compiled extensions), so timings are
conservative. Generator calculations, moment ODE derivations, and
centered-coordinate formulas are in
Appendices~\ref{app:generators}--\ref{app:matrices}.

\subsection{SE(2) prediction with Fourier \texorpdfstring{$\times$}{x} position moments (\texorpdfstring{$\bar{d} = 0$}{d-bar = 0})}

SE(2) rigid body kinematics in embedded coordinates $(c, s, p_x, p_y)$
with $\omega{=}1$\,rad/s, $v{=}2$\,m/s, heading noise $\sigma{=}0.3$.
Drift and diffusion are linear, so $\bar{d}=0$ and $m(T) = e^{LT}m(0)$
in the geometry-adapted Fourier $\times$ position basis
$\E[\cos(k\theta)\, p_x^a p_y^b]$, $\E[\sin(k\theta)\, p_x^a p_y^b]$.
All moments match $5{\times}10^5$-particle MC to within $0.5\%$
(Appendix~\ref{app:moment-accuracy},
Figure~\ref{fig:moment-comparison-se2}). Score matching on the
$(p_x, p_y)$ marginal captures the non-Gaussian curved shape from
$r=4$ ($M=15$, $0.4$\,ms, timing in Table~\ref{tab:comparison})
(Figure~\ref{fig:se2-density}).

\begin{figure}[t]
\centering
\includegraphics[width=\linewidth]{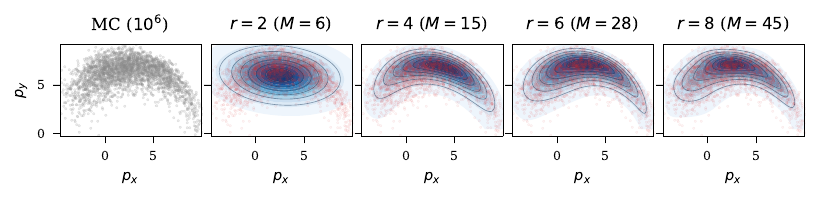}
\caption{SE(2) density reconstruction. The leftmost panel shows the
Monte Carlo reference for the $(p_x, p_y)$ marginal, and the four
panels to its right show the score matching reconstruction at basis
orders $r = 2, 4, 6, 8$. The Gaussian fit at $r = 2$ misses the
banana-shaped curvature entirely, while $r = 4$ already recovers
the bulk of the non-Gaussian structure, with $r = 6, 8$ refining
the tails.}
\label{fig:se2-density}
\end{figure}

\subsection{Stochastic Lotka-Volterra with Stein closure (\texorpdfstring{$\bar{d} = 1$}{d-bar = 1})}

Stochastic predator-prey dynamics
$dx_1 = (\alpha x_1 - \beta x_1 x_2)\,dt + \sigma_1\,dW_1$,
$dx_2 = (-\gamma x_2 + \delta x_1 x_2)\,dt + \sigma_2\,dW_2$
($\alpha{=}1$, $\beta{=}0.5$, $\gamma{=}0.8$, $\delta{=}0.3$,
$\sigma_1{=}0.3$, $\sigma_2{=}0.2$). The bilinear interaction gives
$d_X{=}2$, $\bar{d}{=}1$, the textbook moment closure
benchmark~\cite{whittle1957}. With $r{=}6$, $T{=}1.5$\,s, and centered
coordinates, Dynkin+Stein moments match $10^6$-particle MC to $0.065\%$
(variance) and $2$--$7\%$ (third-order), error growing monotonically
with degree (Figure~\ref{fig:lv-moments-compact}, full grid in
Appendix~\ref{app:moment-accuracy}).
The Dynkin+Stein run takes $0.7$\,s on the CPU used for Table~\ref{tab:comparison}.
The MC reference used for validation takes $69$\,s.
Density reconstruction in Appendix~\ref{app:additional-experiments}
(Figure~\ref{fig:lv-appendix}).

\begin{figure}[t]
\centering
\includegraphics[width=0.85\linewidth]{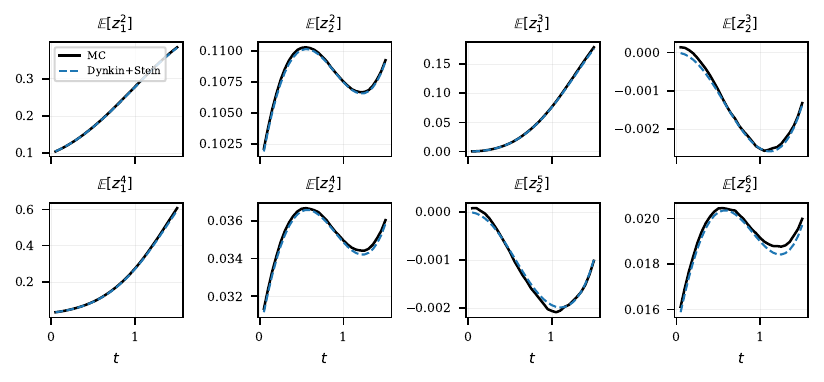}
\caption{Representative LV centered moments ($n{=}2$, $r{=}6$, $\bar{d}{=}1$):
Dynkin + Stein (blue dashed) vs.\ MC (black), degrees 2--6
(up to the score matching truncation order $r$).
Error grows monotonically with degree.}
\label{fig:lv-moments-compact}
\end{figure}

\subsection{Filtering on coupled oscillator networks}
\label{sec:exp-highdim}

We demonstrate the full predict-update SKF loop on $N$ coupled Duffing
oscillators with nearest-neighbor coupling
($\dot{q}_i = p_i$,
$\dot{p}_i = -\gamma p_i - \alpha q_i - \beta q_i^2 + \kappa(q_{i+1}{-}2q_i{+}q_{i-1}) + \sigma\,dW_i$,
$\gamma{=}0.3$, $\alpha{=}1$, $\beta{=}0.6$, $\kappa{=}0.3$, $\sigma{=}0.4$,
$n = 2N$, $d_X = 2$, $\bar{d} = 1$, centered coordinates,
Appendix~\ref{app:centering}). The single-oscillator $n{=}2$ Duffing
filter is in Appendix~\ref{app:duffing-filter}.
At $N{=}3,4,5$ ($n{=}6,8,10$, $r{=}3$) observing the odd-indexed
positions, the SKF has lower mean RMSE than EKF/UKF/EnKF/PF across
10 independent seeds in our implementation (Figure~\ref{fig:highdim},
Table~\ref{tab:comparison}).

\begin{figure}[t]
\centering
\includegraphics[width=\linewidth,height=0.4\textheight,keepaspectratio]{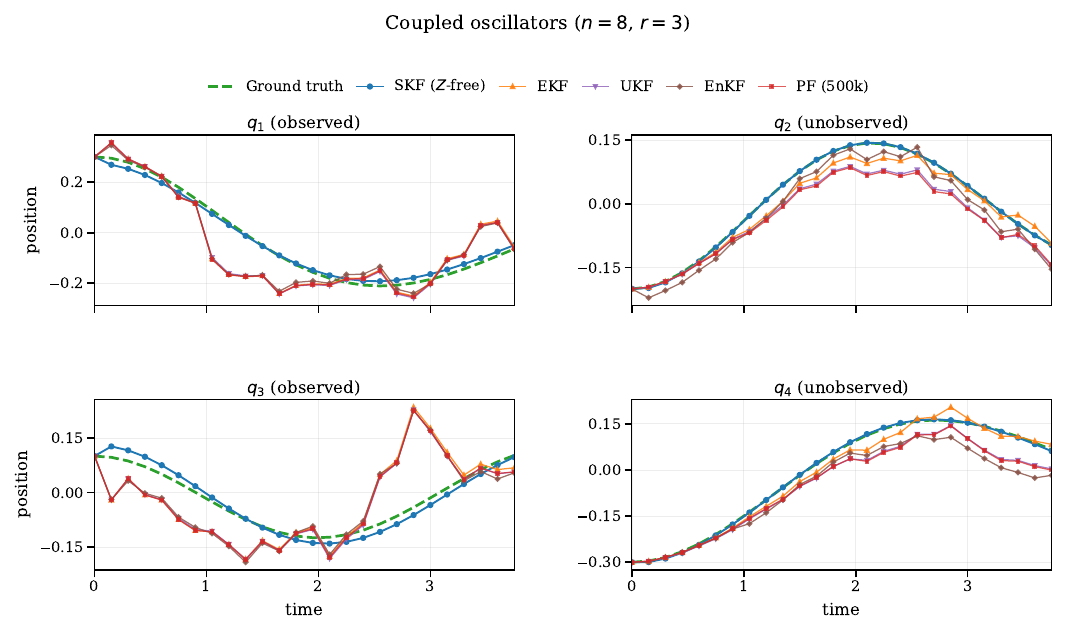}
\caption{Coupled oscillators ($n{=}8$, $r{=}3$). The SKF (blue) tracks
all four positions through 25 steps, including unobserved oscillators
($q_2$, $q_4$) inferred from coupling alone. EKF (orange), UKF
(purple), EnKF (brown), and PF (red) have larger
RMSE. Multi-seed statistics are reported in Table~\ref{tab:comparison}.}
\label{fig:highdim}
\end{figure}

We further scale to $n=12,\ldots,20$ with $r=3$ using an active Stein
closure that keeps only the degree-five moments that the moment ODE
actually requests, rather than the full augmented Stein system that
the structural count of
Appendix~\ref{app:stein-wellposed} predicts to become underdetermined
at $n=16$ (Appendix~\ref{app:coupled-figures}). Even at these
dimensions, the SKF stays more accurate than every baseline in
Table~\ref{tab:comparison}. The fact that EKF, UKF, and EnKF all
converge to a similar error band suggests that the Gaussian belief
approximation, not the choice of update law, is what is limiting them
on this benchmark. Factorization caching also keeps the SKF runtime
below the $5{\times}10^5$-particle bootstrap filter at $n=10$, where
the particle count is the only setting that gave the PF stable
tracking. All numbers in Table~\ref{tab:comparison} are averaged over
10 random seeds, with the SKF reported as mean$\,\pm\,$std.

\section{Related work}

\textbf{Gaussian filtering.}
The KF~\cite{kalman1960}, EKF~\cite{schmidt1966,jazwinski1970},
UKF~\cite{julier1997,wan2000}, EnKF~\cite{evensen1994,burgers1998},
and geometric variants
(InEKF~\cite{barrau2017,hartley2020}, EqF~\cite{vangoor2023},
UKF-M~\cite{brossard2020}) form a mature lineage, but their standard
forms do not provide an explicit multimodal density representation.

\textbf{Moment-based filtering.}
Moment-based filtering for polynomial systems has been developed
primarily for robotics state estimation, including closed-form
moment recursions for discrete-time trigonometric-polynomial
systems~\cite{jasour2021moment}, the moment-based Kalman
filter~\cite{shimizu2023moment}, the generalized moment Kalman
filter~\cite{teng2024gmkf}, and the MEM-KF~\cite{teng2025} which
couples moment propagation with MaxEnt density reconstruction.
These filters operate on discrete-time prediction/update maps. The SKF
extends this to continuous-time polynomial SDEs
(Appendix~\ref{app:closure-comparison}).

\textbf{Score matching and Stein's method of moments.}
Score matching~\cite{hyvarinen2005} fits unnormalized models without
$Z$, with extensions to generative modeling~\cite{song2021scorebased}
and $S^2$ filtering~\cite{bukal2017}. Stein's method of moments
(SMoM)~\cite{ebner2025,kume2026,nagai2026} derives $Z$-free
estimators that include generalized score matching as a special case
and motivate our iterative consistency refinement
(Section~\ref{sec:recovery}).

\textbf{Particle filters and data-driven methods.}
Particle filters~\cite{zanetti2024} represent non-Gaussian beliefs by
Monte Carlo samples but suffer weight degeneracy under partial
observation, with accuracy controlled by the particle count.
KalmanNet~\cite{revach2022} replaces the Kalman gain by a learned
neural network at the cost of supervised training data.

\section{Conclusion}

The SKF replaces partition-function evaluation by linear solves for density
fitting, moment closure, and posterior recovery. It agrees with MaxEnt on
the polynomial exponential family (Theorem~\ref{thm:sm-maxent}) and recovers
the information-form Kalman filter at $r=2$, giving a partition-function-free
route to high-order moment filtering. The same algebra also opens a research
program around automatic basis selection, adaptive truncation, active Stein
closures for sparse generators, and diagnostics that decide when higher-order
moments are worth propagating. We discuss these directions in
Appendix~\ref{app:limitations}.

\section*{Acknowledgements}
Supported in part by NSF grant 2103026, and AFOSR grants
FA9550-32-1-0215 and FA9550-23-1-0400 (MURI). M. Ghaffari was supported by AFOSR YIP FA9550-25-1-0224.

\bibliographystyle{unsrtnat}
\bibliography{references}

\newpage
\appendix
\renewcommand{\thefigure}{A\arabic{figure}}
\setcounter{figure}{0}
\renewcommand{\thetable}{A\arabic{table}}
\setcounter{table}{0}
\renewcommand{\theequation}{A.\arabic{equation}}
\setcounter{equation}{0}

\section{Background on exponential families and maximum entropy}
\label{app:expfam}

This section collects standard results on exponential families~\cite{wainwright2008,jaynes1957}
that are used in our proofs.
We state them for the polynomial exponential family
$\mathcal{P}_r = \{p(\cdot;\lambda) = \exp(-\lambda\cdot\phi)/Z(\lambda) : \lambda \in \Lambda\}$ with natural parameter domain $\Lambda = \{\lambda \in \R^M : Z(\lambda) < \infty\}$,
but with the usual identifiability convention that the constant statistic
is removed, or equivalently that $\lambda_{\mathbf{0}}=0$ is fixed.
Without this convention, shifting the constant parameter only rescales
$Z(\lambda)$ and leaves the density unchanged, so strict convexity and
injectivity cannot hold on the full redundant parameter space.
The results below hold for this minimal representation, and more generally
for any exponential family with affinely independent sufficient statistics.

\textbf{Log-partition function.}
The log-partition function $\Psi(\lambda) := \log Z(\lambda)
= \log \int \exp(-\lambda\cdot\phi(x))\,dx$ is convex on $\Lambda$
and strictly convex on $\operatorname{int}(\Lambda)$ after the constant
direction has been removed.
Its gradient and Hessian are
\begin{equation}
    \nabla_\lambda \Psi = -\E_{p_\lambda}[\phi(x)],
    \qquad
    \nabla^2_\lambda \Psi = \operatorname{Cov}_{p_\lambda}[\phi(x)],
\end{equation}
where $\operatorname{Cov}_{p_\lambda}[\phi]$ is the covariance matrix
of the sufficient statistics under $p(\cdot;\lambda)$.
Since the covariance matrix is positive semidefinite (and positive definite
when the sufficient statistics in the minimal representation are not
a.s.\ affinely dependent), $\Psi$ is strictly convex on that representation.

\textbf{Moment map.}
The \emph{moment space} $\mathcal{M} \subset \R^{M-1}$ is the set of all non-constant moment vectors $m = \E_q[\phi(x)]$ realizable by some probability distribution $q$ on $\R^n$.
Define the moment map $\mu \colon \operatorname{int}(\Lambda) \to \mathcal{M}$
by $\mu(\lambda) = \E_{p_\lambda}[\phi(x)]$.
Since $\mu = -\nabla\Psi$ and $\Psi$ is strictly convex on the minimal parameterization,
$\mu$ is injective: distinct parameters give distinct moments.
Moreover, $\mu$ is a diffeomorphism from $\operatorname{int}(\Lambda)$ onto $\operatorname{int}(\mathcal{M})$ in this minimal representation: bijectivity follows from strict convexity of $\Psi$, smoothness of $\mu$ from analyticity of $\Psi$, and smoothness of $\mu^{-1}$ from the inverse function theorem since $\nabla^2\Psi \succ 0$ on $\operatorname{int}(\Lambda)$ implies $\nabla\mu = -\nabla^2\Psi$ is invertible.
This means that for any moment vector $m \in \operatorname{int}(\mathcal{M})$,
there is a \emph{unique} $\lambda \in \operatorname{int}(\Lambda)$
such that $\E_{p_\lambda}[\phi] = m$.

\textbf{Maximum entropy characterization.}
Among all densities $q$ satisfying $\E_q[\phi_\alpha(x)] = m_\alpha$
for $1\le|\alpha| \le r$ (and $\int q = 1$, $q \ge 0$),
the one that maximizes the Shannon entropy
$\mathcal{H}(q) = -\int q \log q\, dx$ is $p(\cdot;\lambda(m))$,
the unique exponential family member with those moments.
This follows from the Gibbs inequality:
for any $q$ with $\E_q[\phi] = m$,
\begin{equation}
    \mathcal{H}(q) = -\E_q[\log q]
    \le -\E_q[\log p_\lambda]
    = \E_q[\lambda\cdot\phi] + \log Z(\lambda)
    = \lambda\cdot m + \Psi(\lambda),
\end{equation}
with equality if and only if $q = p_\lambda$ a.e.
The second inequality uses $\mathrm{KL}(q \| p_\lambda) \ge 0$,
which expands to $-\mathcal{H}(q) - \E_q[\log p_\lambda] \ge 0$.

\textbf{Implication for this paper.}
Score matching recovers $\lambda$ from moments without evaluating $\Psi$
or solving the entropy optimization.
Theorem~\ref{thm:sm-maxent} shows that the recovered parameter agrees
with the MaxEnt/moment-matching parameter when the moments are consistent
with the family.
The bijectivity of $\mu$ guarantees that this parameter is unique.

\section{Proofs}
\label{app:proofs}

\subsection{Proof of Proposition~\ref{prop:sm} (score matching linear system)}

For the polynomial MED with energy $E_\lambda(x) = \lambda \cdot \phi(x)$,
the model score is $s_\lambda(x) = -\nabla_x E_\lambda = -J_\phi(x)\lambda$,
where $[J_\phi]_{i\alpha} = \partial_i \phi_\alpha = \alpha_i x^{\alpha - e_i}$.
The Hyv\"arinen score matching objective~\eqref{eq:sm-objective} becomes
\begin{align}
    J_{\mathrm{SM}}(\lambda)
    &= \E\biggl[\sum_i \partial_i s_i(x;\lambda)
    + \frac{1}{2} s_i(x;\lambda)^2\biggr] \nonumber\\
    &= \E\biggl[-\sum_i \partial_i(J_\phi(x) \lambda)_i
    + \frac{1}{2}\|J_\phi(x)\lambda\|^2\biggr] \nonumber\\
    &= -\E[\Delta\phi(x)]^\top\lambda
    + \frac{1}{2}\lambda^\top\E[J_\phi(x)^\top J_\phi(x)]\lambda \nonumber\\
    &= \frac{1}{2}\lambda^\top A\lambda - b^\top\lambda,
\end{align}
where $A = \E[J_\phi^\top J_\phi]$ and $b = \E[\Delta\phi]$.
We now compute the entries explicitly.

\textbf{Matrix $A$.}
The $(i,\alpha)$-entry of $J_\phi$ is
$[J_\phi]_{i\alpha} = \partial_i(x^\alpha) = \alpha_i x^{\alpha - e_i}$.
Therefore
\begin{align}
    [J_\phi^\top J_\phi]_{\alpha\beta}
    &= \sum_{i=1}^n [J_\phi]_{i\alpha}\,[J_\phi]_{i\beta} \nonumber\\
    &= \sum_{i=1}^n \alpha_i x^{\alpha - e_i} \cdot \beta_i x^{\beta - e_i} \nonumber\\
    &= \sum_{i=1}^n \alpha_i\,\beta_i\; x^{\alpha + \beta - 2e_i}.
\end{align}
Taking the expectation under $p_{\mathrm{data}}$
and writing $m_\gamma = \E_{p_{\mathrm{data}}}[x^\gamma]$ gives
\begin{equation}
    A_{\alpha\beta} = \E_{p_{\mathrm{data}}}[J_\phi^\top J_\phi]_{\alpha\beta}
    = \sum_{i=1}^n \alpha_i\,\beta_i\; m_{\alpha+\beta-2e_i}.
\end{equation}
The multi-index $\alpha + \beta - 2e_i$ has degree
$|\alpha| + |\beta| - 2 \le 2r - 2$,
so every entry of $A$ is a known propagated moment.

\textbf{Vector $b$.}
The Laplacian of $\phi_\alpha$ is
$\Delta\phi_\alpha = \sum_i \partial_i^2(x^\alpha)
= \sum_i \alpha_i(\alpha_i - 1)\,x^{\alpha - 2e_i}$.
Taking the expectation gives
$b_\alpha = \sum_i \alpha_i(\alpha_i - 1)\; m_{\alpha - 2e_i}$,
which requires moments up to degree $r - 2$.

The minimizer satisfies $\nabla_\lambda J_{\mathrm{SM}} = A\lambda - b = 0$,
solved on the non-constant subspace.
The row and column corresponding to $|\alpha| = 0$ (the constant monomial)
are identically zero because $\partial_i(x^0) = 0$,
so the constant component $\lambda_{\mathbf{0}}$ (corresponding to the
zero multi-index) is unidentifiable and is dropped.

After this deletion, $A$ is the Gram matrix of the gradients of the
remaining monomials. Indeed, for any coefficient vector $v$ and
$q_v(x)=\sum_{1\le|\alpha|\le r}v_\alpha x^\alpha$,
\begin{equation}
    v^\top A v
    = \E_{p_{\mathrm{data}}}\!\left[\|J_\phi(x)v\|^2\right]
    = \E_{p_{\mathrm{data}}}\!\left[\|\nabla q_v(x)\|^2\right].
\end{equation}
Thus $A$ is positive semidefinite. It is positive definite precisely when
no nonzero polynomial in the chosen non-constant basis has zero gradient
almost surely under $p_{\mathrm{data}}$. For the smooth densities used in
this paper this holds: if the last display is zero, then $\nabla q_v$
vanishes on an open set, hence everywhere by polynomiality. Therefore
$q_v$ is constant, and since $q_v$ has no constant term, $v=0$.
\qed

\begin{remark}[A pathological example for positive definiteness]
The positive definiteness statement is not meant to cover laws that live
on a lower-dimensional set. For example, if a two-dimensional state is supported on
the line $x_2=0$, then the nonzero quadratic $q(x)=x_2^2$ has
$\nabla q=0$ on the support. The associated coefficient vector is
therefore in the nullspace of $A$.
\end{remark}

\subsection{Perturbation bound (graceful degradation)}
\label{app:perturbation}

\begin{proposition}[Graceful degradation]\label{prop:graceful}
Let $A(m)$ and $b(m)$ be the score matching matrix and vector
assembled from moments $m$ via~\eqref{eq:Ab},
and let $\lambda^* = A^{-1}b$ on the non-constant subspace, where
$A(m)$ is nonsingular and $\lambda^*, b(m)$ are nonzero.
Suppose the propagated moments are perturbed:
$\tilde{m} = m + \delta m$, with $\delta m$ small enough that
$A(\tilde m)$ is also nonsingular.
Then the perturbed parameters $\tilde{\lambda} = A(\tilde{m})^{-1} b(\tilde{m})$
satisfy
\begin{equation}
    \frac{\|\tilde\lambda - \lambda^*\|}{\|\lambda^*\|}
    \le \kappa(A)\,
    \biggl(\frac{\|\delta A\|}{\|A\|} + \frac{\|\delta b\|}{\|b\|}\biggr)
    + O(\|\delta m\|^2),
\end{equation}
where $\delta A = A(\tilde{m}) - A(m)$,
$\delta b = b(\tilde{m}) - b(m)$,
and $\kappa(A) = \|A\|\,\|A^{-1}\|$.
\end{proposition}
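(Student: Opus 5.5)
This is the standard first-order perturbation argument for a linear system. Write $\tilde A = A + \delta A$ and $\tilde b = b + \delta b$, with $A = A(m)$ and $b = b(m)$. By~\eqref{eq:Ab}, every entry of $A$ and $b$ is a fixed linear combination of moments. The maps $m\mapsto A(m)$ and $m\mapsto b(m)$ are therefore linear, so $\|\delta A\| \le c_A\|\delta m\|$ and $\|\delta b\|\le c_b\|\delta m\|$ for constants depending only on $n$, $r$ and the chosen norm. This is what lets every second-order term be absorbed into $O(\|\delta m\|^2)$.

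The exact identity comes from subtracting $A\lambda^* = b$ from $\tilde A\tilde\lambda = \tilde b$:
\begin{equation*}
A(\tilde\lambda - \lambda^*) = \delta b - \delta A\,\tilde\lambda
= \delta b - \delta A\,\lambda^* - \delta A\,(\tilde\lambda-\lambda^*).
\end{equation*}
Applying $A^{-1}$ and taking norms gives
\begin{equation*}
\|\tilde\lambda-\lambda^*\| \le \|A^{-1}\|\bigl(\|\delta b\| + \|\delta A\|\,\|\lambda^*\|\bigr) + \|A^{-1}\|\,\|\delta A\|\,\|\tilde\lambda-\lambda^*\|.
\end{equation*}

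Next I handle the last term, which is the only delicate step. Nonsingularity of $\tilde A$ alone gives no quantitative bound on $\tilde\lambda$. I would therefore restrict to $\|\delta m\|$ small enough that $\|A^{-1}\|\,\|\delta A\|\le 1/2$, which is possible by the linear bound on $\delta A$. Then a Neumann-series argument for $\tilde A = A(I + A^{-1}\delta A)$ shows $\|\tilde\lambda-\lambda^*\| = O(\|\delta m\|)$. Consequently the term $\|A^{-1}\|\,\|\delta A\|\,\|\tilde\lambda-\lambda^*\|$ is $O(\|\delta m\|^2)$. For $\|\delta m\|$ not small the $O(\cdot)$ statement is vacuous with a suitable constant, so only the small regime matters.

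To normalize, divide by $\|\lambda^*\|\neq 0$. In the $\delta A$ term, write $\|A^{-1}\|\,\|\delta A\| = \kappa(A)\,\|\delta A\|/\|A\|$. For the $\delta b$ term, use $\|b\| = \|A\lambda^*\|\le\|A\|\,\|\lambda^*\|$, so that $1/\|\lambda^*\|\le \|A\|/\|b\|$; here $b\neq0$ is needed. This gives $\|A^{-1}\|\,\|\delta b\|/\|\lambda^*\| \le \kappa(A)\,\|\delta b\|/\|b\|$. Adding the two pieces yields the stated bound plus $O(\|\delta m\|^2)$. The estimate~\eqref{eq:graceful} then follows from the linear bounds on $\delta A$ and $\delta b$ in terms of $\delta m$.
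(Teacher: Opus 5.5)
Your proposal is correct and follows the same route as the paper's proof: linearity of $m\mapsto(A(m),b(m))$, the first-order relation $\delta\lambda \approx A^{-1}(\delta b - \delta A\,\lambda^*)$, and the normalization via $\|A^{-1}\| = \kappa(A)/\|A\|$ and $\|b\|\le\|A\|\,\|\lambda^*\|$. The only difference is that the paper simply drops $\delta A\,\delta\lambda$ as second order, whereas your Neumann-series bound $\|\tilde\lambda-\lambda^*\| = O(\|\delta m\|)$ actually justifies that step. One caveat: read the $O(\cdot)$ as asymptotic in $\delta m\to 0$, and drop the claim that a single constant handles large $\delta m$, since $\tilde\lambda$ can blow up as $A(\tilde m)$ approaches singularity.
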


\begin{proof}
Since each entry of $A$ and $b$ is a single moment
(equations~\eqref{eq:A-entry}--\eqref{eq:b-entry}),
both $A(m)$ and $b(m)$ are \emph{linear} in $m$.
Therefore $\delta A = A(\delta m)$ and $\delta b = b(\delta m)$
(using linearity), and
$\|\delta A\| \le C_A \|\delta m\|$,
$\|\delta b\| \le C_b \|\delta m\|$
for constants $C_A$, $C_b$ depending only on the basis.

The perturbed system is $(A + \delta A)(\lambda^* + \delta\lambda) = b + \delta b$.
Expanding and dropping the $O(\|\delta m\|^2)$ term $\delta A \cdot \delta\lambda$:
\begin{equation}
    \delta\lambda = A^{-1}(\delta b - \delta A\, \lambda^*).
\end{equation}
Taking norms:
$\|\delta\lambda\| \le \|A^{-1}\|(\|\delta b\| + \|\delta A\|\,\|\lambda^*\|)$.
Dividing by $\|\lambda^*\|$ and using $\|A^{-1}\| = \kappa(A)/\|A\|$:
\begin{equation}
    \frac{\|\delta\lambda\|}{\|\lambda^*\|}
    \le \kappa(A)\,\frac{\|\delta A\|}{\|A\|}
    + \kappa(A)\,\frac{\|\delta b\|}{\|A\|\,\|\lambda^*\|}
    \le \kappa(A)\,
    \biggl(\frac{\|\delta A\|}{\|A\|} + \frac{\|\delta b\|}{\|b\|}\biggr),
\end{equation}
where the last step uses $\|b\| = \|A\lambda^*\| \le \|A\|\,\|\lambda^*\|$.
\end{proof}

\begin{remark}[Contrast with MaxEnt realizability]\label{rmk:realizability}
The MaxEnt dual is $\min_\lambda \{\lambda \cdot m + \log Z(\lambda)\}$,
a convex program on $\Lambda = \{\lambda : Z(\lambda) < \infty\}$.
On the non-constant minimal parameterization, $\log Z$ is strictly convex on $\operatorname{int}(\Lambda)$, so
the moment map $\lambda \mapsto m(\lambda) = -\nabla_\lambda \log Z(\lambda)$
is a diffeomorphism from $\operatorname{int}(\Lambda)$ onto an open convex set
$\mathcal{M} \subset \R^M$, the \emph{mean parameter space}
(Wainwright and Jordan~\cite{wainwright2008}, Theorem~3.3).
When the target moments $m$ lie outside $\mathcal{M}$
(which can happen when moments are propagated numerically and accumulate
error), the MaxEnt dual has no finite minimizer:
$\|\lambda_k\| \to \infty$ along the optimization iterates.
In practice, this manifests as the optimizer failing to converge,
returning a degenerate density (mass concentrating on domain vertices),
or producing numerical overflow.
In this case the optimization problem has no finite minimizer.

The score matching linear system $A\lambda = b$, by contrast,
has a solution whenever $A$ is nonsingular,
regardless of whether $m$ lies in $\mathcal{M}$.
This is because the score matching objective
$J_{\mathrm{SM}} = \frac{1}{2}\lambda^\top A \lambda - b^\top \lambda$
is a quadratic in $\lambda$ with no domain constraint.
The fitted $\lambda^*$ may not correspond to a normalizable density
when $m \notin \mathcal{M}$, but the score function
$s(x) = -J_\phi(x)\lambda^*$ remains well-defined and can still
be used for Langevin sampling or downstream computations.
\end{remark}

\subsection{Proof of Theorem~\ref{thm:sm-maxent} (score matching and MaxEnt equivalence)}
\label{app:proof-equiv}

We require the following setup and regularity conditions.

\textbf{Setup.}
Let $\mathcal{P}_r = \{p(\cdot;\lambda) =
\exp(-\lambda\cdot\phi(x))/Z(\lambda) : \lambda \in \Lambda\}$
be the degree-$r$ polynomial exponential family, where
$\phi(x) = (x^\alpha)_{|\alpha| \le r}$ is the monomial sufficient statistic
and $\Lambda = \{\lambda \in \R^M : Z(\lambda) < \infty\}$
is the natural parameter space.
Let $\lambda_{\mathrm{true}} \in \operatorname{int}(\Lambda)$ and denote
$p_0 = p(\cdot;\lambda_{\mathrm{true}})$.
Suppose the propagated moments $m_\alpha = \E_{p_0}[x^\alpha]$
for $|\alpha| \le 2r-2$ are the exact moments of $p_0$.

\textbf{Regularity conditions.}
\begin{enumerate}[label=(R\arabic*)]
    \item \label{R1} $\lambda_{\mathrm{true}} \in \operatorname{int}(\Lambda)$
    with $\lambda_{\mathbf{0}}=0$,
    and the support of $p_0$ has nonempty interior. For the polynomial
    family on $\R^n$, $p_0(x)>0$ on all of $\R^n$.
    \item \label{R2} All moments $\E_{p_0}[|x^\alpha|]$ for $|\alpha| \le 2r-2$
    are finite (guaranteed by $\lambda_{\mathrm{true}} \in \operatorname{int}(\Lambda)$
    for exponential families).
    \item \label{R3} The boundary terms in the integration-by-parts
    step of score matching vanish, i.e.,
    $\lim_{\|x\|\to\infty} p_0(x) \, s_\lambda(x) = 0$
    for all $\lambda \in \Lambda$.
    This holds when the leading-degree term of $E_\lambda$ grows
    sufficiently fast (e.g., even degree with positive leading coefficient).
\end{enumerate}

\begin{proof}
We proceed in three steps.

\textbf{Step 1. The Fisher divergence vanishes at $\lambda_{\mathrm{true}}$.}

The Fisher divergence between $p_0$ and $p(\cdot;\lambda)$ is
\begin{equation}\label{eq:app-fisher}
    D_F(p_0 \| p_\lambda)
    = \frac{1}{2}\E_{p_0}\bigl[\|s_0(x) - s_\lambda(x)\|^2\bigr],
\end{equation}
where $s_0 = \nabla_x \log p_0 = -J_\phi \lambda_{\mathrm{true}}$
and $s_\lambda = -J_\phi \lambda$.
Substituting,
\begin{equation}
    D_F(p_0 \| p_\lambda)
    = \frac{1}{2}\E_{p_0}\bigl[\|J_\phi(\lambda - \lambda_{\mathrm{true}})\|^2\bigr]
    = \frac{1}{2}(\lambda-\lambda_{\mathrm{true}})^\top A_0 (\lambda-\lambda_{\mathrm{true}}),
\end{equation}
where $A_0 = \E_{p_0}[J_\phi^\top J_\phi]$ is the Gram matrix
evaluated under $p_0$.
At $\lambda = \lambda_{\mathrm{true}}$, we have $D_F = 0$.

\textbf{Step 2. $A_0$ is positive definite (on the non-constant subspace),
so $\lambda_{\mathrm{true}}$ is the unique minimizer.}

We must show that $A_0 v = 0$ implies $v = 0$
(restricted to $|\alpha| \ge 1$).
$A_0 v = 0$ means $\E_{p_0}[\|J_\phi v\|^2] = 0$, which implies
$J_\phi(x) v = 0$ for $p_0$-a.e.\ $x$.
By condition~\ref{R1}, this polynomial vector field vanishes on a set
with nonempty interior, and hence vanishes identically. Writing
$q_v(x)=\sum_{1\le|\alpha|\le r}v_\alpha x^\alpha$, this says
$\nabla q_v \equiv 0$, so $q_v$ is constant. Since $q_v$ has no constant
term, $q_v\equiv0$, and therefore $v=0$.

This shows $A_0$ is positive definite on $\{v : v_{\mathbf{0}} = 0\}$,
so $D_F(p_0 \| p_\lambda) = 0$ if and only if $\lambda_\alpha = (\lambda_{\mathrm{true}})_\alpha$
for all $|\alpha| \ge 1$.

\textbf{Step 3. The score matching objective has the same minimizer.}

The Hyv\"arinen score matching objective, evaluated under $p_0$, is
\begin{equation}
    J_{\mathrm{SM}}(\lambda)
    = \E_{p_0}\biggl[\sum_i \partial_i (s_\lambda)_i
    + \frac{1}{2}(s_\lambda)_i^2\biggr]
    = D_F(p_0 \| p_\lambda)
    + \E_{p_0}\biggl[\sum_i \partial_i (s_0)_i
    + \frac{1}{2}(s_0)_i^2\biggr],
\end{equation}
where the second term is independent of $\lambda$
(this is the standard identity from Hyv\"arinen~\cite{hyvarinen2005},
valid under~\ref{R3}).
Therefore $J_{\mathrm{SM}}$ and $D_F$ differ by a constant,
and they share the same minimizer.

From Steps 1--2, the unique minimizer of $D_F$
(restricted to $\lambda_{\mathbf{0}} = 0$) is $\lambda_{\mathrm{true}}$.
Therefore $\lambda^*_\alpha = (\lambda_{\mathrm{true}})_\alpha$ for all $|\alpha| \ge 1$.

\textbf{Moment matching follows.}
Since $\lambda^*$ and $\lambda_{\mathrm{true}}$ agree on all non-constant components,
$p(\cdot;\lambda^*) = p(\cdot;\lambda_{\mathrm{true}}) = p_0$,
and the fitted density has exactly the moments $m$.
The uniqueness of $\lambda_{\mathrm{true}}$ within $\mathcal{P}_r$ follows from
the bijectivity of the moment map (Appendix~\ref{app:expfam}).
\end{proof}

\begin{remark}[When the conditions fail]
If the propagated moments $m$ do \emph{not} correspond to any
$p_0 \in \mathcal{P}_r$ (the generic case when the true filtering
density is not in the polynomial exponential family), then
$\lambda^*$ minimizes $D_F$ over $\mathcal{P}_r$ but does not
match $m$ exactly.
The moment mismatch $\|m_{\lambda^*} - m\|$
(where $m_{\lambda^*}$ are the moments of $p(\cdot;\lambda^*)$)
is then controlled by the approximation quality of $\mathcal{P}_r$
for the true density.
Bounding this mismatch in terms of $r$ and the smoothness of the
true density is an interesting direction for future work.
\end{remark}

\subsection{Proof of Proposition~\ref{prop:stein} (Stein identity)}

Stein's identity~\cite{stein1972} states that for a smooth density $p$ with score
$s_i = \partial_i \log p$ and any smooth test function $f$
with $\E[|s_i f|] < \infty$,
\begin{equation}
    \E[s_i(x) f(x) + \partial_i f(x)] = 0,
\end{equation}
provided $\lim_{\|x\|\to\infty} p(x)f(x) = 0$
(the boundary term from integration by parts vanishes).
This follows from
$\int (s_i f + \partial_i f) p\,dx
= \int (\partial_i p \cdot f / p + \partial_i f) p\,dx
= \int \partial_i(p f)\,dx = [pf]_{\text{boundary}} = 0$.

Substituting $s_i = -\sum_\alpha \lambda_\alpha \alpha_i x^{\alpha-e_i}$
(the polynomial MED score) and $f(x) = x^\beta$,
\begin{equation}
    \E\Bigl[-\sum_\alpha \lambda_\alpha \alpha_i x^{\alpha+\beta-e_i}
    + \beta_i x^{\beta-e_i}\Bigr] = 0.
\end{equation}
Rearranging gives~\eqref{eq:stein}.
\qed

\subsection{Well-posedness of the Stein closure system}
\label{app:stein-wellposed}

The one-layer Stein closure assembles a linear system
$\Lambda_1(\lambda)\,m^{(K+1)} = c(\lambda, m^{(\le K)})$
for the degree-$(K{+}1) = (2r{-}1)$ unknowns.
We count the rows (equations) and columns (unknowns) for
general $n$ and $r$.

\textbf{Unknowns.}
The number of monomials of degree exactly $2r{-}1$ in $n$ variables is
$U(n,r) = \binom{2r{+}n{-}2}{n{-}1}$.

\textbf{Equations (first layer, $|\beta| = r$ only).}
The implemented directional subsystem uses each $(\beta, i)$ pair with
$|\beta| = r$ and $\beta_i \ge 1$.
For a given $i$, the number of such $\beta$ is
$\binom{r{+}n{-}2}{n{-}1}$
(substitute $\beta_i' = \beta_i - 1$).
Summing over $i$ gives
$R_1(n,r) = n\binom{r{+}n{-}2}{n{-}1}$.
For $n = 2$, $R_1 = 2r = U$, so the system is square.
For $n \ge 3$, $R_1 < U$ and the first layer alone is underdetermined.
The component-wise Stein identity is also valid when $\beta_i=0$. Those
rows have zero right-hand side and may be appended as additional
least-squares constraints. The counts in this appendix are therefore
conservative counts for the directional subsystem used in the reported
implementation.

\begin{proposition}[Well-posedness of the directional Stein closure for $n{=}2$]
\label{prop:stein-closure}
For $n{=}2$ and any $r \ge 2$, the directional Stein closure subsystem is square
($2r$ equations for $2r$ unknowns). Its determinant is a nonzero
polynomial in the degree-$r$ score coefficients. Hence the closure is
well-posed for all $\lambda$ outside a proper algebraic variety of
Lebesgue measure zero.
For $r=2$, nonsingularity holds whenever the precision matrix
$\Omega \succ 0$.
\end{proposition}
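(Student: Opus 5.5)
The plan is to make the structure of $\Lambda_1(\lambda)$ explicit for $n{=}2$, and then certify that its determinant is a nonzero polynomial by evaluating it at one convenient parameter. With $K=2r-2$, the unknowns are the $2r$ moments $m_{(j,\,2r-1-j)}$, $j=0,\dots,2r-1$, which matches the count $U(2,r)=2r$ from the counting paragraph above.

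\textbf{Rows and dependence on $\lambda$.} The rows are the pairs $(\beta,i)$ with $|\beta|=r$ and $\beta_i\ge1$. There are $r$ such pairs for each $i$, so $2r$ rows in total and the system is square. In row $(\beta,i)$ of \eqref{eq:stein}, the term indexed by $\alpha$ involves $m_{\alpha+\beta-e_i}$, which has degree $|\alpha|+r-1$. It is therefore an unknown exactly when $|\alpha|=r$. Terms with $|\alpha|<r$ involve only known moments of degree $\le 2r-2$ and move to the right-hand side $c$. Consequently every entry of $\Lambda_1$ is a linear form in the top-degree coefficients $\{\lambda_\alpha: |\alpha|=r\}$, and $\det\Lambda_1$ is a polynomial in these coefficients.

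\textbf{Nonvanishing of the determinant.} To show the determinant is not the zero polynomial, I evaluate at the witness $\lambda_{(r,0)}=\lambda_{(0,r)}=1$, with all other degree-$r$ coefficients equal to $0$.
\begin{itemize}
\item For $i=1$, only $\alpha=(r,0)$ contributes, so row $(\beta,1)$ reads $r\,m_{\beta+(r-1,0)}$. With $\beta=(\beta_1,r-\beta_1)$ and $\beta_1=1,\dots,r$, this reaches exactly the unknowns $m_{(j,2r-1-j)}$ with $j=r,\dots,2r-1$.
\item For $i=2$, only $\alpha=(0,r)$ contributes, and rows $(\beta,2)$ with $\beta_1=0,\dots,r-1$ reach exactly $j=0,\dots,r-1$.
\end{itemize}
Each unknown is therefore hit exactly once, and after reordering $\Lambda_1=rI_{2r}$, so $\det\Lambda_1=r^{2r}\neq0$. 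A nonzero polynomial on a Euclidean space vanishes only on a proper algebraic variety, which has Lebesgue measure zero. This gives well-posedness off that variety.

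\textbf{The case $r=2$.} Here I write the quadratic part of the energy as $\tfrac12 x^\top\Omega x$, so that $\lambda_{(2,0)}=\Omega_{11}/2$, $\lambda_{(1,1)}=\Omega_{12}$ and $\lambda_{(0,2)}=\Omega_{22}/2$. Row $(\beta,1)$ becomes $\Omega_{11}m_{\beta+e_1}+\Omega_{12}m_{\beta+e_2}$, and row $(\beta,2)$ becomes $\Omega_{12}m_{\beta+e_1}+\Omega_{22}m_{\beta+e_2}$.

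Order the unknowns as $(m_{30},m_{21},m_{12},m_{03})$ and the rows as $(\beta,i)=((2,0),1),((1,1),1),((1,1),2),((0,2),2)$. The matrix is then
\[
\begin{pmatrix}\Omega_{11}&\Omega_{12}&0&0\\ 0&\Omega_{11}&\Omega_{12}&0\\ 0&\Omega_{12}&\Omega_{22}&0\\ 0&0&\Omega_{12}&\Omega_{22}\end{pmatrix}.
\]
The first column has only one nonzero entry, and so does the last. Expanding along both leaves the middle $2\times2$ block, giving
\[
\det\Lambda_1=\Omega_{11}\,\Omega_{22}\,\det\Omega .
\]
If $\Omega\succ0$, then $\Omega_{11}>0$, $\Omega_{22}>0$ and $\det\Omega>0$, so the determinant is strictly positive.

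\textbf{Main obstacle.} The step needing most care is the bookkeeping in the general-$r$ witness. I must check that, for $i=1$ and $i=2$, the index ranges of $\beta_1$ map bijectively onto complementary halves of the unknown index set, and that no lower-degree $\alpha$ leaks into the unknown columns. The degree count above rules out such leakage.
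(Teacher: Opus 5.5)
Your general-$r$ argument is the same as the paper's. The witness with only $\lambda_{(r,0)}$ and $\lambda_{(0,r)}$ nonzero makes the $i{=}1$ rows hit exactly the unknowns with $x_1$-exponent $\ge r$, and the $i{=}2$ rows hit the complementary ones. So $\Lambda_1$ becomes a permuted diagonal matrix with $\det\Lambda_1=\pm r^{2r}\neq 0$; the sign from reordering is immaterial. Your degree count, showing that only $|\alpha|=r$ terms reach the unknown columns, is bookkeeping the paper leaves implicit.

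For $r{=}2$ you take a different and more explicit route. The paper only remarks that the four $|\beta|{=}2$ equations reduce to the Gaussian moment--cumulant relation $\kappa_3=0$, and are therefore uniquely solvable for $\Omega\succ0$. That identifies what the solution should be, but it does not by itself show that the directional $4\times4$ subsystem is nonsingular. This subsystem omits the $\beta_i=0$ rows, so it is not simply $\Omega$ acting blockwise on each $\beta$. Your direct expansion $\det\Lambda_1=\Omega_{11}\Omega_{22}\det\Omega$ closes that gap, and I checked it against the row formulas. It also gives more: it pins down the exact singular locus at $r{=}2$, and it shows that $\Omega\succ0$ is merely sufficient, since any $\Omega$ with nonzero diagonal entries and $\det\Omega\neq0$ also works.
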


\begin{proof}
Each entry of $\Lambda_1$ is a monomial in $\lambda$
(specifically $\lambda_\alpha \cdot \alpha_i$),
so every $k \times k$ minor is a polynomial in $\lambda$.
The rank-deficient locus is the zero set of all maximal minors,
which is an algebraic variety.
It remains to show that this variety is proper.
Set the only nonzero degree-$r$ coefficients to
$\lambda_{(r,0)}$ and $\lambda_{(0,r)}$.
Then the equations from the $x_1$-derivative solve the degree-$(2r{-}1)$
moments with $x_1$-exponent at least $r$, while the equations from the
$x_2$-derivative solve the remaining moments with $x_2$-exponent at least
$r$. After ordering the unknowns this matrix is diagonal, with nonzero
diagonal entries $r\lambda_{(r,0)}$ and $r\lambda_{(0,r)}$.
Thus $\det \Lambda_1$ is not the zero polynomial, and its zero set has
Lebesgue measure zero.
At $r{=}2$, the four Stein equations at $|\beta|{=}2$ for the
four degree-3 moments reduce to the Gaussian moment--cumulant
relation $\kappa_3 = 0$, uniquely solvable when $\Omega \succ 0$.
\end{proof}

\textbf{Augmented directional equations ($|\beta| \in [r, K]$).}
Stein equations at $|\beta| = r{+}j$ ($j = 1, \ldots, r{-}2$)
also contribute to the degree-$(2r{-}1)$ unknowns
via $|\alpha| = r{-}j$ terms.
Terms with $|\alpha|<r{-}j$ involve already tracked moments, while
terms with $|\alpha|>r{-}j$ involve degrees above $K{+}1$ and are
omitted in the truncated closure solve.
The augmented row count is
\[
    R(n,r) = \sum_{j=0}^{r-2} n\binom{r{+}j{+}n{-}2}{n{-}1}.
\]
Asymptotically, $R/U \sim (2r{-}1)(2r{-}2)/n$,
so the crossover occurs at a dimension of order
$(2r{-}1)(2r{-}2)$.
For the regimes used in the experiments, the exact counts are concrete:
at $r{=}3$, $R/U>1$ through $n{=}15$ and $R/U<1$ starting at $n{=}16$
($R/U = 1.37$ at $n{=}10$, $1.09$ at $n{=}14$,
and $0.45$ at $n{=}40$).
At $r{=}4$, the crossover is between $n{=}35$ and $n{=}36$.

\textbf{Extended directional equations ($|\beta| \in [r, K{+}1]$).}
The Stein identity at $|\beta| = K{+}1$ also involves
degree-$(K{+}1)$ unknowns (through the $|\alpha|=1$ terms),
while the right-hand side $\beta_i\,m_{\beta-e_i}$ involves
only degree-$K$ moments (still tracked).
Including these equations adds
$n\binom{K{+}n{-}1}{n{-}1}$ rows to the system.
Higher-degree moments (degree $\ge K{+}2$) that appear in
these equations through $|\alpha| \ge 2$ terms are truncated
(set to zero). For near-Gaussian densities this truncation
error is $O(\sigma^{K+2})$.

The extended row count is
\[
    R_{\mathrm{ext}}(n,r) = \sum_{j=0}^{r-1}
    n\binom{r{+}j{+}n{-}2}{n{-}1}.
\]
The additional $j = r{-}1$ term (from $|\beta| = K{+}1 = 2r{-}1$)
contributes the dominant number of equations at large $n$.
At $n{=}40$, $r{=}3$: $R_{\mathrm{ext}} = 5{,}428{,}400$
for $U = 1{,}086{,}008$ unknowns ($5.00\times$ overdetermined),
compared to $R = 492{,}000$ ($0.45\times$, underdetermined) without
the extension.
This restores overdetermination at large $n$ at the cost of
truncating higher-degree moments in the extended equations.

\textbf{Active closure for structured generators.}
The counts above are for closing \emph{all} degree-$(K{+}1)$ moments.
In a concrete moment ODE, only moments that actually appear in the
generator must be closed. For the coupled oscillator network at
$r{=}3$ ($K{=}4$), the only unclosed terms have the form
$m_{\alpha-e_{p_j}+2e_{q_j}}$ from the quadratic force
$-\beta q_j^2$ in $dp_j$.
At $n{=}20$, the full degree-five closure would have
$\binom{24}{5}=42{,}504$ unknowns and the standard augmented system is
underdetermined. Restricting to the dynamically active targets leaves
$14{,}500$ unknown moments and $32{,}800$ Stein equations, a
$2.26\times$ overdetermined system. This is the active closure used in
the high-dimensional coupled oscillator experiments. This also explains
why the $n{=}16,18,20$ runs use $r{=}3$ rather than $r{=}4$: the
full-basis heuristic counts moments that the coupled-oscillator generator
never requests, while $r{=}4$ would increase the score basis at $n{=}20$
from $\binom{23}{3}=1771$ to $\binom{24}{4}=10{,}626$ functions and
the propagated moment budget from degree $4$ to degree $6$.

\begin{remark}[Empirical well-posedness]
\label{rem:stein-extended}
In all full-basis experiments with $n \le 10$ and $r \le 8$
(using the standard augmented system, $|\beta| \in [r, K]$),
$\Lambda_1$ had full column rank at every time step
and we have not encountered a rank-deficient instance.
For the coupled oscillator scaling sweep through $n{=}20$, the
active closure systems were overdetermined after restricting to the
moments requested by the generator. For larger dense closures where the
standard system becomes underdetermined, the extended system
($|\beta|$ up to $K{+}1$) can restore overdetermination.
A rigorous truncation error bound for the extended equations
as a function of the departure from Gaussianity remains open.
\end{remark}

\section{The Fokker--Planck equation and the score evolution PDE}
\label{app:score-pde}

The SKF propagates finitely many moments rather than the full
density.
This section derives the underlying PDEs (Fokker--Planck equation and the score evolution equation) governing the density and its score, providing the continuous-time
foundation for the moment propagation via Dynkin's formula
used in Section~\ref{sec:stein}.

\subsection{The Fokker--Planck equation}\label{app:fp}

The probability density $p(t,x)$ of the state $X_t$ satisfying
the SDE~\eqref{eq:sde} evolves according to the
\emph{Fokker--Planck / forward Kolmogorov} equation~\cite{oksendal2003}:
\begin{equation}\label{eq:app-fp}
    \frac{\partial p}{\partial t}
    = -\sum_i \frac{\partial}{\partial x_i}\bigl(X_i\, p\bigr)
    + \sum_{i,j} \frac{\partial^2}{\partial x_i\,\partial x_j}\bigl(H_{ij}\, p\bigr),
\end{equation}
or in compact notation,
$\partial_t p = -\nabla\cdot(Xp) + \Tr(H\nabla^2 p)$
(for constant $H = \frac{1}{2}hh^\top$).
This is a \emph{linear} PDE in $p$, but it lives in an
infinite-dimensional function space, meaning $p(t, \cdot)$ must be tracked
as a full field over $\R^n$.
Direct numerical computation by finite-difference or finite-element
methods scales as $O(G^n)$ for grid resolution $G$, the same
exponential cost as the partition function~\eqref{eq:med}.

\subsection{Derivation of the moment ODE (Dynkin's formula)}

The moment ODE~\eqref{eq:moment-ode} arises~\cite[Ch.~7]{oksendal2003} by testing the
Fokker--Planck equation against monomials $\phi_\alpha(x) = x^\alpha$.
Multiplying~\eqref{eq:app-fp} by $x^\alpha$ and integrating over $\R^n$,
\begin{equation}
    \frac{d}{dt} m_\alpha
    = \frac{d}{dt} \int x^\alpha \, p(t,x)\, dx
    = \int x^\alpha \, \partial_t p(t,x)\, dx.
\end{equation}
Substituting the Fokker--Planck equation and integrating by parts
(assuming $p$ and its derivatives decay sufficiently fast at infinity),
\begin{align}
    \int x^\alpha \bigl[-\nabla\cdot(Xp)\bigr] dx
    &= \int \nabla(x^\alpha) \cdot X(x)\, p(x)\, dx
    = \E\bigl[\nabla\phi_\alpha \cdot X\bigr], \\
    \int x^\alpha \sum_{i,j}\partial_i\partial_j(H_{ij}p)\, dx
    &= \int \sum_{i,j} H_{ij}(x)\,\partial_i\partial_j(x^\alpha)\, p(x)\, dx
    = \E\bigl[\Tr(H(X_t)\nabla^2\phi_\alpha)\bigr].
\end{align}
The first equality in each line uses integration by parts
(once for the drift term, twice for the diffusion term),
transferring the derivatives from $p$ onto the test function $x^\alpha$.
Combining both terms gives
\begin{equation}
    \frac{d}{dt} m_\alpha
    = \E\bigl[\nabla\phi_\alpha \cdot X + \Tr(H\nabla^2\phi_\alpha)\bigr]
    = \E\bigl[\mathcal{A}\phi_\alpha(X_t)\bigr],
\end{equation}
which is the moment ODE~\eqref{eq:moment-ode}.
This reduces the infinite-dimensional Fokker--Planck PDE to a
system of ODEs for the moments,
at the cost of the closure problem discussed in
Section~\ref{sec:background}.

\subsection{The score evolution PDE}\label{app:score-evo}

An alternative to tracking $p$ is to track the
\emph{score} $s(t,x) = \nabla_x \log p(t,x)$.
We derive its evolution equation from~\eqref{eq:app-fp}
and show that it is equally or more intractable than the Fokker--Planck equation.

\textbf{Step 1: Time derivative of $\log p$.}
The Fokker--Planck equation $\partial_t p = -\nabla \cdot (Xp) + \Tr(H\nabla^2 p)$
gives, after dividing by $p$ and using $\nabla p / p = s$:
\begin{equation}\label{eq:app-dt-logp}
    \partial_t \log p = -(\nabla \cdot X) - X \cdot s + s^\top H s + \Tr(HS),
\end{equation}
where $S = \nabla s = \nabla^2 \log p$ is the \emph{score Hessian}.
To obtain~\eqref{eq:app-dt-logp}, we used
$\partial_i\partial_j p / p = s_i s_j + S_{ij}$
(from differentiating $\partial_i p = p\, s_i$).

\textbf{Step 2: Gradient to get $\partial_t s$.}
Since $s = \nabla \log p$, we have $\partial_t s = \nabla(\partial_t \log p)$.
Taking $\nabla$ of each term in~\eqref{eq:app-dt-logp}:
\begin{align}
    \nabla(-X \cdot s) &= -J_X^\top s - S X, \label{eq:app-term2} \\
    \nabla(s^\top H s) &= 2\, S H s, \label{eq:app-term3} \\
    [\nabla\Tr(HS)]_k &= \textstyle\sum_{ij} H_{ij}\, T_{ijk},
    \quad T_{ijk} := \partial_k S_{ij}, \label{eq:app-term4}
\end{align}
where $J_X$ is the Jacobian of $X$ ($[J_X]_{ij} = \partial_j X_i$)
and $T$ is the \emph{third-order score tensor}.

\textbf{Result: score evolution PDE.}
\begin{equation}\label{eq:app-score-pde}
    \boxed{
    \partial_t s = -\nabla(\nabla \cdot X) - J_X^\top s - SX + 2SHs
    + \nabla[\Tr(HS)].
    }
\end{equation}

\textbf{Why this is intractable.}
The PDE~\eqref{eq:app-score-pde} for the score $s$ (a vector field)
is not self-contained.
The term $2SHs$ involves the score Hessian $S_{ij} = \partial_i s_j$
(a matrix field), and the term $\nabla[\Tr(HS)]$ involves the third-order
tensor $T_{ijk} = \partial_k S_{ij}$ (the spatial derivative of the Hessian).
To evolve $s$, one must know $S$. To evolve $S$, one must know $T$.
To evolve $T$, one must know the fourth derivative of $\log p$,
and so on.
This is an \emph{infinite hierarchy of coupled PDEs} in which each level
requires the next, structurally analogous to the moment
hierarchy~\eqref{eq:moment-ode} where each moment order
couples to the next.
Truncating the score hierarchy at any finite order introduces
an uncontrolled approximation, just as truncating the moment
hierarchy does without a proper closure.

Directly propagating the score therefore does not avoid the
closure problem. It merely reformulates it in score space.
Our approach works with finite-dimensional moment equations rather than
solving either the Fokker--Planck PDE or the score PDE directly. It does
so by propagating \emph{moments} via the
ODE~\eqref{eq:moment-ode} and using score matching~\eqref{eq:lambda-star}
only for density reconstruction from those moments.

\section{Linear-Gaussian specialization and recovery of the information-form Kalman filter}
\label{app:gaussian}

We show that at $r = 2$, Algorithm~\ref{alg:skf} reduces exactly to the
information-form Kalman filter, thus establishing the SKF as a strict
generalization of the classical linear filter.

\subsection{The information form of the Kalman filter}\label{app:info-form}

The standard Kalman filter maintains the mean $\mu$ and covariance $P$
of a Gaussian belief.
The \emph{information form}  is reparameterized in terms of the
\emph{precision} (information) matrix $\Omega := P^{-1}$ and the
\emph{information vector} $\eta := \Omega\mu = P^{-1}\mu$.
The Gaussian density is:
\begin{equation}\label{eq:gauss-info}
    p(x) = \frac{|\Omega|^{1/2}}{(2\pi)^{n/2}}
    \exp\!\Bigl(-\frac{1}{2}(x-\mu)^\top \Omega\,(x-\mu)\Bigr)
    \propto \exp\!\Bigl(-\frac{1}{2}x^\top\Omega\, x + \eta^\top x\Bigr).
\end{equation}
The information form is preferred in multi-sensor fusion because
the update is additive (no matrix inversion), while the prediction
requires inverting $\Omega$, the opposite of the covariance form.

\subsection{Correspondence between SKF and information-form parameters}\label{app:skf-info-corr}

At $r = 2$, the polynomial MED is
$p(x;\lambda) \propto \exp(-\lambda_{\mathbf{0}} - \lambda_1^\top x - x^\top \Lambda_2 x)$,
where $\lambda_{\mathbf{0}} \in \R$ is the constant component, $\lambda_1 \in \R^n$ collects the degree-1 parameters, and
$\Lambda_2 \in \R^{n\times n}_{\mathrm{sym}}$ collects the degree-2 parameters.
Comparing with~\eqref{eq:gauss-info}:
\begin{equation}\label{eq:lambda-info-map}
    \Omega = 2\Lambda_2, \qquad \eta = -\lambda_1, \qquad
    \mu = \Omega^{-1}\eta = -\frac{1}{2}\Lambda_2^{-1}\lambda_1.
\end{equation}
The score is $s(x) = -\nabla(\lambda \cdot \phi) = -\lambda_1 - 2\Lambda_2 x
= \eta - \Omega x = -\Omega(x - \mu)$,
confirming that the score of a Gaussian is affine in $x$ with
slope $-\Omega$ (negative precision).
The score Hessian is $S = \nabla s = -\Omega$ (constant),
and all third-order derivatives vanish ($T_{ijk} = 0$).

\subsection{Score matching recovers the information parameters}\label{app:sm-recovers}

At $r = 2$ in 1D (for clarity), the score matching system
$A\lambda = b$ (Proposition~\ref{prop:sm}) with basis $(x, x^2)$
and moments $m_0 = 1$, $m_1 = \mu$, $m_2 = \mu^2 + P$ gives:
\[
    \begin{pmatrix} 1 & 2m_1 \\ 2m_1 & 4m_2 \end{pmatrix}
    \begin{pmatrix} \lambda_1 \\ \lambda_2 \end{pmatrix}
    = \begin{pmatrix} 0 \\ 2 \end{pmatrix}.
\]
Solving this yields
\begin{equation}
    \lambda_1 = -\frac{m_1}{m_2 - m_1^2} = -\frac{\mu}{P} = -\eta,
    \qquad
    \lambda_2 = \frac{1}{2(m_2 - m_1^2)} = \frac{1}{2P} = \frac{\Omega}{2}.
\end{equation}
Hence, score matching recovers the information vector $\eta$ and the precision matrix $\Omega$ exactly
from the first two moments.

\subsection{Prediction step reduces to Riccati}\label{app:riccati}

For affine dynamics $dx = (Ax+b)\,dt + h\,dW$ with $H = \frac{1}{2}hh^\top$ constant,
the score PDE~\eqref{eq:app-score-pde} with $S = -\Omega$, $T = 0$,
$J_X = A$, $\nabla(\nabla\cdot X) = 0$ reduces to
$\partial_t s = -A^\top s + \Omega(Ax+b) - 2\Omega Hs$.
Substituting $s = -\Omega(x - \mu)$ and matching the coefficient of
$(x - \mu)$ and the constant term:
\begin{align}
    \dot{\Omega} &= -A^\top \Omega - \Omega A - 2\Omega H\Omega,
    \label{eq:app-riccati-omega} \\
    \dot{\mu} &= A\mu + b. \label{eq:app-mean-ode}
\end{align}
Equation~\eqref{eq:app-riccati-omega} is the \textbf{continuous-time
information-form Riccati equation}.
Using $\Omega = P^{-1}$ and $\dot{\Omega} = -P^{-1}\dot{P}P^{-1}$,
one recovers the standard covariance Riccati
$\dot{P} = AP + PA^\top + 2H$.

In terms of the information vector $\eta = \Omega\mu$:
\begin{equation}\label{eq:app-riccati-eta}
    \dot{\eta} = \dot{\Omega}\mu + \Omega\dot{\mu}
    = (-A^\top\Omega - \Omega A - 2\Omega H\Omega)\mu + \Omega(A\mu + b)
    = -(A^\top + 2\Omega H)\eta + \Omega b.
\end{equation}

\subsection{Update step recovers information-form Kalman}\label{app:kalman-update}

At measurement time with $z = Cx + v$, $v \sim \mathcal{N}(0, R)$,
the likelihood score is
$\nabla_x \log p(z|x) = C^\top R^{-1}(z - Cx)
= C^\top R^{-1}z - C^\top R^{-1}Cx$.
The score addition $s^+ = s^- + \nabla_x \log p(z|x)$ gives:
\[
    \underbrace{-\Omega^+ x + \eta^+}_{s^+}
    = \underbrace{-\Omega^- x + \eta^-}_{s^-}
    + \underbrace{-C^\top R^{-1}C\, x + C^\top R^{-1}z}_{\nabla_x \log p(z|x)}.
\]
Matching coefficients, we obtain
\begin{equation}\label{eq:info-kalman-update}
    \boxed{
    \Omega^+ = \Omega^- + C^\top R^{-1}C, \qquad
    \eta^+ = \eta^- + C^\top R^{-1}z.
    }
\end{equation}
These are \textbf{exactly the information-form Kalman filter update equations}.
The posterior mean is $\mu^+ = (\Omega^+)^{-1}\eta^+$.

In the SKF framework, the same update is written as
$\lambda^+ = \lambda^- + \lambda_{\mathrm{lik}}$
(eq.~\eqref{eq:conjugate}), where
$\lambda_{\mathrm{lik}} = (-C^\top R^{-1}z,\; \frac{1}{2}C^\top R^{-1}C)$
in the $(\lambda_1, \Lambda_2)$ parameterization.
The correspondence is exact, and the SKF conjugate update \emph{is} the
information-form Kalman update, expressed in natural parameters.
We summarize the preceding calculations as follows.

\begin{proposition}[Kalman filter recovery]\label{thm:kf}
Consider an affine-Gaussian system with dynamics
$dx = (Ax+b)\,dt + h\,dW$, where $A \in \R^{n \times n}$, $b \in \R^n$,
and $h \in \R^{n \times n_w}$ are constant, together with an affine
measurement model $z = Cx + v$, $v \sim \mathcal{N}(0, R)$.
Then Algorithm~\ref{alg:skf} at $r = 2$ produces the same
precision matrix $\Omega(t)$ and information vector $\eta(t)$
as the continuous-discrete information-form Kalman filter.
\end{proposition}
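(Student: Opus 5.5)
The plan is an induction over the predict--update cycles of Algorithm~\ref{alg:skf}. The invariant is: at the start of each cycle the SKF moments $m^{(\le 2)}$ are exactly the mean and second moments of the information-form Kalman belief $\mathcal{N}(\Omega^{-1}\eta,\Omega^{-1})$. Hence the score-matching parameters satisfy \eqref{eq:lambda-info-map}. At $r=2$ the budget is $K=2r-2=2$, so only moments up to degree two are carried.

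The base case uses Proposition~\ref{prop:sm}. With moments $m_1=\mu$ and $m_2=\mu\mu^\top+P$, the system $A\lambda=b$ is the multivariate version of the computation in Appendix~\ref{app:sm-recovers}. I would write $A$ in block form, with blocks built from $\E[\nabla q\,\nabla q^\top]$ for $q$ ranging over $x_i$ and $x_ix_j$. I would then check directly that $\lambda_1=-\eta$ and $\Lambda_2=\Omega/2$ solve it. Uniqueness follows from positive definiteness of $A$ (proof of Proposition~\ref{prop:sm}), since a nondegenerate Gaussian has full support. A shorter route is Theorem~\ref{thm:sm-maxent}: the moments are those of a Gaussian, which is a member of $\mathcal{P}_2$, so score matching returns its natural parameters exactly.

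For the prediction step, affine drift and constant $h$ give $d_X=1$ and $d_h=0$, so $\bar d=0$. The moment ODE \eqref{eq:moment-ode} therefore closes at degree two with no Stein closure. It reads $\dot m_1=Am_1+b$ and $\dot m_2=Am_2+m_2A^\top+bm_1^\top+m_1b^\top+2H$. Subtracting $\dot{(m_1m_1^\top)}$ gives $\dot P=AP+PA^\top+2H$ and $\dot\mu=A\mu+b$. These are the covariance Riccati equation and mean ODE, which Appendix~\ref{app:riccati} shows are equivalent to \eqref{eq:app-riccati-omega} and \eqref{eq:app-riccati-eta}. The refit in line~6 then yields $\lambda^-$ matching $(\Omega^-,\eta^-)$ by the base-case argument.

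For the update step, \eqref{eq:info-kalman-update} shows that the conjugate addition $\lambda^+=\lambda^-+\lambda_{\mathrm{lik}}$ is exactly the information-form update. The remaining step is to show that the Stein recovery \eqref{eq:stein-recovery} returns the true Gaussian moments of $(\Omega^+,\eta^+)$, so the invariant persists to the next cycle. I expect this to be the main obstacle, because the recovery is described as a truncated least-squares projection, and truncation could in principle bias it.

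I would argue it as follows. Take the rows with $|\beta|=1$, i.e.\ $\beta=e_i$. Then \eqref{eq:stein-recovery} gives $-\eta_i+\Omega_{i\cdot}\,m_1=0$, so $m_1=\Omega^{-1}\eta$. These rows involve only moments of degree $\le r+\ell-1=2\le K$, so they are untruncated.

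Next take the rows with $\beta=e_i+e_j$, which involve degree-2 moments only. They give $\Omega\,m_2-\eta m_1^\top=I$ (up to symmetrization), hence $m_2=\Omega^{-1}+\mu\mu^\top$. These rows are also exact, and together they determine all unknowns of degree $\le 2$.

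Any additional higher-$|\beta|$ rows involve degree-3 moments. Those are truncated, and they could perturb a least-squares solution. So I must check which rows the $r=2$ implementation actually includes. If it includes such rows, I would note that the exact Gaussian moments satisfy them too, since Gaussian third centered moments vanish. With centered truncation this makes the truncated rows consistent, the residual is zero, and the least-squares solution is exact.

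Finally, the line-10 refit maps these moments back to $(\Omega^+,\eta^+)$ by the base-case argument, which closes the induction.
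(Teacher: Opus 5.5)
Your base case and prediction step are fine. Propagating through the closed degree-two Dynkin system is closer to what Algorithm~\ref{alg:skf} actually does than the paper's argument, which substitutes $s=-\Omega(x-\mu)$ into the score PDE. The paper's proof never treats the posterior moment recovery; it implicitly takes $m^+$ to be the exact moments of $\lambda^+$. You correctly single out that recovery as the crux, but your treatment of it has a genuine gap. With $\lambda_{e_i}=-\eta_i$ and $2\Lambda_2=\Omega$, the row $(\beta,i)$ of \eqref{eq:stein-recovery} reads $-\eta_i m_\beta+\sum_k\Omega_{ik}m_{\beta+e_k}=\beta_i m_{\beta-e_i}$. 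Your exact rows are therefore misindexed:
\begin{itemize}
\item $\Omega m_1=\eta$ comes from $\beta=\mathbf{0}$ (test function $1$), which the recovery system excludes because it requires $|\beta|\ge 1$.
\item The row $\beta=e_i$ in direction $i$ is $-\eta_i\mu_i+(\Omega m_2)_{ii}=1$.
\item The full relation $\Omega m_2-\eta m_1^\top=I$ needs $\beta=e_j$ in every direction $i$, including the $\beta_i=0$ rows that the directional system drops.
\item The rows $\beta=e_i+e_j$ contain third moments (degree $r+\ell-1=3$, by the very formula you quote).
\end{itemize}
So with the paper's row set only $n$ rows are exact, against $n+n(n+1)/2$ unknowns. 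The truncated $|\beta|=2$ rows genuinely shape the least-squares solution.

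Your fallback also fails. You argue that the truncated rows are satisfied by the exact Gaussian moments because Gaussian third centered moments vanish. But the working coordinates are centered at the predicted mean $\mu^-$, the only mean available before recovery. The posterior mean is $\mu^-+\delta$ with $\delta=(\Omega^+)^{-1}C^\top R^{-1}(z-C\mu^-)$. In those coordinates the posterior third moments are $\delta_i\delta_j\delta_k+\delta_iP^+_{jk}+\delta_jP^+_{ik}+\delta_kP^+_{ij}$ with $P^+=(\Omega^+)^{-1}$, which are nonzero whenever the innovation is.

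Concretely, for $n=1$ the system is $-\eta m_1+\Omega m_2=1$ together with $-\eta m_2=2m_1$ (after setting $m_3=0$). Its unique solution is $m_1=-\eta/(2\Omega+\eta^2)$, $m_2=2/(2\Omega+\eta^2)$. This differs from the true $m_1=\eta/\Omega$, $m_2=(\Omega+\eta^2)/\Omega^2$ whenever $\eta\neq 0$. The subsequent refit then does not return $(\Omega^+,\eta^+)$, and your induction invariant is lost after the first nontrivial update.

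To close the gap you must fix the $r=2$ recovery explicitly. Use only the untruncated component-wise rows with test functions $1,x_1,\dots,x_n$ in every direction $i$ (i.e.\ allow $\beta=\mathbf{0}$ and $\beta_i=0$). These give $\Omega^+\mu^+=\eta^+$ and $\Omega^+m_2^+-\eta^+\mu^{+\top}=I$, whose unique solution is the exact Gaussian moments. Alternatively, read $m^+$ off $(\Omega^+,\eta^+)$ in closed form. With that stipulation the rest of your induction goes through. With the directional, truncated rows described in Section~\ref{sec:recovery}, exactness generally fails.
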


\begin{proof}
Each step is verified in the preceding subsections. Score matching recovers
$(\eta, \Omega)$ from the first two moments (Section~\ref{app:sm-recovers}),
the score PDE reduces to the information-form Riccati equation
(Section~\ref{app:riccati}),
and the conjugate score update reproduces the information-form
Kalman measurement update (Section~\ref{app:kalman-update}).
\end{proof}

The comparison between the SKF and the information-form Kalman filter is summarized in the table below.

\begin{center}
\renewcommand{\arraystretch}{1.2}
\small
\begin{tabular}{ll}
\toprule
\textbf{SKF quantity} & \textbf{Information-form Kalman} \\
\midrule
Score $s(x) = -J_\phi\lambda$ & $-\Omega(x-\mu) = -\Omega x + \eta$ \\
Score Hessian $S = -\Omega$ & Negative precision \\
Score matching solve & $(\Omega, \eta)$ from $(m_1, m_2)$ \\
$\lambda^+ = \lambda^- + \lambda_{\mathrm{lik}}$ (conjugate) &
  $\Omega^+ = \Omega^- + C^\top R^{-1}C$, $\eta^+ = \eta^- + C^\top R^{-1}z$ \\
Score PDE~\eqref{eq:app-score-pde} & Riccati ODE for $\Omega$ \\
Stein identity (trivial at $r=2$) & Moment-covariance relation \\
\bottomrule
\end{tabular}
\end{center}

\section{Iterative consistency refinement via Stein's method of moments}
\label{app:smom}

The conjugate measurement update
$\lambda^+ = \lambda^- + \lambda_{\mathrm{lik}}$
(Section~\ref{sec:update})
produces posterior natural parameters $\lambda^+$, but the subsequent
Stein moment recovery (Section~\ref{sec:recovery}) yields only
approximate moments $\hat{m}^+$ due to truncation.
In general, $\hat{m}^+$ and $\lambda^+$ are not mutually consistent:
re-fitting score matching to $\hat{m}^+$ produces
$\hat\lambda \ne \lambda^+$.
This section formalizes the iterative refinement that enforces
consistency, drawing on recent work connecting score matching to
Stein's method of moments (SMoM).

\subsection{Background: Stein's method of moments}

Stein's method~\cite{stein1972} characterizes a distribution
$p(x;\theta)$ through an operator $\mathcal{A}_\theta$ --
called a \emph{Stein operator} -- that satisfies
$\E_{p(\cdot;\theta)}[\mathcal{A}_\theta g(X)] = 0$
for a sufficiently rich class of test functions $g$.
Ebner et al.~\cite{ebner2025} developed Stein's method of moments
(SMoM), which derives parameter estimators by solving the empirical
Stein equations for selected test functions.
For the polynomial MED
$p(x;\lambda) \propto \exp(-\lambda \cdot \phi(x))$,
the divergence-based Stein operator is
$\mathcal{A}_\lambda g(x)
= \nabla \cdot g(x) - (\nabla E(x;\lambda)) \cdot g(x)$,
where $E(x;\lambda) = \lambda \cdot \phi(x)$ is the energy.
The condition $\E[\mathcal{A}_\lambda g] = 0$ reduces to the
Stein identity (Proposition~\ref{prop:stein}) via integration by parts.

An SMoM estimator solves the empirical Stein equations
$\frac{1}{n}\sum_{i=1}^n \mathcal{A}_\theta g_k(X_i) = 0$
for selected test functions $\{g_k\}$.
Different choices of $g_k$ yield different estimators with different
asymptotic variances.

\subsection{Score matching as the center of SMoM}

Kume and Walker~\cite{kume2026} proved that the score matching
estimator $\hat\lambda_{\mathrm{SM}}$ is a special case of SMoM,
corresponding to the choice $g_k(x) = \nabla \phi_k(x)$.
Nagai and Yano~\cite{nagai2026} showed that SMoM estimators admit a
canonical decomposition centered at score matching. In their setting,
\begin{equation}\label{eq:smom-decomp}
  \hat\lambda_{\mathrm{SMoM}}
  = \hat\lambda_{\mathrm{SM}} + \Delta(\hat\lambda_{\mathrm{SM}}),
\end{equation}
where $\Delta$ is an augmentation term that depends on additional
Stein equations beyond the score matching ones.
They use this decomposition to construct SMoM augmentations that can
improve the asymptotic variance of score matching.

\subsection{Application to the SKF measurement update}

In the SKF, we do not estimate $\lambda$ from samples.
Instead, after the conjugate update $\lambda^+ = \lambda^- + \lambda_{\mathrm{lik}}$,
we need to find moments $m^+$ consistent with $\lambda^+$.
The Stein identity with $\lambda = \lambda^+$ provides a linear system
$\Lambda_1(\lambda^+) m^+ = c_1(\lambda^+)$
(equation~\eqref{eq:stein-recovery}), but the truncated system is
approximate.

The iterative refinement enforces a fixed-point condition on the
$(\lambda, m)$ pair.
Define the map $\Phi\colon \lambda \mapsto m \mapsto \lambda'$:
\begin{enumerate}[label=(\roman*), nosep]
  \item $m = \mathrm{SteinRecover}(\lambda)$:
        solve the truncated Stein system~\eqref{eq:stein-recovery}
        for the moments.
  \item $\lambda_{\mathrm{refit}} = \mathrm{SM}(m)$:
        re-fit the score matching parameters to the recovered moments.
  \item $\lambda' = \lambda_{\mathrm{refit}} + \lambda_{\mathrm{lik}}$:
        re-apply the measurement update.
\end{enumerate}
A fixed point $\lambda^*$ of the iteration
$\lambda^{(k+1)} = \Phi(\lambda^{(k)})$ satisfies
$\mathrm{SM}(\mathrm{SteinRecover}(\lambda^*)) + \lambda_{\mathrm{lik}} = \lambda^*$,
i.e., score matching applied to the moments recovered from $\lambda^*$
returns $\lambda^* - \lambda_{\mathrm{lik}}$.

In the language of~\cite{nagai2026}, step (ii) projects back to the
score matching solution (the ``center'' of the SMoM family),
and step (iii) re-applies the measurement.
The net effect is that each iteration reduces the inconsistency
introduced by truncation in step (i).

\textbf{Empirical convergence.}
We do not yet have a convergence proof for this iteration.
Empirically, in the full-basis experiments ($n \le 10$, $r \le 8$),
2--3 iterations reduce the Stein recovery residual by 2--3 orders of
magnitude (from ${\sim}10^{-5}$ to ${\sim}10^{-8}$ on the Duffing
oscillator benchmark). In the active-closure coupled oscillator sweep
through $n{=}20$, one refinement step was used for speed, and we have
not observed divergence.
The per-iteration cost is one reassembled score-matching solve
($O(M^3)$ dense, lower if a structured or regularized solve is used)
and one Stein recovery solve. In the reported runs this refinement cost
is smaller than the prediction step, but it is not included in the
stand-alone score-matching $O(M^3)$ density-fit cost.

\section{Degree accounting and moment budget}
\label{app:degree}

This appendix records the degree bookkeeping used throughout the
experiments. Suppose the drift has polynomial degree $d_X$ and the
diffusion coefficient has polynomial degree $d_h$. Applying the generator
to a monomial of total degree $k$ gives two possible sources of degree
growth. The drift term differentiates once and then multiplies by $X$,
so it reaches degree $k+d_X-1$. The diffusion term differentiates twice
and multiplies by $h h^\top$, so it reaches degree $k+2d_h-2$.
Thus the moment equation for degree $k$ can involve moments up to
degree $k+\bar d$, where
\[
    \bar{d} = \max(d_X{-}1,\,2d_h{-}2).
\]
This number is the amount by which the moment hierarchy reaches beyond
the moments currently being propagated. It therefore gives the number
of Stein closure layers needed. The table gives the common cases used
in the examples.

\begin{center}
\renewcommand{\arraystretch}{1.3}
\begin{tabular}{p{4cm}cccc}
\toprule
\textbf{System type} & $d_X$ & $d_h$ & $\bar{d}$ & \textbf{Closure} \\
\midrule
Linear + additive noise & 1 & 0 & 0 & Exact \\
Linear + linear mult.\ noise & 1 & 1 & 0 & Exact \\
Quadratic + additive & 2 & 0 & 1 & 1 layer \\
Bilinear & 2 & 0--1 & 1 & 1 layer \\
Cubic + additive & 3 & 0 & 2 & 2 layers \\
\bottomrule
\end{tabular}
\end{center}

\textbf{Moment budget.}
Score matching (Proposition~\ref{prop:sm}) requires moments up to degree
$K = 2r{-}2$.
This fixes the natural propagated budget for an order-$r$ polynomial
score model. During prediction, however, the ODE for the degree-$K$
moments may request moments beyond this budget, up to degree
$K+\bar d$. The Stein closure is applied one degree at a time: the first
layer estimates degree $K+1$ moments from moments of degree $\le K$ and
the current score parameters, the next layer estimates degree $K+2$ from
the enlarged collection, and so on. When $\bar d=0$, no higher moments
are requested and the moment hierarchy closes exactly at the tracked
degree. When $\bar d>0$, the prediction step needs exactly $\bar d$
sequential closure layers.

\section{Generator calculations for the experimental systems}
\label{app:generators}

\subsection{Duffing oscillator}

\textbf{SDE.}
State $x = (x_1, x_2)$.
\begin{align}
    dx_1 &= x_2\, dt, \nonumber\\
    dx_2 &= (-\delta x_2 - \alpha x_1 - \beta x_1^2)\, dt
           + \sigma\, dW,
\end{align}
with parameters $\delta$ (damping), $\alpha$ (linear stiffness),
$\beta$ (quadratic stiffness), and $\sigma$ (noise intensity).

\textbf{Drift and diffusion classification.}
\begin{align}
    X(x) &= (x_2,\; -\delta x_2 - \alpha x_1 - \beta x_1^2)^\top,
    \nonumber\\
    h &= (0,\; \sigma)^\top.
\end{align}
Since $X$ has degree $d_X = 2$ (from the $\beta x_1^2$ term)
and $h$ is constant ($d_h = 0$), we have
$\bar{d} = \max(1, -2) = 1$.

\textbf{Generator.}
Applied to $\phi_\alpha = x_1^a x_2^b$:
\begin{align}
    \mathcal{A}\phi_\alpha
    &= \nabla\phi_\alpha \cdot X + \Tr(H\nabla^2\phi_\alpha)
    \nonumber\\
    &= a\, x_1^{a-1} x_2^b \cdot x_2
    + b\, x_1^a x_2^{b-1} \cdot
    (-\delta x_2 - \alpha x_1 - \beta x_1^2)
    + \frac{\sigma^2}{2}\, b(b{-}1)\, x_1^a x_2^{b-2}
    \nonumber\\
    &= a\, x_1^{a-1} x_2^{b+1}
    - b\delta\, x_1^a x_2^b
    - b\alpha\, x_1^{a+1} x_2^{b-1}
    \nonumber\\
    &\quad
    - b\beta\, x_1^{a+2} x_2^{b-1}
    + \frac{\sigma^2}{2}\, b(b{-}1)\, x_1^a x_2^{b-2}.
\end{align}
All terms have degree $|\alpha| = a + b$ except
$x_1^{a+2} x_2^{b-1}$, which has degree $|\alpha| + 1$.
This is the unclosed term that the Stein closure
(Section~\ref{sec:stein}) resolves.

\subsection{Lotka-Volterra (stochastic predator-prey)}

\textbf{SDE.}
State $x = (x_1, x_2)$ (prey, predator).
\begin{align}
    dx_1 &= (\alpha x_1 - \beta x_1 x_2)\, dt + \sigma_1\, dW_1,
    \nonumber\\
    dx_2 &= (-\gamma x_2 + \delta x_1 x_2)\, dt + \sigma_2\, dW_2,
\end{align}

\textbf{Drift and diffusion classification.}
Bilinear drift $X(x) = (\alpha x_1 - \beta x_1 x_2,\;
-\gamma x_2 + \delta x_1 x_2)^\top$,
$d_X = 2$ (from $x_1 x_2$), $d_h = 0$, $\bar{d} = 1$.

\textbf{Generator.}
Applied to $\phi_\alpha = x_1^a x_2^b$:
\begin{align}
    \mathcal{A}\phi_\alpha
    &= a\alpha\, x_1^{a} x_2^{b}
    - a\beta\, x_1^{a} x_2^{b+1}
    - b\gamma\, x_1^{a} x_2^{b}
    + b\delta\, x_1^{a+1} x_2^{b}
    \nonumber\\
    &\quad
    + \frac{\sigma_1^2}{2}\, a(a{-}1)\, x_1^{a-2} x_2^{b}
    + \frac{\sigma_2^2}{2}\, b(b{-}1)\, x_1^{a} x_2^{b-2}.
\end{align}
Taking expectations: $\frac{d}{dt}m_{(a,b)} = \E[\mathcal{A}\phi_\alpha]$.
All terms yield moments of degree $|\alpha|$ except
$x_1^{a} x_2^{b+1}$ and $x_1^{a+1} x_2^{b}$, which have
degree $|\alpha| + 1$. These are the unclosed terms resolved
by Stein closure.

\textbf{Centered coordinates.}
In strictly mean-centered coordinates $z=x-\mu(t)$, the reference evolves
as $\dot{\mu}=\E[X(x)]$, not generally as $X(\mu)$ for nonlinear drift.
Thus the centered drift is $X(z+\mu)-\E[X(x)]$ and has zero expectation.
For Lotka--Volterra this gives
\begin{align}
    dz_1 &= \bigl[(\alpha - \beta\mu_2) z_1 - \beta\mu_1 z_2
           - \beta (z_1 z_2 - m^z_{(1,1)})\bigr]\, dt + \sigma_1\, dW_1,
    \nonumber\\
    dz_2 &= \bigl[\delta\mu_2 z_1 + (-\gamma + \delta\mu_1) z_2
           + \delta (z_1 z_2 - m^z_{(1,1)})\bigr]\, dt + \sigma_2\, dW_2.
\end{align}
The linear coefficients $(\alpha - \beta\mu_2)$ etc.\ depend on
the current mean $\mu(t)$ and change at each time step.
The purely quadratic terms $\beta z_1 z_2$ and $\delta z_1 z_2$
remain unclosed at $|\alpha| = K$ and are resolved by Stein closure.

\subsection{Coupled Duffing oscillators}

\textbf{SDE.}
$N$ oscillators with positions $q_i$ and momenta $p_i$,
$i = 1, \ldots, N$. State dimension $n = 2N$.
\begin{align}
    dq_i &= p_i\, dt,
    \nonumber\\
    dp_i &= \bigl[-\gamma p_i - \alpha q_i - \beta q_i^2
           + \kappa(q_{i+1} {-} 2q_i {+} q_{i-1})\bigr]\, dt
           + \sigma\, dW_i,
\end{align}
with free boundary conditions ($q_0 = q_1$, $q_{N+1} = q_N$).

\textbf{Drift and diffusion classification.}
\begin{align}
    X_i(x) &= p_i, \quad i = 1, \ldots, N \quad\text{(kinematics)},
    \nonumber\\
    X_{N+j}(x) &= -\gamma p_j - \alpha q_j - \beta q_j^2
    + \kappa(q_{j+1} {-} 2q_j {+} q_{j-1}),
    \nonumber\\
    h &= \begin{bmatrix}0_{N\times N}\\ \sigma I_N\end{bmatrix}
    \quad\text{(independent noise on momenta only)}.
\end{align}
Since $X$ has degree $d_X = 2$ (from $\beta q_j^2$) and $h$ is constant
($d_h = 0$), we have $\bar{d} = \max(1, -2) = 1$.

\textbf{Diffusion tensor.}
$H = \frac{\sigma^2}{2}\,
\mathrm{diag}(0, \ldots, 0, 1, \ldots, 1)$
(nonzero only on the momentum block).

\textbf{Generator.}
Applied to $\phi_\alpha = \prod_{i=1}^{N} q_i^{a_i} p_i^{b_i}$,
with contributions from each coordinate.
For the kinematic dimensions ($dq_i = p_i\, dt$):
\begin{equation}
    a_i \cdot q_i^{a_i - 1} p_i^{b_i + 1} \prod_{j \ne i} q_j^{a_j} p_j^{b_j}
    \quad\Longrightarrow\quad
    a_i\, m_{\alpha - e_{q_i} + e_{p_i}}.
\end{equation}
For the momentum dimensions ($dp_j$), the drift terms give:
\begin{align}
    &{-}\gamma\, b_j\, m_\alpha
    \;-\; \alpha_{\mathrm{stiff}}\, b_j\, m_{\alpha - e_{p_j} + e_{q_j}}
    \;-\; \beta\, b_j\, m_{\alpha - e_{p_j} + 2e_{q_j}}
    \nonumber\\
    &+\; \kappa\, b_j\bigl(
    m_{\alpha - e_{p_j} + e_{q_{j-1}}}
    + m_{\alpha - e_{p_j} + e_{q_{j+1}}}
    - 2\, m_{\alpha - e_{p_j} + e_{q_j}}\bigr),
\end{align}
and the diffusion gives
$\frac{\sigma^2}{2} b_j(b_j{-}1)\, m_{\alpha - 2e_{p_j}}$.
The term $m_{\alpha - e_{p_j} + 2e_{q_j}}$ has degree
$|\alpha| + 1$ and is unclosed at $|\alpha| = K$.

\textbf{Centered coordinates.}
In strictly mean-centered coordinates $z=x-\mu(t)$ with
$\dot{\mu}=\E[X(x)]$,
the momentum equation for oscillator $j$ becomes
\begin{align}
    dz_{N+j} &= \bigl[-\gamma z_{N+j}
                - (\alpha + 2\beta\mu_j)\, z_j
                - \beta (z_j^2 - m^z_{2e_{q_j}})
    \nonumber\\
    &\quad\quad
                + \kappa(z_{j-1} + z_{j+1} - 2z_j)\bigr]\, dt
                + \sigma\, dW_j.
\end{align}
The centered linear coefficient for $z_j$ is
$-(\alpha + 2\beta\mu_j)$, not $-\alpha$:
this linearization around $\mu$ is essential for numerical stability.

\subsection{Double-well Langevin}
\label{app:double-well-generator}

\textbf{SDE.}
State $x = (x_1, x_2)$.
The gradient Langevin dynamics is
$dX = -\nabla V(X)\,dt + \sigma\,dW$
with the separable quartic potential
\begin{equation}\label{eq:double-well-potential}
    V(x) = \frac{1}{4}x_1^4 + \frac{1}{4}x_2^4
         - \frac{1}{2}x_1^2 - \frac{1}{2}x_2^2.
\end{equation}

\textbf{Drift and diffusion classification.}
The drift is
$X(x) = -\nabla V = (x_1 - x_1^3,\; x_2 - x_2^3)^\top$,
$h = \sigma I_2$.
Since $X$ has degree $d_X = 3$ (from $x_i^3$) and $h$ is constant
($d_h = 0$), we have $\bar{d} = \max(d_X - 1,\; 2d_h - 2) = 2$.
This is the highest excess degree among all experimental systems
and requires \emph{two} Stein closure layers.

\textbf{Generator.}
Applied to $\phi_\alpha = x_1^a x_2^b$:
\begin{align}\label{eq:double-well-generator}
    \mathcal{A}\phi_\alpha
    &= a(x_1 - x_1^3)\, x_1^{a-1} x_2^b
     + b(x_2 - x_2^3)\, x_1^a x_2^{b-1}
    \nonumber\\
    &\quad + \frac{\sigma^2}{2}\bigl[
      a(a{-}1)\, x_1^{a-2} x_2^b
    + b(b{-}1)\, x_1^a x_2^{b-2}\bigr].
\end{align}
Taking expectations:
\begin{align}\label{eq:double-well-moment-ode}
    \dot{m}_\alpha
    &= a\, m_\alpha - a\, m_{\alpha + 2e_1}
     + b\, m_\alpha - b\, m_{\alpha + 2e_2}
    \nonumber\\
    &\quad + \frac{\sigma^2}{2}\bigl[
      a(a{-}1)\, m_{\alpha - 2e_1}
    + b(b{-}1)\, m_{\alpha - 2e_2}\bigr].
\end{align}
The terms $m_{\alpha + 2e_1}$ and $m_{\alpha + 2e_2}$ have degree
$|\alpha| + 2$, confirming $\bar{d} = 2$.
At $|\alpha| = K = 2r{-}2$, the first Stein closure layer resolves
degree-$(K{+}1)$ moments, and a second layer resolves degree-$(K{+}2)$.

Because the potential is separable ($V = V_1(x_1) + V_2(x_2)$),
the drift couples coordinates only through the shared moment vector.
The independent noise ($H = \frac{\sigma^2}{2}I_2$) preserves
this separability: if the initial density is a product
$p_0(x) = p_1(x_1)\,p_2(x_2)$, then $p(t,x)$ remains a product
for all $t$.
In the experiments (Section~\ref{sec:experiments}),
we use a near-product initial condition and propagate
to $T = 3$~s, by which time the density has split into 4 modes
at $(\pm 1, \pm 1)$.

\subsection{3D tracer advection}
\label{app:generator-fluid}

\textbf{SDE.}
State $X = (x_1, x_2, x_3) \in \R^3$, velocity field
\begin{equation}
    u(X) = \begin{pmatrix}
        \varepsilon x_1 + \omega x_2 \\
        -\omega x_1 + \varepsilon x_2 \\
        -2\varepsilon x_3 + \alpha(x_1^2 + x_2^2)
    \end{pmatrix}, \qquad
    dX = u(X)\,dt + \sqrt{2\kappa}\,dW.
\end{equation}

\textbf{Drift classification.}
$u_1$ and $u_2$ are linear in $X$, while $u_3$ is quadratic via
$\alpha(x_1^2 + x_2^2)$.
Drift degree $d_X = 2$, noise additive ($d_h = 0$),
$\bar{d} = d_X - 1 = 1$.
One Stein closure layer.
Incompressibility: $\nabla \cdot u = \varepsilon + \varepsilon - 2\varepsilon = 0$.

\textbf{Generator.}
For $\phi(x) = x_1^a x_2^b x_3^c$:
\begin{align}
    \mathcal{A}\phi &= a(\varepsilon x_1 + \omega x_2)x_1^{a-1}x_2^b x_3^c
    + b(-\omega x_1 + \varepsilon x_2)x_1^a x_2^{b-1}x_3^c
    \nonumber\\
    &\quad + c\bigl[-2\varepsilon x_3 + \alpha(x_1^2{+}x_2^2)\bigr]
    x_1^a x_2^b x_3^{c-1}
    \nonumber\\
    &\quad + \kappa\bigl[a(a{-}1)x_1^{a-2}x_2^b x_3^c
    + b(b{-}1)x_1^a x_2^{b-2}x_3^c
    + c(c{-}1)x_1^a x_2^b x_3^{c-2}\bigr].
\end{align}

\textbf{Centered moment ODE.}
In $z = x - \mu(t)$ with $\dot{\mu} = u(\mu)$:
\begin{align}
    f_1^{\text{cen}} &= \varepsilon z_1 + \omega z_2, \nonumber\\
    f_2^{\text{cen}} &= -\omega z_1 + \varepsilon z_2, \nonumber\\
    f_3^{\text{cen}} &= -2\varepsilon z_3 + \alpha(z_1^2 + z_2^2
    + 2\mu_1 z_1 + 2\mu_2 z_2).
\end{align}
The $z_1, z_2$ equations are linear and close without Stein closure.
Only the $z_3$ equation has quadratic terms ($\alpha z_1^2$, $\alpha z_2^2$)
that couple to the closure at degree $K$.

\subsection{SE(2) kinematics}

\textbf{SDE.}
State $x = (c, s, p_x, p_y)$ with $c = \cos\theta$, $s = \sin\theta$,
$c^2 + s^2 = 1$.
The SDE in embedded coordinates is
\begin{align}
    dc  &= \Bigl(-\omega s - \frac{\sigma^2}{2} c\Bigr) dt
         - \sigma s\, dW, \nonumber\\
    ds  &= \Bigl(\omega c - \frac{\sigma^2}{2} s\Bigr) dt
         + \sigma c\, dW, \nonumber\\
    dp_x &= v c\, dt, \quad dp_y = v s\, dt,
\end{align}
where $\omega$ is the angular velocity, $v$ is the forward speed,
and $\sigma$ is the heading noise intensity.

\textbf{Drift and diffusion classification.}
Both are linear in $x$:
$d_X = 1$, $d_h = 1$, $\bar{d} = 0$.

\textbf{Diffusion tensor.}
\begin{equation}
    H = \frac{\sigma^2}{2}
    \begin{pmatrix}
        s^2 & -cs & 0 & 0 \\
        -cs & c^2 & 0 & 0 \\
        0   & 0   & 0 & 0 \\
        0   & 0   & 0 & 0
    \end{pmatrix}.
\end{equation}

\textbf{Generator.}
For $\phi_\alpha = c^{a_1} s^{a_2} p_x^{a_3} p_y^{a_4}$,
the generator $\mathcal{L}\phi_\alpha = \sum_i X_i \partial_i \phi_\alpha
+ \sum_{ij} H_{ij} \partial_i \partial_j \phi_\alpha$ gives the moment ODE
$\dot{m}_\alpha = \E[\mathcal{L}\phi_\alpha]$:
\begin{align}
    \dot{m}_\alpha
    &= \omega\bigl(a_2\, m_{\alpha+e_c-e_s}
    - a_1\, m_{\alpha-e_c+e_s}\bigr)
    \nonumber\\
    &\quad - \frac{\sigma^2}{2}(a_1{+}a_2)\, m_\alpha
    + v\bigl(a_3\, m_{\alpha+e_c-e_{p_x}}
    + a_4\, m_{\alpha+e_s-e_{p_y}}\bigr)
    \nonumber\\
    &\quad + \frac{\sigma^2}{2}\bigl[
    a_1(a_1{-}1)\, m_{\alpha-2e_c+2e_s}
    + a_2(a_2{-}1)\, m_{\alpha+2e_c-2e_s}\bigr]
    \nonumber\\
    &\quad - \sigma^2 a_1 a_2\, m_\alpha.
\end{align}
The first line is the rotation coupling (drift of $c$ and $s$),
the second is the It\^o correction plus the translation coupling,
and the last two lines are the state-dependent diffusion.
Every moment on the right-hand side has degree $|\alpha|$,
confirming $\bar{d} = 0$.

On the constraint manifold $c^2 + s^2 = 1$, the diffusion terms
simplify via $m_{\alpha-2e_c+2e_s} = m_{\alpha-2e_c} - m_\alpha$
and $m_{\alpha+2e_c-2e_s} = m_{\alpha-2e_s} - m_\alpha$
(replacing $s^2 \to 1 - c^2$ and $c^2 \to 1 - s^2$ respectively),
reducing to lower-degree moments plus diagonal corrections to $m_\alpha$.
The resulting ODE is linear:
$\dot{m}(t) = L\,m(t)$,
so $m(T) = e^{LT}m(0)$.

\textbf{Fourier--Legendre basis.}
A natural choice of orthogonal basis for $S^1 \times \R^2$ replaces
the embedded monomials with:
Fourier modes $\cos(m\theta)$, $\sin(m\theta)$ for the heading
($m = 0, \ldots, K_\theta$),
and centered Legendre polynomials $\widetilde{P}_{j}(z_{p_x})\,\widetilde{P}_{l}(z_{p_y})$
for the position.
For the propagation formula below, write the raw Fourier--position moments
$\mathrm{mc}_{m,j,l} := \E[\cos(m\theta)\, p_x^j\, p_y^l]$
and $\mathrm{ms}_{m,j,l}$ similarly with $\sin$.
The SE(2) kinematics give:
\begin{align}
    \frac{d}{dt}\mathrm{mc}_{m,j,l}
    &= -m\omega\;\mathrm{ms}_{m,j,l}
    - \frac{m^2\sigma^2}{2}\;\mathrm{mc}_{m,j,l}
    \nonumber\\
    &\quad + \frac{vj}{2}\bigl(\mathrm{mc}_{m-1,j-1,l} + \mathrm{mc}_{m+1,j-1,l}\bigr)
    \nonumber\\
    &\quad + \frac{vl}{2}\bigl(\mathrm{ms}_{m+1,j,l-1} - \mathrm{ms}_{m-1,j,l-1}\bigr),
    \label{eq:fourier-ode}
\end{align}
using $\cos\theta\cos(m\theta) = \frac{1}{2}(\cos((m{-}1)\theta) + \cos((m{+}1)\theta))$.
The coupling matrix $L$ is \emph{tridiagonal in $m$}:
the position drift $v\cos\theta$ couples Fourier mode $m$ to $m \pm 1$.
With $K_\theta = 6$ and position degree 4, the matrix is
${<}1\%$ nonzero, and the solution is $m(T) = e^{LT} m(0)$.
In the Fourier--Legendre implementation, the same operator is conjugated
into the Legendre position basis using the change-of-basis formulas in
Appendix~\ref{app:basis-change}. Position derivatives are then evaluated
with the Legendre recurrence below.

\subsubsection{Stein identity on \texorpdfstring{$S^1 \times \R^2$}{S1 x R2}}
\label{app:se2-stein}

For the measurement update on SE(2), we need to recover posterior
moments from posterior score parameters in the Fourier--Legendre basis.
The Stein identity on $S^1 \times \R^2$ takes the form:
for a test function $f(\theta, p_x, p_y)$ and a density
$p(\theta, p_x, p_y) \propto \exp(-E(\theta, p_x, p_y))$,
\begin{equation}\label{eq:stein-s1r2}
    \E\Bigl[\frac{\partial^2 f}{\partial \theta^2}
    + \frac{\partial^2 f}{\partial p_x^2}
    + \frac{\partial^2 f}{\partial p_y^2}
    - \frac{\partial E}{\partial \theta}\frac{\partial f}{\partial \theta}
    - \frac{\partial E}{\partial p_x}\frac{\partial f}{\partial p_x}
    - \frac{\partial E}{\partial p_y}\frac{\partial f}{\partial p_y}\Bigr] = 0,
\end{equation}
where the angular part uses the standard Laplace--Beltrami operator
$\Delta_{S^1} = \partial^2/\partial\theta^2$ on the circle.

\textbf{Fourier--Legendre specialization.}
Let the energy be
$E(\theta, p_x, p_y) = \sum_\alpha \lambda_\alpha\, \psi_\alpha(\theta, p_x, p_y)$,
where $\psi_\alpha$ are Fourier--Legendre basis functions
$\cos(m\theta)\,P_j(z_{p_x})\,P_l(z_{p_y})$ or
$\sin(m\theta)\,P_j(z_{p_x})\,P_l(z_{p_y})$.
The derivatives of basis functions are:
\begin{alignat}{2}
    \frac{\partial}{\partial\theta}\cos(m\theta)
    &= -m\sin(m\theta), &\quad
    \frac{\partial}{\partial\theta}\sin(m\theta)
    &= m\cos(m\theta), \nonumber\\
    \frac{\partial^2}{\partial\theta^2}\cos(m\theta)
    &= -m^2\cos(m\theta), &\quad
    \frac{\partial^2}{\partial\theta^2}\sin(m\theta)
    &= -m^2\sin(m\theta), \nonumber\\
    \frac{\partial}{\partial p_x}P_j(z_{p_x})
    &= \frac{1}{\sigma_{p_x}}P'_j(z_{p_x}), &\quad
    \frac{\partial^2}{\partial p_x^2}P_j(z_{p_x})
    &= \frac{1}{\sigma_{p_x}^2}P''_j(z_{p_x}),
    \label{eq:fl-derivs}
\end{alignat}
where $z_{p_x} = (p_x - \mu_{p_x})/\sigma_{p_x}$ and
$P'_j$, $P''_j$ are Legendre derivative coefficients
expressible in the Legendre basis via the recurrence
$P'_n(x) = \sum_{k < n,\, n{-}k\text{ odd}} (2k{+}1)\,P_k(x)$.

Substituting a Fourier--Legendre test function $\psi_\beta$ into
\eqref{eq:stein-s1r2} and using the product linearization
$\cos(m\theta)\cos(m'\theta) = \frac{1}{2}[\cos((m{-}m')\theta) + \cos((m{+}m')\theta)]$
and $P_j P_{j'} = \sum_k g_{jj'k}\,P_k$ (Clebsch--Gordan),
each Stein equation becomes a linear constraint on the
Fourier--position moments $\mathrm{mc}_{m,j,l}$, $\mathrm{ms}_{m,j,l}$.
The full system has the structure $\Lambda_1(\lambda)\,\mu = c_1(\lambda)$
where $\mu$ is the vector of unknown posterior moments,
exactly as in the Euclidean case (Section~\ref{sec:recovery}).

\textbf{Key advantage over the embedded basis.}
the Fourier basis diagonalizes the $S^1$ Laplacian
($\Delta_{S^1}\cos(m\theta) = -m^2\cos(m\theta)$),
so no constraint $c^2 + s^2 = 1$ appears in the system.
The Stein matrix $\Lambda_1$ is well-conditioned in this basis.

\textbf{Observation likelihood.}
For a landmark $L \in \R^2$ observed as
$y = R(\theta)^\top(L + v - p)$ with non-Gaussian noise $v$
(e.g., the 4-modal distribution of~\cite{teng2025}),
the log-likelihood $\log p(y \mid \theta, p)$
is not polynomial in $(\theta, p)$, but it can be projected
onto the Fourier--Legendre basis by regression:
evaluate $\log p(y \mid \theta_i, p_i)$ at random samples
$(\theta_i, p_i)$ and solve for $\lambda_{\mathrm{lik}}$
via least squares.
The conjugate update
$\lambda^+ = \lambda^- + \lambda_{\mathrm{lik}}$
then proceeds as in Section~\ref{sec:update},
followed by Stein moment recovery and iterative
consistency refinement (Appendix~\ref{app:smom}).

\subsection{SO(3) quaternion kinematics}
\label{app:so3-generator}

State $q = (q_0, q_1, q_2, q_3) \in S^3 \subset \R^4$ with $|q|^2 = 1$.
The unit quaternion represents a rotation $R \in \mathrm{SO}(3)$
via the double cover $S^3 \to \mathrm{SO}(3)$
(both $q$ and $-q$ map to the same rotation $R$).

\textbf{SDE.}
The rotation kinematics SDE in It\^o form~\cite{choukroun2006} is
\begin{equation}\label{eq:quat-sde}
    dq = \underbrace{\Bigl(\frac{1}{2}Q(q)\omega_0
    - \frac{3\sigma^2}{8}q\Bigr)}_{\displaystyle X(q)}\,dt
    + \underbrace{\frac{\sigma}{2}Q(q)}_{\displaystyle h(q)}\,dW,
\end{equation}
where $\omega_0 \in \R^3$ is the nominal angular velocity,
$\sigma > 0$ is the noise intensity,
$W \in \R^3$ is a standard Wiener process, and
\begin{equation}\label{eq:Q-mat}
    Q(q) = \begin{pmatrix}
        -q_1 & -q_2 & -q_3 \\
         q_0 & -q_3 &  q_2 \\
         q_3 &  q_0 & -q_1 \\
        -q_2 &  q_1 &  q_0
    \end{pmatrix} \in \R^{4 \times 3}
\end{equation}
is the quaternion multiplication matrix satisfying
$Q(q)^\top Q(q) = |q|^2 I_3$ and $Q(q)Q(q)^\top = |q|^2 I_4 - qq^\top$.
The It\^o correction $-\frac{3\sigma^2}{8}q$ ensures
$\frac{d}{dt}\E[|q|^2] = 0$, preserving the unit norm constraint.

\textbf{Drift and diffusion classification.}
Both $X(q)$ and $h(q)$ are \emph{linear} in $q$, giving
$d_X = 1$ and $d_h = 1$.  The excess degree is
$\bar{d} = \max(d_X - 1,\; 2d_h - 2) = \max(0, 0) = 0$:
the moment hierarchy closes exactly.

\textbf{Diffusion tensor.}
\begin{equation}
    H(q) = \frac{1}{2}h(q)\,h(q)^\top
    = \frac{\sigma^2}{8}\bigl(|q|^2 I_4 - qq^\top\bigr).
\end{equation}
On $S^3$ ($|q|^2 = 1$), $H = \frac{\sigma^2}{8}(I_4 - qq^\top)$
is the projection onto the tangent space of $S^3$ at $q$,
scaled by $\sigma^2/8$.

\textbf{Wigner D-matrix basis.}
The monomial basis in $(q_0, \ldots, q_3)$
fails on $S^3$ ($\kappa(A) \sim 10^{14}$) because the constraint
$|q|^2 = 1$ introduces $\binom{4+r}{4} - (r+1)^2$ algebraic dependencies
among the monomials at each degree.
The Wigner D-matrices $\{D^l_{mm'}\}_{l=0}^{l_{\max}}$,
$m, m' \in \{-l, \ldots, l\}$,
form a complete orthonormal basis for $L^2(\mathrm{SO}(3))$
by the Peter--Weyl theorem~\cite{hall2015,chirikjian2012}.
At $l_{\max} = 4$: $M = \sum_{l=0}^4 (2l+1)^2 = 165$ real basis functions.
Each $D^l_{mm'}$ is a polynomial of degree $2l$ in the quaternion
components. Functions on SO(3) correspond to \emph{even} polynomials
in $q$ (invariant under $q \mapsto -q$).

\textbf{Generator in the Wigner basis.}
Define the Wigner moments
$\hat{P}^l_{mm'}(t) := \E[D^l_{mm'}(R_t)]$.
The infinitesimal generator of the SDE~\eqref{eq:quat-sde},
when applied to $D^l_{mm'}$, yields:
\begin{itemize}
    \item \textbf{Deterministic rotation.}
    $\mathcal{A}_{\mathrm{rot}} D^l_{mm'} = -i[\omega_0 \cdot \mathbf{J}^{(l)}]_{mn}\, D^l_{nm'}$,
    where $\mathbf{J}^{(l)} = (\mathbf{J}^{(l)}_x, \mathbf{J}^{(l)}_y, \mathbf{J}^{(l)}_z)$
    are the $(2l{+}1) \times (2l{+}1)$ angular momentum matrices in the spin-$l$ representation.
    \item \textbf{Isotropic diffusion.}
    $\mathcal{A}_{\mathrm{diff}} D^l_{mm'} = -\frac{\sigma^2}{2}l(l{+}1)\, D^l_{mm'}$,
    using the Laplace--Beltrami eigenvalue $\Delta_{\mathrm{SO}(3)} D^l_{mm'} = -l(l{+}1) D^l_{mm'}$.
\end{itemize}
The moment ODE is therefore \emph{block-diagonal in $l$}:
\begin{equation}\label{eq:wigner-ode}
    \frac{d}{dt}\hat{P}^l
    = -\Bigl(\frac{\sigma^2}{2}l(l{+}1)\, I_{(2l+1)^2}
    + i\, \omega_0 \cdot \mathbf{J}^{(l)}\Bigr)\hat{P}^l,
\end{equation}
where $\hat{P}^l \in \mathbb{C}^{(2l+1) \times (2l+1)}$
and the angular momentum acts by left multiplication
on the row index $m$.
The solution is $\hat{P}^l(T) = e^{L_l T}\hat{P}^l(0)$,
with each block $L_l$ of size $(2l{+}1)^2 \times (2l{+}1)^2$.
No coupling between different $l$-blocks occurs.

\textbf{Riemannian score matching.}
The polynomial MED on SO(3) takes the form $p(R;\lambda) \propto \exp\!\bigl(-\sum_{l,m,m'} \lambda^l_{mm'} D^l_{mm'}(R)\bigr)$.
Using the Hyv\"arinen objective with the Laplace--Beltrami operator
on SO(3)~\cite{chirikjian2012}, we assemble:
\begin{align}
    b_k &= -l_k(l_k{+}1)\, \hat{p}^{l_k}_{m_k m_k'}
    \quad\text{(diagonal in the Wigner basis)}, \\
    A_{kj} &= \E[\nabla_{\mathrm{SO}(3)} \phi_k
    \cdot \nabla_{\mathrm{SO}(3)} \phi_j].
\end{align}
The $b$-vector requires only the propagated Wigner moments because
$\Delta_{\mathrm{SO}(3)}D^l_{mm'}=-l(l+1)D^l_{mm'}$.
The $A$-matrix is computed from products of differentiated Wigner functions.
Equivalently, the left-invariant vector fields act through angular-momentum
matrices, and products such as
$D^{l_1}_{m_1 m_1'} D^{l_2}_{m_2 m_2'}$ decompose via
Clebsch--Gordan coefficients~\cite{hall2015} into
$D^L_{MM'}$ with $|l_1{-}l_2| \le L \le l_1{+}l_2$,
giving $A$ a banded structure in $l$ (bandwidth $2l_{\max}$).
We do not use the simplified integration-by-parts identity
$\E_p[\nabla\phi_k\cdot\nabla\phi_j]=-\E_p[\phi_j\Delta\phi_k]$,
which would only hold without the extra $\nabla\log p$ term under the
uniform Haar measure.

\subsection{SE(3) rigid body kinematics}
\label{app:se3-generator}

\textbf{SDE.}
State $x = (q, p) \in S^3 \times \R^3$ with $q = (q_0, q_1, q_2, q_3)$
a unit quaternion ($|q|^2 = 1$) and $p = (p_1, p_2, p_3)$ the world-frame
position.
The manifold dimension is $n = 6$ (embedded in $\R^7$ via the quaternion
constraint $|q|^2 = 1$).
The It\^o SDE~\cite{choukroun2006} for the rotation is the same as the
SO(3) system~\eqref{eq:quat-sde}:
\begin{equation}\label{eq:se3-quat}
    dq = \Bigl(\frac{1}{2}Q(q)\omega_0
    - \frac{3\sigma_\omega^2}{8}q\Bigr)\,dt
    + \frac{\sigma_\omega}{2}Q(q)\,dW_\omega,
\end{equation}
where $\omega_0 \in \R^3$ is the nominal angular velocity
and $\sigma_\omega$ is the rotation noise intensity.
The position evolves by the body-frame velocity $v_0 \in \R^3$
rotated into the world frame:
\begin{equation}\label{eq:se3-pos}
    dp = R(q)\, v_0\, dt + \sigma_v\, dW_v,
\end{equation}
where $R(q) \in \mathrm{SO}(3)$ is the rotation matrix corresponding to $q$
and $\sigma_v\, dW_v$ is additive translational noise.

\textbf{Drift and diffusion classification.}
The rotation drift is linear in $q$ ($d_X = 1$ for the $q$-block).
Each entry of $R(q)$ is \emph{quadratic} in $(q_0, q_1, q_2, q_3)$,
so the position drift $R(q)v_0$ has $d_X = 2$.
The rotation diffusion $Q(q)$ is linear ($d_h = 1$) and
the position diffusion is constant ($d_h = 0$).
Overall:
$\bar{d} = \max(d_X - 1,\; 2d_h - 2) = \max(1, 0) = 1$.
One Stein closure layer is needed.

\textbf{Generator.}
For $\phi_\alpha = q_0^{a_0} q_1^{a_1} q_2^{a_2} q_3^{a_3}
p_1^{b_1} p_2^{b_2} p_3^{b_3}$
with $|\alpha| = \sum a_i + \sum b_j$, the generator
$\mathcal{A}\phi_\alpha = \nabla\phi \cdot X + \mathrm{Tr}(H\nabla^2\phi)$
has three groups of terms:

\emph{(i) Rotation block} (same as SO(3)):
the drift $\frac{1}{2}Q(q)\omega_0 - \frac{3\sigma_\omega^2}{8}q$
and diffusion $H_q = \frac{\sigma_\omega^2}{8}(I_4 - qq^\top)$
contribute terms involving only $q$-moments.
These preserve the degree of $q^a$ (i.e., $\bar{d}_q = 0$)
by the same mechanism as in Appendix~\ref{app:so3-generator}.

\emph{(ii) Position drift} ($R(q)v_0$):
\begin{equation}
    \sum_{j=1}^3 b_j \Bigl(\sum_{k=1}^3 R_{jk}(q)\, v_{0,k}\Bigr)
    q^a\, p^{b - e_j}.
\end{equation}
Since $R_{jk}(q)$ is degree 2 in $q$, this produces moments of degree
$|a| + 2 + |b| - 1 = |\alpha| + 1$.
These are the \textbf{unclosed terms} resolved by Stein closure.

\emph{(iii) Position diffusion} ($\sigma_v^2/2 \cdot \nabla_p^2\phi$):
$\frac{\sigma_v^2}{2} \sum_j b_j(b_j{-}1)\, m_{\alpha - 2e_{p_j}}$,
which has degree $|\alpha| - 2$ (closed).

\textbf{Rotation matrix entries.}
On $|q|^2 = 1$:
\begin{align}
    R_{11} &= 2(q_0^2 + q_1^2) - 1, &
    R_{12} &= 2(q_1 q_2 - q_0 q_3), \nonumber\\
    R_{21} &= 2(q_1 q_2 + q_0 q_3), &
    R_{22} &= 2(q_0^2 + q_2^2) - 1, \nonumber\\
    R_{31} &= 2(q_1 q_3 - q_0 q_2), &
    R_{33} &= 2(q_0^2 + q_3^2) - 1,
\end{align}
with $R_{13} = 2(q_1 q_3 + q_0 q_2)$,
$R_{23} = 2(q_2 q_3 - q_0 q_1)$,
$R_{32} = 2(q_2 q_3 + q_0 q_1)$.
Each $R_{jk}$ contributes a degree-2 monomial in $q$ to the
position drift, confirming $\bar{d} = 1$.

\textbf{Wigner--Legendre product basis.}
A natural choice of orthogonal basis for $\mathrm{SO}(3) \times \R^3$ is
\begin{equation}\label{eq:se3-wl-basis}
    \phi^{l,\alpha}_{mm'}(R, p)
    = D^l_{mm'}(R)\;\widetilde{P}_\alpha(p - \bar{p}),
\end{equation}
where $D^l_{mm'}$ are Wigner D-matrices ($l \le l_{\max}$)
and $\widetilde{P}_\alpha$ are centered Legendre polynomials
on $\R^3$ ($|\alpha| \le r_p$).
At $l_{\max} = 2$, $r_p = 4$: $M = 35 \times 35 = 1225$.
Define the product moments
$\hat{m}^l_{mm',\alpha}(t) := \E[D^l_{mm'}(R_t)\,\widetilde{P}_\alpha(p_t {-} \bar{p})]$.

The generator $\mathcal{A}\phi^{l,\alpha}_{mm'}$ decomposes into three terms:

\emph{(i) Rotation block:}
$\mathcal{A}_{\mathrm{rot}} \phi^{l,\alpha}_{mm'}
= (-\frac{\sigma_\omega^2}{2}l(l{+}1)\,\delta_{mn}
- i[\omega_0 {\cdot} \mathbf{J}^{(l)}]_{mn})\,
D^l_{nm'}\,\widetilde{P}_\alpha$.
Maps $(l, \alpha)$ to $(l, \alpha)$.  Always closed.

\emph{(ii) Position diffusion:}
$\frac{\sigma_v^2}{2}\Delta_p \phi^{l,\alpha}_{mm'}
= \frac{\sigma_v^2}{2}D^l_{mm'}\sum_j \alpha_j(\alpha_j{-}1)\widetilde{P}_{\alpha-2e_j}$.
Maps $(l, |\alpha|)$ to $(l, |\alpha|{-}2)$.  Always closed.

\emph{(iii) Position drift} (rotation--translation coupling):
$(Rv_0) \cdot \nabla_p\,\phi^{l,\alpha}_{mm'}
= \sum_j \alpha_j (\sum_k R_{jk} v_{0,k})\,
D^l_{mm'}\,\widetilde{P}_{\alpha-e_j}$.
Each $R_{jk}$ is a real combination of $D^1$ functions, so
$R_{jk} D^l_{mm'}$ decomposes via the Clebsch--Gordan series:
\begin{equation}\label{eq:se3-cg}
    D^1_{m_1 m_1'} D^l_{mm'}
    = \sum_{L=|l-1|}^{l+1}
    \langle 1\,m_1; l\,m | L\,M\rangle\,
    \langle 1\,m_1'; l\,m' | L\,M'\rangle\,
    D^L_{MM'}.
\end{equation}
This maps $(l, |\alpha|) \to (l{-}1, |\alpha|{-}1) \oplus (l, |\alpha|{-}1) \oplus (l{+}1, |\alpha|{-}1)$.

\textbf{Closure.} Only term (iii) is unclosed, producing Wigner level
$l_{\max}{+}1$ when $l = l_{\max}$.
One Stein closure layer resolves these, with the CG coefficients
playing the role of the monomial product rule.
The moment ODE matrix $L$ is \emph{tridiagonal in $l$}
(each Wigner level couples only to $l \pm 1$),
the direct SO(3) analog of the tridiagonal-in-$m$ Fourier
coupling~\eqref{eq:fourier-ode} for SE(2).

\section{Concrete matrix forms}
\label{app:matrices}

\subsection{Score matching matrix \texorpdfstring{$A$}{A} and vector \texorpdfstring{$b$}{b}}

For the polynomial MED $p(x;\lambda) \propto \exp(-\lambda\cdot\phi(x))$
with monomial basis $\phi_\alpha(x) = x^\alpha$, $|\alpha| \le r$,
the score matching normal equations $A\lambda = b$ have entries:
\begin{align}
    A_{\alpha\beta} &=
    \sum_{i=1}^n \alpha_i \beta_i\; m_{\alpha+\beta-2e_i}, \label{eq:A-entry}\\
    b_\alpha &= \sum_{i=1}^n \alpha_i (\alpha_i - 1)\; m_{\alpha-2e_i},
    \label{eq:b-entry}
\end{align}
where $\alpha,\beta$ index non-constant basis functions with
$1 \le |\alpha|,|\beta| \le r$, and $m_\gamma = \E[x^\gamma]$
are the propagated moments.
After eliminating the constant term ($|\alpha| = 0$),
$A$ is $(M{-}1) \times (M{-}1)$ with $M = \binom{n+r}{n}$.
All entries are moments of degree $\le 2r - 2$, which is
exactly the degree propagated by Dynkin's formula.

\subsection{Stein closure matrix \texorpdfstring{$\Lambda_1$}{Lambda-1}}

To resolve degree-$(K{+}1)$ moments, the Stein identity
\eqref{eq:stein} at $|\beta| = r$ gives the system
$\Lambda_1(\lambda)\, m^{(K+1)} = c(\lambda, m^{(\le K)})$ where:
\begin{equation}
    [\Lambda_1]_{(\beta,i),\gamma} =
    \lambda_{\gamma - \beta + e_i}\; (\gamma - \beta + e_i)_i
    \quad\text{if } |\gamma - \beta + e_i| \le r,
\end{equation}
with rows indexed by $(\beta, i)$ with $|\beta| = r$, $\beta_i \ge 1$,
and columns indexed by degree-$(K{+}1)$ multi-indices $\gamma$.
For $n = 2$, the first-layer system is $2r \times 2r$ (square).
For general $n$ and $r \ge 3$, the augmented system using
$|\beta| \in [r, K]$ has a dimension-dependent crossover.
For $r{=}3$, it is overdetermined through $n{=}15$ and
underdetermined starting at $n{=}16$.
Appendix~\ref{app:stein-wellposed} gives the full row/column count,
and Proposition~\ref{prop:stein-closure} gives the $n{=}2$
well-posedness result. For structured generators, the same matrix
construction may be restricted to the active columns requested by the
moment ODE.

\subsection{Change of basis}
\label{app:basis-change}

\textbf{General framework.}
Let $\{\phi_\alpha\}$ (e.g., monomials) and $\{\psi_\alpha\}$
(e.g., product Legendre) be two polynomial bases of degree $\le r$,
related by a nonsingular matrix $C$:
\begin{equation}
    \psi_\alpha(x) = \sum_\beta C_{\alpha\beta}\, \phi_\beta(x),
    \qquad
    \phi_\alpha(x) = \sum_\beta C^{-1}_{\alpha\beta}\, \psi_\beta(x).
\end{equation}
For product bases, $C$ has Kronecker structure:
$C_{\alpha\beta} = \prod_{i=1}^n C^{(i)}_{\alpha_i, \beta_i}$,
where $C^{(i)}_{k,j}$ is the $j$-th monomial coefficient of
the $k$-th basis function in dimension $i$.

\textbf{Natural parameters.}
The energy function $E(x) = \sum_\alpha \lambda^\phi_\alpha \phi_\alpha(x)
= \sum_\alpha \lambda^\psi_\alpha \psi_\alpha(x)$ transforms as
\begin{equation}
    \lambda^\phi = C^\top \lambda^\psi,
    \qquad
    \lambda^\psi = C^{-\top} \lambda^\phi.
    \label{eq:lambda-change}
\end{equation}
\textbf{Moments.}
The generalized moments $\mu^\psi_\alpha = \E[\psi_\alpha(x)]$
and $\mu^\phi_\alpha = \E[\phi_\alpha(x)]$ transform as
\begin{equation}
    \mu^\psi = C\, \mu^\phi, \qquad \mu^\phi = C^{-1}\, \mu^\psi.
    \label{eq:moment-change}
\end{equation}
\textbf{Score matching matrices.}
The score $s(x) = -\sum_\alpha \lambda_\alpha \nabla\phi_\alpha(x)$
is basis-invariant. Since $\nabla\psi = C\nabla\phi$, equivalently
$\nabla\phi = C^{-1}\nabla\psi$, the normal equations
$A^\psi \lambda^\psi = b^\psi$ in the new basis have
\begin{equation}
    A^\psi = D^{-1} A^\phi D^{-\top}, \qquad b^\psi = D^{-1} b^\phi,
    \label{eq:Ab-change}
\end{equation}
where $D$ is the matrix such that
$\nabla \phi = D \nabla \psi$. For a linear change of polynomial basis,
$D=C^{-1}$. Derivative recurrences are still useful for assembling
Legendre-basis matrices directly, but they do not change the
parameter-space transformation above.

\textbf{Conditioning.}
The key property of orthogonal bases (e.g., Legendre on $[-1,1]$)
is that $A^\psi$ is better conditioned than $A^\phi$ when the
data distribution is supported near $[-1,1]^n$.
For Legendre, $\E[\psi_\alpha \psi_\beta] \approx \delta_{\alpha\beta}
\prod_i (2\alpha_i+1)^{-1}$ (orthogonality under a nearly uniform
product measure), so $A^\psi$
is approximately diagonal, giving $\kappa(A^\psi) \sim O(1)$--$O(10^2)$
compared to $\kappa(A^\phi) \sim O(10^4)$--$O(10^6)$ for monomials.
This conditioning improvement propagates to the Stein closure and
moment recovery systems, since their coefficients are
products of $\lambda$ entries and moment entries.

\textbf{Stein identity under basis change.}
The Stein identity $\E[s_i f + \partial_i f] = 0$ is
basis-independent. When expressed in basis $\psi$:
\begin{equation}
    \sum_\alpha \lambda^\psi_\alpha\,
    \E[\partial_i \psi_\alpha \cdot \psi_\beta]
    = \E[\partial_i \psi_\beta].
\end{equation}
The products $\partial_i \psi_\alpha \cdot \psi_\beta$ and
$\partial_i \psi_\beta$ are polynomials that can be re-expanded
in the $\psi$ basis using the linearization coefficients
$g^\psi_{\alpha\beta\gamma}$ satisfying
$\psi_\alpha \psi_\beta = \sum_\gamma g^\psi_{\alpha\beta\gamma}\,
\psi_\gamma$.
For Legendre, these are the Adams--Neumann coefficients,
computable via Gauss--Legendre quadrature.

\textbf{Practical workflow.}
In our implementation:
\begin{enumerate}[leftmargin=2em]
    \item Moment propagation via Dynkin operates on monomial moments
    (the generator acts naturally on monomials).
    \item Before score matching: convert monomial moments to the
    chosen basis using \eqref{eq:moment-change},
    or equivalently assemble $A$, $b$ directly via
    \eqref{eq:A-entry}--\eqref{eq:b-entry} in centered/scaled
    monomial coordinates (which approximate orthogonality).
    \item After score matching: $\lambda$ is in the working basis.
    For Stein closure during propagation, convert to monomial
    $\lambda$ via \eqref{eq:lambda-change} if needed.
    \item For measurement update: the conjugate addition
    $\lambda^+ = \lambda^- + \lambda_{\mathrm{lik}}$ is performed
    in whichever basis both are expressed in.
\end{enumerate}

\subsection{Centered coordinate transformation}
\label{app:centering}

Given raw moments $m_\alpha = \E[x^\alpha]$ and mean
$\mu_i = m_{e_i}$, the centered moments
$m^z_\alpha = \E[(x - \mu)^\alpha]$ are obtained via
the multinomial expansion:
\begin{equation}
    m^z_\alpha = \sum_{\beta \le \alpha}
    \binom{\alpha}{\beta} (-\mu)^{\alpha-\beta}\, m_\beta,
    \label{eq:centering}
\end{equation}
where $\binom{\alpha}{\beta} = \prod_i \binom{\alpha_i}{\beta_i}$
and $\mu^{\alpha-\beta} = \prod_i \mu_i^{\alpha_i - \beta_i}$.
The inverse (uncentering) replaces $(-\mu)$ with $(+\mu)$.

Working in centered coordinates is essential for conditioning:
after centering, $m^z_{e_i} = 0$ and $m^z_\alpha$ is a central
moment of order $|\alpha|$, which is $O(\sigma^{|\alpha|})$
for a distribution with standard deviation $\sigma$.
This prevents the score matching matrix $A$ from being dominated
by mean-related terms.

\textbf{Scaling.} Further conditioning improvement is obtained by
normalizing to $w = z / s$ where $s_i = c \cdot \sigma_i$
(typically $c = 3.5$ maps $\pm 3.5\sigma$ to $[-1, 1]$).
The scaled moments $m^w_\alpha = m^z_\alpha / \prod_i s_i^{\alpha_i}$
are $O(1)$ for all degrees, giving $\|\lambda\| = O(10)$--$O(100)$
instead of $O(10^4)$--$O(10^6)$ with raw monomials.

\section{Moment closure: continuous-time vs.\ discrete-time}
\label{app:closure-comparison}

\textbf{The MEM-KF system class.}
MEM-KF~\cite{teng2025} operates on discrete-time polynomial systems
$X_{k+1} = f(X_k, U_k, W_k)$
where $f$ is polynomial and $W_k$ is \emph{state-independent} noise.
Their ``extended dynamical system'' (Eq.~6 in~\cite{teng2025})
lifts the dynamics into the monomial basis
$\phi_r(X_{k+1}) = A(W_k, U_k)\,\phi_r(X_k)$,
and the moment propagation factors as
$\E[\phi_r(X_{k+1})] = \E[A(W_k, U_k)]\,\E[\phi_r(X_k)]$
\emph{only because $W_k$ is independent of $X_k$}.
This is a restricted class from the perspective of stochastic systems theory:
\begin{itemize}
    \item \textbf{No multiplicative/state-dependent noise.}
    Continuous-time SDEs generically have state-dependent diffusion
    $dX = g(X)\,dt + h(X)\,dW$ where $h(X) \ne \text{const}$.
    This arises in rigid body kinematics
    ($h(q) = \frac{\sigma}{2}Q(q)$ for SE(2), SO(3), SE(3)),
    geometric Brownian motion ($h = \sigma X$),
    and aerodynamic systems with quadratic drag under turbulence.
    State-dependent noise contributes to the excess degree
    via $\bar{d} = \max(d_X{-}1,\; 2d_h{-}2)$
    and produces an It\^o correction in the drift.
    These effects have no analog in their additive
    discrete-time framework $X_{k+1} = f(X_k) + W_k$.
    \item \textbf{No continuous-time structure.}
    The infinitesimal generator $\mathcal{A}$,
    Dynkin's formula~\cite{oksendal2003},
    and the Fokker--Planck PDE
    provide exact moment evolution equations for continuous-time systems.
    Discrete-time polynomial maps do not expose this structure,
    replacing it with a single-step algebraic relation.
    \item \textbf{Discretization error for non-group systems.}
    Systems with native Lie group structure (e.g., SE(2), SO(3))
    admit exact discrete composition maps
    $X_{k+1} = X_k \cdot U_k \cdot W_k$ that preserve the group
    and introduce no time-stepping error.
    For general continuous-time SDEs that lack this structure,
    however, converting to the discrete polynomial form
    via Euler--Maruyama introduces $O(\Delta t)$ weak error
    per step, and higher-order schemes (Milstein, stochastic
    Runge--Kutta) produce more complex polynomial structure
    in the noise coupling.
\end{itemize}

The SE(2) system in~\cite{teng2025} has the special structure
$X_{k+1} = X_k \cdot U_k \cdot W_k$ (Lie group composition),
which is bilinear in the embedded coordinates $(c, s, p_x, p_y)$.
Combined with the quotient ring constraint $c^2 + s^2 = 1$,
the moment propagation closes automatically within a fixed basis.
This closure is a consequence of the bilinear group structure,
not a general property of the method.

\textbf{Closure obstruction for nonlinear systems.}
For systems with polynomial drift of degree $d_X \ge 2$
(Duffing: $d_X = 2$, Lotka-Volterra: $d_X = 2$,
coupled oscillators: $d_X = 2$, double-well Langevin: $d_X = 3$),
the moment hierarchy does not close, under
\emph{either} the continuous generator or a discrete polynomial map.

Under Euler discretization with step $\Delta t$:
\begin{equation}
    X_{k+1} = X_k + \Delta t\, g(X_k) + \sigma\sqrt{\Delta t}\,\xi_k,
\end{equation}
the map $f(x) = x + \Delta t\, g(x)$ has degree $d_X$.
Propagating degree-$K$ moments requires:
$\E[\phi(f(X_k))]$ for $|\phi| \le K$,
which involves monomials of $X_k$ up to degree $d_X \cdot K$.
The hierarchy is unclosed: degree-$K$ moments depend on
degree-$d_X K$ moments.

Truncating at degree $K$ (setting higher moments to zero)
introduces $O(\Delta t)$ error per step that compounds.
This is the discrete analog of the continuous-time excess degree
$\bar{d} = d_X - 1 \ge 1$: both formulations face the
\emph{same} fundamental obstruction.

\textbf{Stein closure as the resolution.}
The SKF resolves this via the Stein identity
(Section~\ref{sec:stein}), which provides algebraic relations
between moments of different orders, parameterized by $\lambda$.
Unlike truncation, the Stein closure is \emph{exact} for
the polynomial exponential family: it exploits the parametric
structure of the density to express unclosed moments as
functions of closed ones.
The empirical evidence is in
Figures~\ref{fig:moment-comparison-duffing}--\ref{fig:moment-comparison-lv}
(Appendix~\ref{app:moment-accuracy}):
on both the Duffing oscillator ($\bar{d} = 1$, $d_X = 2$)
and Lotka-Volterra ($\bar{d} = 1$, $d_X = 2$),
all 14 centered moments up to degree~4 match
$5 \times 10^5$-particle MC to within $1$--$7\%$
over the full propagation interval.
This capability (handling continuous-time SDEs with
$\bar{d} \ge 1$ without moment truncation) is absent
in the discrete-time MEM-KF framework and is a core
contribution of this work.

\textbf{SE(2) landmark benchmark.}
To verify that the SKF can match MEM-KF on its
\emph{own} benchmark, we reproduce the SE(2) localization
experiment of~\cite{teng2025}: a robot traverses a unit circle
with heading noise $\sigma = 0.05$\,rad/$\sqrt{\text{s}}$,
observing four symmetric landmarks with unknown data association
(4-modal likelihood).
We use an intrinsic Fourier$\,\times\,$Legendre basis on
$S^1 \times \R^2$ with $K_\theta = 8$ angular modes and
position degree $r = 4$, avoiding the
$c^2 + s^2 = 1$ quotient ring constraint of the embedded
$(c,s,p_x,p_y)$ coordinates used in~\cite{teng2025}.
The result is shown in Figure~\ref{fig:se2-landmark}:
the SKF mean trajectory tracks the ground truth comparably
to a $5 \times 10^5$-particle bootstrap filter.

\begin{figure}[ht]
    \centering
    \includegraphics[width=0.55\linewidth]{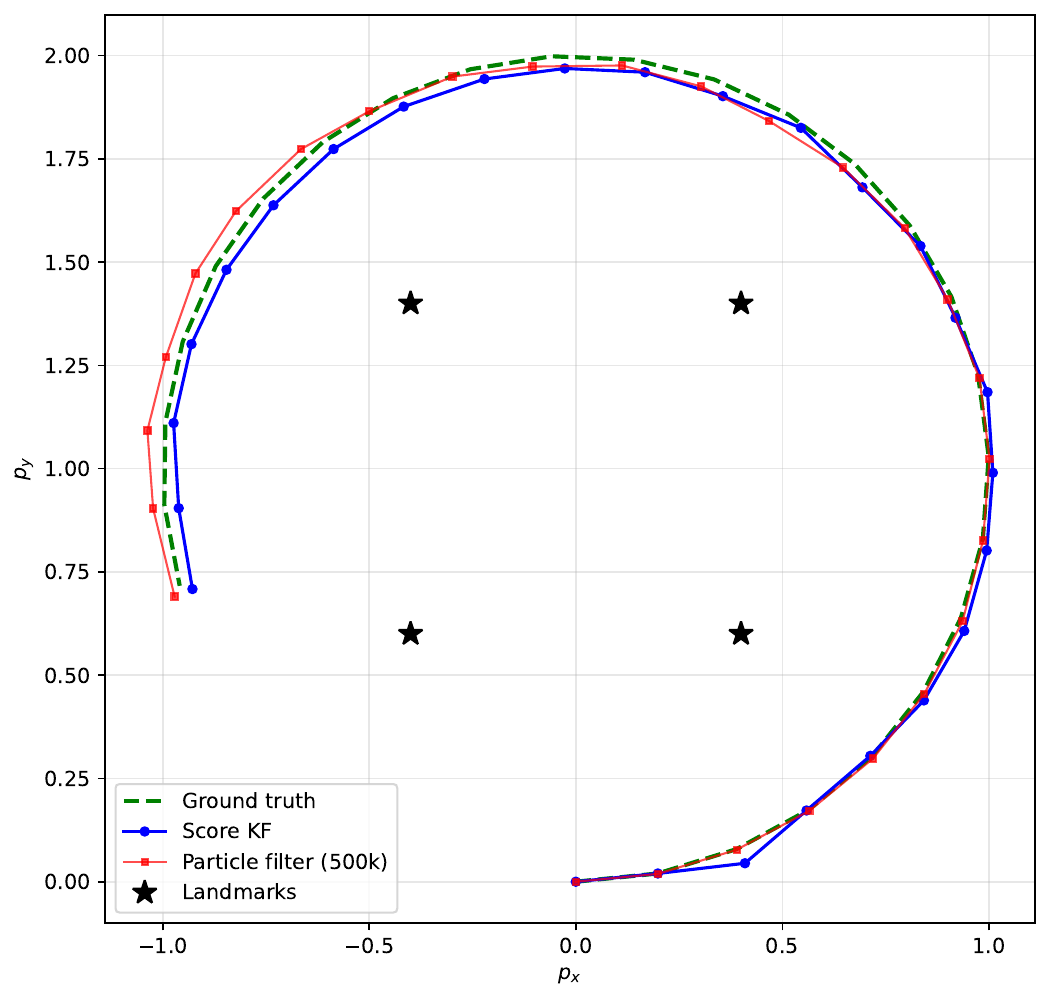}
    \caption{SE(2) landmark filter reproducing the benchmark
    of~\cite{teng2025}. The SKF (Fourier$\,\times\,$Legendre,
    $K_\theta{=}8$, $r{=}4$) tracks the circular ground truth
    comparably to the particle filter.
    Stars mark the four landmarks with unknown data association.}
    \label{fig:se2-landmark}
\end{figure}

Observe that a $5 \times 10^5$-particle bootstrap filter already tracks
this benchmark well:
the effective state dimension after the $c^2 + s^2 = 1$
constraint is $n = 3$, which is low enough for the particle
filter to operate without weight degeneracy.
We refer the reader to~\cite{teng2025} for a more detailed performance comparison on this system with
EKF, InEKF, UKF, UKF-M, and particle filters.
Their results confirm that MEM-KF tracks well here,
but so does the particle filter,
and neither the partition function reconstruction nor the
score matching machinery provides a clear advantage over
simple importance resampling at this dimension.
The scalability contrast with MEM-KF and the accuracy gap relative to the
reported bootstrap PF become apparent at higher dimensions:
as shown in Table~\ref{tab:comparison},
the SKF maintains practical 25-step filtering cost through $n = 20$
while the partition function integral in MEM-KF would require
$O(G^n)$ evaluations, which in practice hits the scalability limit at $n\le 8$ even with the GPU parallelization or compiled extensions.

\section{Additional experimental details}
\label{app:experiments}

\begin{table}[ht]
\caption{Comparison across methods.
The SKF score-matching fit costs $O(M^3)$ with
$M = \binom{n+r}{n}$. When Stein closure is needed, the closure solve
adds a factor depending on the active unclosed moment system.
MEM-KF~\cite{teng2025} cost is dominated by $O(G^n)$ quadrature for $Z(\lambda)$, exponential in $n$.
The coupled-oscillator rows report RMSE for the SKF and baselines.
The high-dimensional coupled oscillator rows ($n{=}12$--$20$) use
active Stein closure.}
\label{tab:comparison}
\centering
\small
\setlength{\tabcolsep}{4pt}
\begin{tabular}{lcccccc}
\toprule
\textbf{Experiment} & \textbf{SKF (ours)} & \textbf{EKF} & \textbf{UKF} & \textbf{EnKF} & \textbf{PF (500k)} & \textbf{MEM-KF} \\
\midrule
\multicolumn{7}{l}{\emph{Prediction (timing):}} \\
SE(2) ($n{=}4$, $\bar{d}{=}0$) & $0.6$~s & ${<}0.1$~s & -- & -- & $209$~s & est.\ ${\sim}35$~s$^\S$ \\
SO(3) ($l_{\max}{=}4$, $\bar{d}{=}0$) & ${<}0.1$~s & ${<}0.1$~s & -- & -- & $396$~s & est.\ ${\sim}100$~s$^\S$ \\
Lotka-Volterra ($n{=}2$, $\bar{d}{=}1$) & $0.7$~s & ${<}0.1$~s & -- & -- & $69$~s & n/a$^\ddagger$ \\
3D tracer ($n{=}3$, $r{=}6$, $\bar{d}{=}1$) & $8$~s & ${<}0.1$~s & -- & -- & $122$~s & n/a$^\ddagger$ \\
\midrule
\multicolumn{7}{l}{\emph{Filter (mean RMSE / 25 steps):}} \\
Duffing ($n{=}2$, $r{=}4$, $\bar{d}{=}1$) & tracks & tracks & tracks & tracks & tracks & n/a$^\ddagger$ \\
Coupled osc.\ ($n{=}4$, $r{=}3$) & $\mathbf{.0103\!\pm\!.0040}$ & $.0676$ & $.0725$ & $.0762$ & $.0727$ & n/a$^{\ddagger\dagger}$ \\
Coupled osc.\ ($n{=}6$, $r{=}3$) & $\mathbf{.0200\!\pm\!.0154}$ & $.0746$ & $.0746$ & $.0762$ & $.0754$ & n/a$^{\ddagger\dagger}$ \\
Coupled osc.\ ($n{=}8$, $r{=}3$) & $\mathbf{.0181\!\pm\!.0112}$ & $.0713$ & $.0775$ & $.0806$ & $.0747$ & n/a$^{\ddagger\dagger}$ \\
Coupled osc.\ ($n{=}10$, $r{=}3$) & $\mathbf{.0197\!\pm\!.0123}$ & $.0728$ & $.0743$ & $.0752$ & $.0744$ & n/a$^{\ddagger\dagger}$ \\
Coupled osc.\ ($n{=}12$, $r{=}3$) & $\mathbf{.0029\!\pm\!.0017}$ & $.0708$ & $.0731$ & $.0768$ & $.0741$ & n/a$^{\ddagger\dagger}$ \\
Coupled osc.\ ($n{=}14$, $r{=}3$) & $\mathbf{.0039\!\pm\!.0042}$ & $.0801$ & $.0810$ & $.0842$ & $.0818$ & n/a$^{\ddagger\dagger}$ \\
Coupled osc.\ ($n{=}16$, $r{=}3$) & $\mathbf{.0055\!\pm\!.0041}$ & $.0752$ & $.0773$ & $.0820$ & $.0781$ & n/a$^{\ddagger\dagger}$ \\
Coupled osc.\ ($n{=}18$, $r{=}3$) & $\mathbf{.0039\!\pm\!.0044}$ & $.0772$ & $.0784$ & $.0816$ & $.0791$ & n/a$^{\ddagger\dagger}$ \\
Coupled osc.\ ($n{=}20$, $r{=}3$) & $\mathbf{.0039\!\pm\!.0050}$ & $.0734$ & $.0752$ & $.0806$ & $.0761$ & n/a$^{\ddagger\dagger}$ \\
\midrule
\multicolumn{7}{l}{\emph{Filter runtime (25 steps, single CPU):}} \\
Duffing ($n{=}2$) & $0.3$~s & ${<}0.1$~s & ${<}0.1$~s & $1.8$~s & $46$~s & -- \\
Coupled osc.\ ($n{=}4$) & $1$~s & ${<}0.1$~s & ${<}0.1$~s & $1.2$~s & $110$~s & -- \\
Coupled osc.\ ($n{=}6$) & $4$~s & ${<}0.1$~s & $0.1$~s & $1.7$~s & $184$~s & -- \\
Coupled osc.\ ($n{=}8$) & $14$~s & ${<}0.1$~s & $0.1$~s & $2.9$~s & $291$~s & -- \\
Coupled osc.\ ($n{=}10$) & $41$~s & ${<}0.1$~s & $0.2$~s & $3.1$~s & $312$~s & -- \\
Coupled osc.\ ($n{=}12$) & $195$~s & $0.02$~s & $0.21$~s & $4.0$~s & $385$~s & -- \\
Coupled osc.\ ($n{=}14$) & $344$~s & $0.02$~s & $0.34$~s & $4.8$~s & $441$~s & -- \\
Coupled osc.\ ($n{=}16$) & $810$~s & $0.03$~s & $0.44$~s & $5.1$~s & $540$~s & -- \\
Coupled osc.\ ($n{=}18$) & $1182$~s & $0.03$~s & $0.47$~s & $5.8$~s & $603$~s & -- \\
Coupled osc.\ ($n{=}20$) & $2079$~s & $0.03$~s & $0.57$~s & $6.5$~s & $646$~s & -- \\
\midrule
$Z$ evaluations per step & \textbf{0} & 0 & 0 & 0 & 0 & $\ge 1$ \\
Iterative optimization & \textbf{no} & no & no & no & no & BFGS \\
Cost per step & $O(M^3)$ + closure & $O(n^3)$ & $O(n^3)$ & $O(N_e n)$ & $O(N_p n)$ & $O(G^n M)$ \\
\bottomrule
\end{tabular}

\vspace{4pt}
{\footnotesize\raggedright PF uses $5{\times}10^5$ bootstrap particles. EnKF uses 500 ensemble members.\\
$^\ddagger$MEM-KF~\cite{teng2025} propagates moments via discrete-time
polynomial maps $X_{k+1} = f(X_k, U_k, W_k)$ with state-independent noise,
which close automatically for bilinear maps.
Continuous-time SDEs with $\bar{d} \ge 1$
(e.g., Duffing, LV, coupled oscillators, etc.) require moment closure
(Appendix~\ref{app:closure-comparison}),
which MEM-KF does not address.\\
$^\dagger$\cite{teng2025} demonstrated MEM-KF on SE(2) (3D integration
after constraint reduction) using adaptive Gauss--Kronrod quadrature
(order 7, 15 Kronrod points/dim per subregion).
The partition function $Z(\lambda)$ requires $n$-dimensional
numerical integration at each BFGS iteration.
At $n{\ge}6$ the cost per $Z$ evaluation grows
as $O(G^n)$ and has not been demonstrated on a single laptop CPU.
For example, at $n{=}8$: $15^8 \approx 2.6 \times 10^9$,
and at $n{=}10$: $15^{10} \approx 5.8 \times 10^{11}$
evaluations per subregion.\\[2pt]
$^\S$\emph{MEM-KF timing estimates.}
SE(2): 3D quadrature ($c^2{+}s^2{=}1$ reduces $n{=}4$ to 3 effective dimensions)
with $M{=}45$ basis functions.
At $G{=}30$ points/dim: $G^3 {=} 27{,}000$ evaluations per $Z$ call,
${\sim}20$ BFGS iterations with function $+$ gradient
${\approx} 40 \times 27{,}000 \times 45$ ${\approx} 49$M~operations per
reconstruction. Estimated ${\sim}35$\,s in Python.
SO(3): same 3D integration but $M{=}165$ Wigner D-matrix basis
functions ($l_{\max}{=}4$: $\sum_{l=0}^{4}(2l{+}1)^2 {=} 165$).
Each $D^l_{mm'}(R)$ evaluation requires $O(l^2)$ operations
via the Wigner $d$-matrix recurrence,
${\sim}3\times$ costlier per basis function than monomial evaluation.
With more parameters and costlier evaluations:
estimated ${\sim}100$\,s.\par}
\end{table}

\subsection{Discussion of Table~\ref{tab:comparison}}

\textbf{Prediction timing.}
Across the prediction benchmarks, the SKF avoids sampling during moment
propagation and uses less wall-clock time than the 500k-particle Monte
Carlo reference in Table~\ref{tab:comparison}.
The largest timing gaps occur on manifold-valued systems.
For SO(3) with Wigner D-matrices, the propagation reduces to independent
small matrix exponentials on each $l$-block of size $(2l{+}1)^2$,
while MC must perform a quaternion multiplication for every particle at
every time step.
The $\bar{d} = 1$ systems (LV, 3D tracer) have a narrower gap because
Stein closure adds per-step linear algebra that MC does not incur.

\textbf{Filter accuracy.}
On the coupled Duffing oscillators, the SKF has lower mean RMSE than the
reported baselines (EKF, UKF, EnKF, and PF) over
10 independent seeds.
The EKF, UKF, and EnKF cluster at similar RMSE. In these runs, changing the
Gaussian propagation rule does not remove the error associated with a
non-Gaussian posterior.
The 500k-particle PF also performs comparably to the
Gaussian filters, which is consistent with particle degeneracy under partial
observation.
This PF baseline is a standard bootstrap filter using the transition prior
as the proposal. Proposal-adapted PF variants (for example EKF- or
UKF-informed proposals) are a useful stronger baseline for future study.
The SKF behavior is consistent with the polynomial exponential-family belief
retaining higher-order moment information about the nonlinear coupling.

\textbf{Filter timing and scalability.}
In the multi-seed PF comparison, the SKF filter runtime grows from
4\,s ($n{=}6$) to 41\,s ($n{=}10$), scaling polynomially.
At $n{=}10$, the SKF run uses less wall-clock time than the
500k-particle PF run on the
same CPU (41\,s vs.\ 312\,s).
The larger active-closure sweep reaches $n{=}20$ in 2079\,s for a
25-step trajectory on the same CPU.
The bootstrap PF is faster than the SKF in the largest rows
($n\ge 16$), although it remains much less accurate on this benchmark.
The factorization-caching structure (Remark~\ref{rem:factor-cache})
matters at this scale: the Stein closure matrix $\Lambda_1(\lambda)$ is
factorized once per prediction window and reused across all ODE substeps.

\textbf{Comparison with MEM-KF.}
The MEM-KF~\cite{teng2025} requires $O(G^n)$ quadrature for the
partition function $Z(\lambda)$ at each step.
At $n{=}6$ this is borderline feasible ($G^6 \sim 10^7$).
At $n{=}10$ it exceeds $10^{16}$ operations per step, and at
$n{=}20$ even a coarse $G=15$ grid exceeds $10^{23}$ evaluations.
The SKF avoids evaluating $Z$ via score matching
(Proposition~\ref{prop:sm}), replacing the exponential-cost
quadrature with a polynomial-cost linear solve.
Additionally, MEM-KF is restricted to discrete-time bilinear maps
whose moment propagation closes automatically.
Continuous-time SDEs with $\bar{d} \ge 1$ require the Stein closure
(Section~\ref{sec:stein}), which is a contribution of this work.
Once the posterior parameter $\lambda^+$ is available, another option
is to extract a MAP state by minimizing the polynomial energy
$\lambda^+\cdot\phi(x)$.
This is commonly done with a polynomial optimization problem and an SDP
relaxation in the moment-SOS
hierarchy~\cite{lasserre2001global,parrilo2003semidefinite}, as in
MEM-KF-style state extraction~\cite{teng2025}.
This POP/SDP state-extraction step is compatible with the posterior
parameter produced by the SKF, but the experiments here do not use it
because it would add a separate SDP layer.
We instead recover posterior moments, which the recursive filter already needs
for the next prediction step, and report the posterior mean rather than
a MAP point.

\textbf{Multi-seed variability for all filters.}
Table~\ref{tab:comparison} reports the SKF column as
mean$\,\pm\,$std and the baseline columns as mean only, to keep the
summary table compact. Table~\ref{tab:comparison-full} below repeats
the coupled-oscillator filter rows with mean$\,\pm\,$std for every
method. Two patterns are visible. First, the Gaussian baselines have
$\sigma/\mu \approx 10$--$30\%$ across all dimensions, i.e., they are
seed-robust at the cost of being non-Gaussian-blind. Second, the SKF
has comparatively wider $\sigma/\mu$: about $40$--$80\%$ for
$n{\le}10$, about $60\%$ at $n{=}12$, and above $100\%$ in several
larger rows because the mean RMSE is very small. Thus the absolute
standard deviation remains small, but the relative ratio does not
monotonically tighten. The SKF mean stays below every
baseline in every row and at every dimension.

\begin{table}[ht]
\caption{Coupled oscillator filter RMSE: full mean$\,\pm\,$std for
every method, averaged over 10 independent seeds at every dimension.}
\label{tab:comparison-full}
\centering
\footnotesize
\setlength{\tabcolsep}{4pt}
\begin{tabular}{lccccc}
\toprule
\textbf{System} & \textbf{SKF (ours)} & \textbf{EKF} & \textbf{UKF} & \textbf{EnKF} & \textbf{PF (500k)} \\
\midrule
Coupled osc.\ ($n{=}4$, $r{=}3$)  & $\mathbf{.0103\!\pm\!.0040}$ & $.0676\!\pm\!.0143$ & $.0725\!\pm\!.0136$ & $.0762\!\pm\!.0150$ & $.0727\!\pm\!.0136$ \\
Coupled osc.\ ($n{=}6$, $r{=}3$)  & $\mathbf{.0200\!\pm\!.0154}$ & $.0746\!\pm\!.0232$ & $.0746\!\pm\!.0255$ & $.0762\!\pm\!.0265$ & $.0754\!\pm\!.0232$ \\
Coupled osc.\ ($n{=}8$, $r{=}3$)  & $\mathbf{.0181\!\pm\!.0112}$ & $.0713\!\pm\!.0160$ & $.0775\!\pm\!.0129$ & $.0806\!\pm\!.0121$ & $.0747\!\pm\!.0155$ \\
Coupled osc.\ ($n{=}10$, $r{=}3$) & $\mathbf{.0197\!\pm\!.0123}$ & $.0728\!\pm\!.0134$ & $.0743\!\pm\!.0222$ & $.0752\!\pm\!.0221$ & $.0744\!\pm\!.0140$ \\
Coupled osc.\ ($n{=}12$, $r{=}3$) & $\mathbf{.0029\!\pm\!.0017}$ & $.0708\!\pm\!.0089$ & $.0731\!\pm\!.0086$ & $.0768\!\pm\!.0085$ & $.0741\!\pm\!.0087$ \\
Coupled osc.\ ($n{=}14$, $r{=}3$) & $\mathbf{.0039\!\pm\!.0042}$ & $.0801\!\pm\!.0084$ & $.0810\!\pm\!.0079$ & $.0842\!\pm\!.0087$ & $.0818\!\pm\!.0082$ \\
Coupled osc.\ ($n{=}16$, $r{=}3$) & $\mathbf{.0055\!\pm\!.0041}$ & $.0752\!\pm\!.0077$ & $.0773\!\pm\!.0072$ & $.0820\!\pm\!.0080$ & $.0781\!\pm\!.0072$ \\
Coupled osc.\ ($n{=}18$, $r{=}3$) & $\mathbf{.0039\!\pm\!.0044}$ & $.0772\!\pm\!.0074$ & $.0784\!\pm\!.0069$ & $.0816\!\pm\!.0073$ & $.0791\!\pm\!.0068$ \\
Coupled osc.\ ($n{=}20$, $r{=}3$) & $\mathbf{.0039\!\pm\!.0050}$ & $.0734\!\pm\!.0069$ & $.0752\!\pm\!.0063$ & $.0806\!\pm\!.0077$ & $.0761\!\pm\!.0062$ \\
\bottomrule
\end{tabular}
\end{table}

\subsection{Moment trajectory accuracy}
\label{app:moment-accuracy}

To verify that the Stein closure propagates moments correctly,
we compare all centered moments up to degree~4 against
large-sample Monte Carlo estimates over the full propagation interval.
The Duffing moment-trajectory test initializes from
$\mathcal{N}((0.5,0)^\top,\mathrm{diag}(0.04,0.04))$.
The compact Lotka--Volterra moment panel initializes at the equilibrium
$(\gamma/\delta,\alpha/\beta)$ with covariance $\mathrm{diag}(0.1,0.1)$.

\textbf{Duffing oscillator} ($n{=}2$, $r{=}4$, $\bar{d}{=}1$):
Figure~\ref{fig:moment-comparison-duffing} shows all 14 centered moments
up to degree~4 over $t \in [0, 2]$.
All moments (including variances, third-order skewness, and
fourth-order kurtosis) agree with MC to within $1\%$ relative error,
confirming that a single Stein closure layer resolves the unclosed
hierarchy with high accuracy.

\textbf{Lotka-Volterra} ($n{=}2$, $r{=}6$, $\bar{d}{=}1$):
Figure~\ref{fig:moment-comparison-lv} shows the same comparison
over $t \in [0, 1.5]$, starting from the predator-prey equilibrium.
Variances match MC within $1\%$. Third- and fourth-order moments
stay within $2$--$7\%$, reflecting the larger dynamic range
of the neutrally stable oscillatory dynamics.

\textbf{SE(2) kinematics} ($n{=}3$, $\bar{d}{=}0$):
Figure~\ref{fig:moment-comparison-se2} shows Fourier $\times$ position
moments ($\E[\cos(k\theta)\, p_x^a p_y^b]$ and
$\E[\sin(k\theta)\, p_x^a p_y^b]$) propagated via Dynkin's formula
($m(T) = e^{LT}m(0)$, no Stein closure needed).
All 16 moments match MC within $0.5\%$.

\textbf{SO(3) attitude} ($n{=}3$, $\bar{d}{=}0$):
Figure~\ref{fig:so3} shows 16 representative Wigner D-matrix moments
$\E[D^l_{mm'}(R_t)]$ for $l = 1, 2, 3, 4$,
propagated via block-diagonal matrix exponential ($m(T) = e^{L_l T}m(0)$
for each $l$-block).
All significant moments match $2 \times 10^5$-particle MC to within $0.3\%$.
The Wigner basis avoids the constraint-induced ill-conditioning
of the monomial basis ($\kappa(A)$ from $10^{14}$ to $O(10^2)$).

\begin{figure}[ht]
\centering
\includegraphics[width=\linewidth]{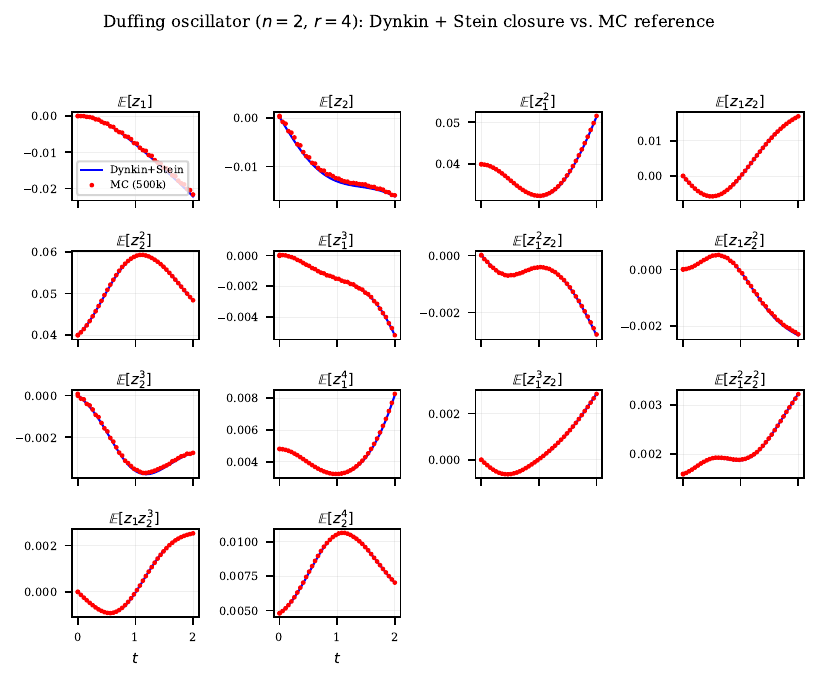}
\caption{Duffing ($n{=}2$, $r{=}4$): Dynkin + Stein (blue) vs.\ MC (red)
for all centered moments up to degree~4.
Relative errors are below $1\%$ across all displayed moments.}
\label{fig:moment-comparison-duffing}
\end{figure}
\begin{figure}[ht]
\centering
\includegraphics[width=\linewidth]{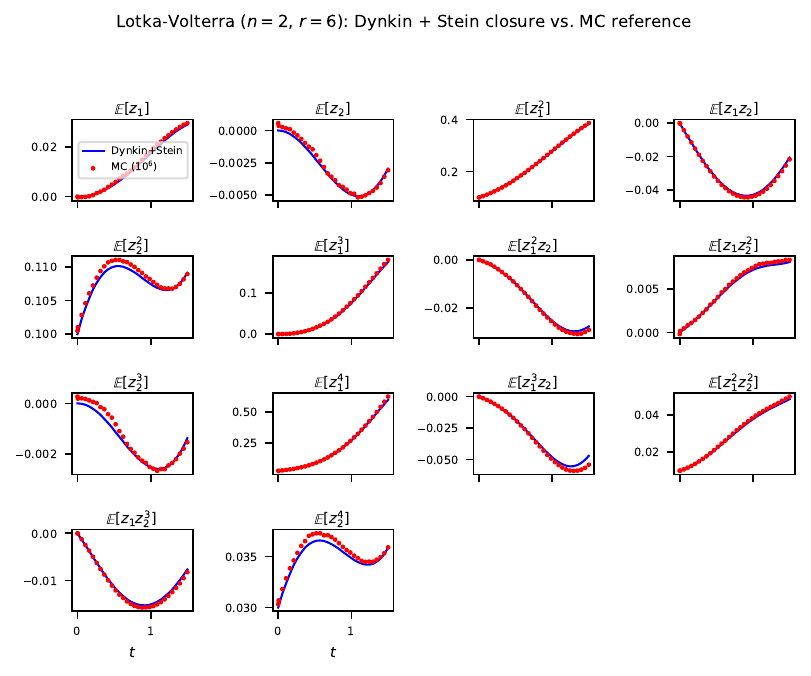}
\caption{Lotka-Volterra ($n{=}2$, $r{=}6$): Dynkin + Stein (blue) vs.\ MC (red).
Variance errors stay below $1\%$, while third- and fourth-order moments are
within $2$--$7\%$.}
\label{fig:moment-comparison-lv}
\end{figure}

\begin{figure}[t]
\centering
\includegraphics[width=\linewidth]{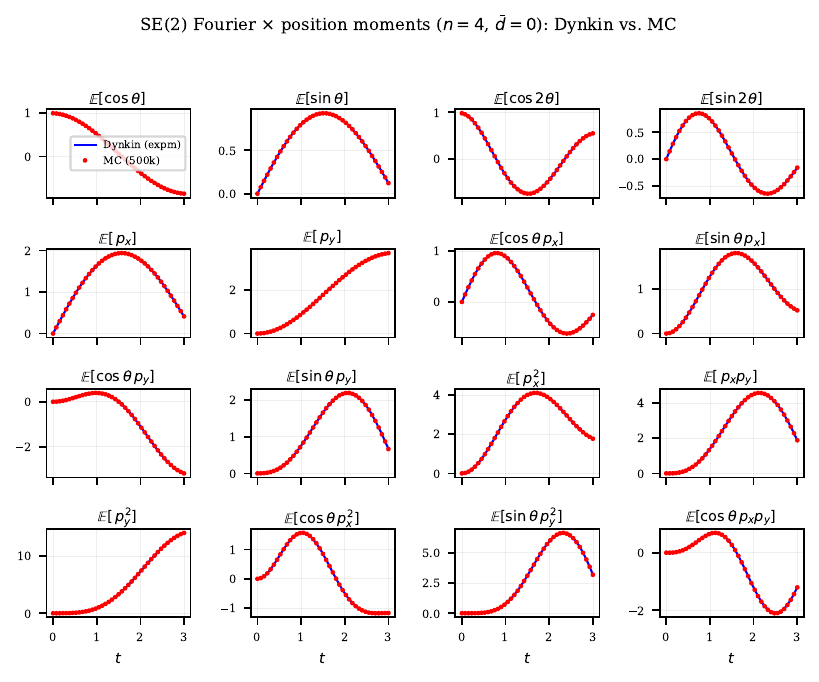}
\caption{SE(2) Fourier $\times$ position moments ($n{=}4$, $\bar{d}{=}0$):
Dynkin (blue) vs.\ MC (red) for 16 moments,
including the Fourier harmonics ($\cos\theta$, $\sin\theta$, $\cos 2\theta$, $\sin 2\theta$),
position means, second moments, and cross-moments
($\cos\theta \cdot p_x$, $\cos\theta \cdot p_x^2$, etc.).
Relative errors stay below $0.5\%$ on all of them.}
\label{fig:moment-comparison-se2}
\end{figure}

\begin{figure}[ht]
\centering
\includegraphics[width=\linewidth]{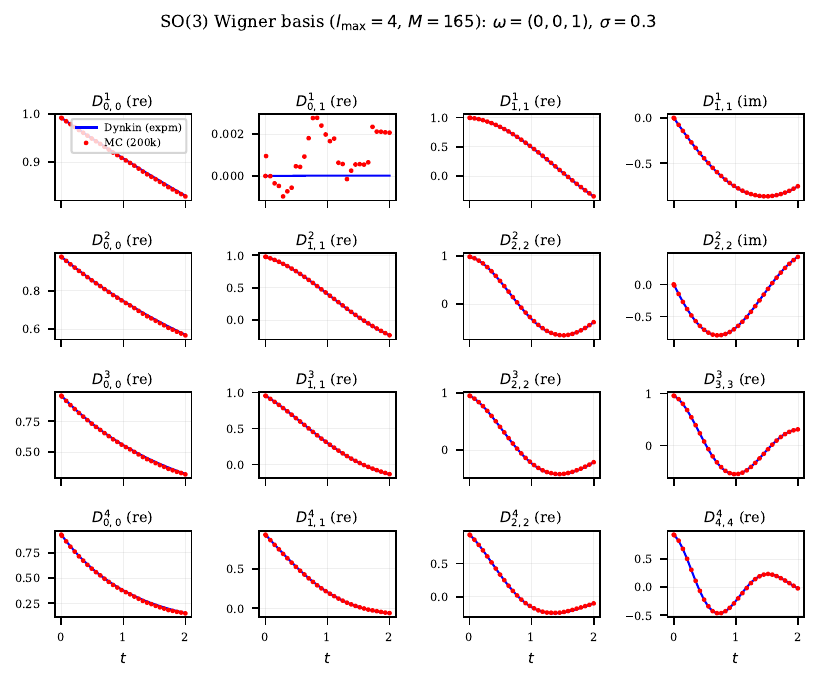}
\caption{SO(3) Wigner basis ($l_{\max}{=}4$, $M{=}165$):
Dynkin (blue) vs.\ MC (red) for representative Wigner moments.
Moments with appreciable magnitude ($D^1_{0,0}$, $D^1_{1,1}$) agree to
within ${\sim}0.3\%$, and moments that are near zero in the MC
reference remain near zero in the Dynkin trajectory.}
\label{fig:so3}
\end{figure}

\textbf{3D tracer advection} ($n{=}3$, $r{=}6$, $\bar{d}{=}1$):
\label{app:moment-comparison}
Figure~\ref{fig:fluid-moments} shows 12 selected centered moment
trajectories at degrees~2--5.
Second moments match MC within $1.3\%$, third moments within $3.2\%$,
and fourth moments within $5.7\%$ at $T{=}3$.
The largest errors are concentrated on pure $z_3$ powers, the
direction of the nonlinear quadratic lift.
The $z_1$ and $z_2$ moments, which are governed by linear equations,
match MC closely.
Figure~\ref{fig:fluid-error} summarizes the per-degree RMS scaled
error $|m_{\mathrm{SKF}}{-}m_{\mathrm{MC}}|/\max(|m_{\mathrm{MC}}|,\tau_d)$
where $\tau_d$ is the RMS moment magnitude at degree~$d$.
The hierarchy is clean: error grows monotonically with degree,
from ${<}1\%$ at degree~2 to ${\sim}8\%$ at degree~5.

\begin{figure}[ht]
\centering
\includegraphics[width=\linewidth]{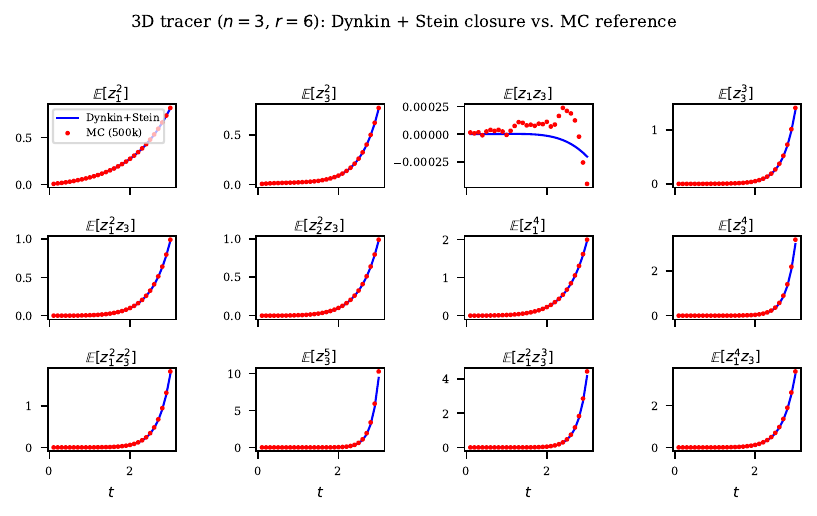}
\caption{Twelve selected centered moment trajectories for the
3D tracer ($n{=}3$, $r{=}6$): Dynkin + Stein (blue) vs.\ MC (red).
Degree 2--3 moments (rows 1--2) show excellent agreement through $T{=}3$.
Degree 4--5 moments (rows 3--4) track well through $T{=}2.5$,
with a small gap appearing at $T{=}3$.}
\label{fig:fluid-moments}
\end{figure}

\begin{figure}[ht]
\centering
\includegraphics[width=0.85\linewidth]{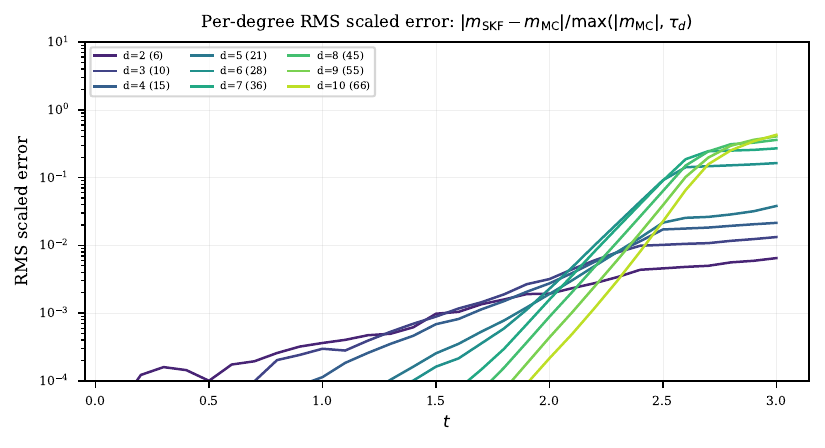}
\caption{Per-degree RMS scaled error for the 3D tracer.
Each curve aggregates all $\binom{d+2}{2}$ moments of degree~$d$.
The propagated state includes degrees up to $K{=}2r{-}2{=}10$
(not just the basis order $r{=}6$).
Degrees 2--5 remain below $8\%$ through $T{=}3$.
Higher degrees degrade faster, consistent with the
exponential amplification of the nonlinear $z_3$ coupling.}
\label{fig:fluid-error}
\end{figure}

\FloatBarrier

\subsection{Density reconstruction from propagated moments}
\label{app:additional-experiments}

The polynomial MED $p(x;\lambda) \propto \exp(-\lambda\cdot\phi(x))$
provides a closed-form density from the fitted parameters $\lambda$. We visualize this density on several systems, ordered as in Appendix~\ref{app:generators}.

\textbf{Lotka-Volterra} ($n{=}2$, $r{=}6$, $\bar{d}{=}1$):
Figure~\ref{fig:lv-appendix} shows the $(x_1, x_2)$ density of the
stochastic predator-prey system at $T = 1.5$~s. The bilinear interaction
$x_1 x_2$ produces an unclosed moment hierarchy that Stein closure
resolves (Section~\ref{sec:stein}). The polynomial MED at $r = 6$
reconstructs the anisotropic, non-Gaussian shape, matching the
$10^6$-particle MC reference. This density reconstruction uses initial
law $\mathcal{N}((\gamma/\delta,\alpha/\beta)^\top,
\mathrm{diag}(0.3^2,0.2^2))$.

\begin{figure}[t]
\centering
\includegraphics[width=0.85\linewidth]{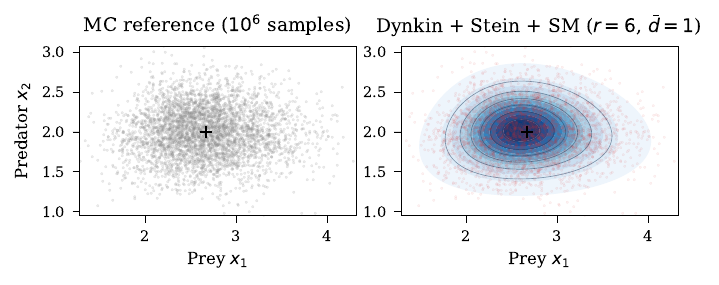}
\caption{Stochastic Lotka-Volterra ($\bar{d} = 1$):
the bilinear interaction $x_1 x_2$ produces an unclosed moment hierarchy
that is resolved algebraically by the Stein closure.
Left: $10^6$-particle MC reference.
Right: reconstructed density from Dynkin + Stein closure + score matching
($r = 6$), with MC samples overlaid in red.
The reconstruction tracks the anisotropic, non-Gaussian shape of the
reference.}
\label{fig:lv-appendix}
\end{figure}

\textbf{Double-well Langevin} ($n{=}2$, $r{=}6$, $\bar{d}{=}2$):
Figure~\ref{fig:double-well} shows a 2D double-well Langevin system
$dX = (-\nabla V(X))\,dt + \sigma\,dW$ with
$V(x) = \frac{1}{4}x_1^4 + \frac{1}{4}x_2^4 - \frac{1}{2}x_1^2 - \frac{1}{2}x_2^2$
($\sigma = 0.5$, $T = 3$~s).
Starting near the hilltop, the distribution splits into 4 modes at $(\pm 1, \pm 1)$.
Score matching ($r{=}6$, $M{=}28$, $\kappa(A) = 42$) reconstructs all 4 modes,
while the EKF Gaussian does not capture the multimodal structure
(kurtosis $1.4$ vs.\ $3$ for Gaussian).

\begin{figure}[ht]
\centering
\includegraphics[width=\linewidth]{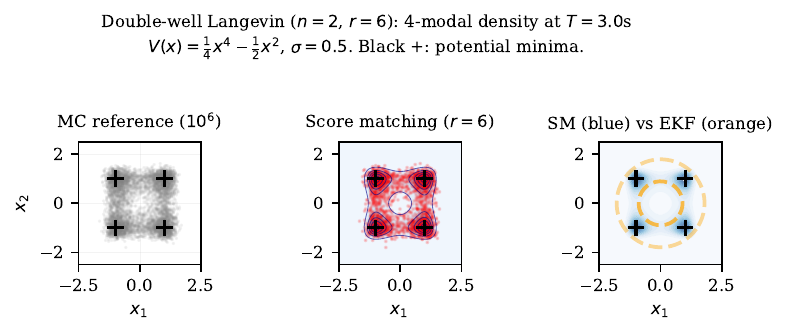}
\caption{Separable double-well Langevin ($n{=}2$, $r{=}6$):
$V(x)=\frac{1}{4}x_1^4+\frac{1}{4}x_2^4-\frac{1}{2}x_1^2-\frac{1}{2}x_2^2$, 4-modal density at $T = 3$~s.
Left: $10^6$-particle MC reference.
Center: score matching density reconstruction ($r{=}6$, $\kappa(A){=}42$) with MC samples (red).
Right: score matching (blue) vs.\ EKF Gaussian (orange, $1\sigma$/$2\sigma$).
Black crosses: potential minima at $(\pm 1, \pm 1)$.
The score-matching reconstruction reproduces the four-mode structure
of the reference. The Gaussian fit, restricted to a unimodal density,
does not.}
\label{fig:double-well}
\end{figure}

\textbf{3D tracer advection} ($n{=}3$, $r{=}6$, $\bar{d}{=}1$):
Starting from $X_0 \sim \mathcal{N}([0,0,0.5]^\top, 0.05^2 I)$, the
nonlinear coupling creates an asymmetric plume:
Figure~\ref{fig:fluid-appendix} shows the $(x_1, x_3)$ marginal of
the tracer density at four times. The quadratic lift
$\alpha(x_1^2{+}x_2^2)$ produces a parabolic plume that the EKF
(Gaussian, symmetric ellipse) cannot represent. SKF propagates 286
centered moments ($K{=}10$) and reconstructs the non-Gaussian marginal,
matching the 500k-particle MC reference. Moment accuracy: $\E[z_3^2]$
within $1.3\%$ and $\E[z_3^3]$ within $3.2\%$ of MC at $T{=}3$
(Appendix~\ref{app:moment-comparison}).
Timing: SKF $8$\,s, MC (500k) $122$\,s, EKF ${<}0.1$\,s.

\begin{figure}[t]
\centering
\includegraphics[width=\linewidth]{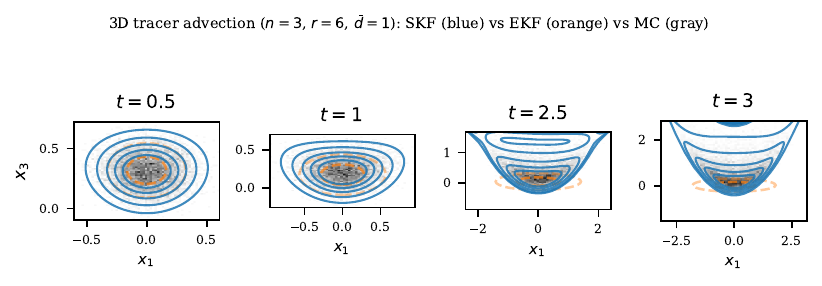}
\caption{3D tracer advection ($n{=}3$, $r{=}6$, $\bar{d}{=}1$):
$(x_1, x_3)$ marginal at four times.
The SKF reconstruction (blue) tracks the parabolic plume induced by
the quadratic lift, while the EKF reconstruction (orange) remains
symmetric and does not represent this curvature.
Reference: 500k MC particles (gray).}
\label{fig:fluid-appendix}
\end{figure}

\textbf{SE(2) localization} ($n{=}4$, $\bar{d}{=}0$):
The reconstructed $(p_x, p_y)$ marginal is Figure~\ref{fig:se2-density} in the main text. The non-Gaussian distribution arises from heading uncertainty coupled to translational motion, and the polynomial MED captures the curved, asymmetric shape starting at $r = 4$ ($M = 15$ parameters, one linear solve in $0.4$~ms). The MC reference uses $10^6$ particles.

\textbf{SO(3) attitude} ($l_{\max}{=}4$, $\bar{d}{=}0$):
Figure~\ref{fig:so3-sphere} shows the marginal density of each body-axis
direction on $S^2$ after rotation kinematics with angular velocity noise.
The MED fitted from Wigner D-matrix moments captures the concentration
near the north pole, matching the MC reference.
The Gaussian (EKF) approximation produces a similar shape for this
axially symmetric system. The two reconstructions are expected to
differ visibly under asymmetric dynamics or multimodal posteriors
from measurement ambiguity, which we defer to a companion paper on
Lie group filtering.

\begin{figure}[t]
\centering
\includegraphics[width=\linewidth]{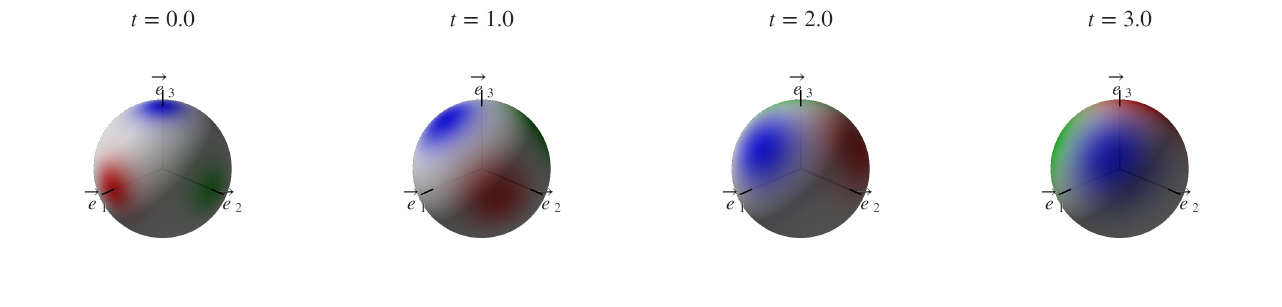}
\caption{SO(3) attitude density evolution: marginal distribution of each
body axis on $S^2$ (Red=$\vec{b}_1$, Green=$\vec{b}_2$, Blue=$\vec{b}_3$).
$\omega_0 = (0.5, 0.3, 1.0)$~rad/s, $\sigma = 0.15$.
At $t{=}0$ the distribution is concentrated near the identity rotation.
As time evolves, the tilted rotation creates asymmetric spreading.}
\label{fig:so3-sphere}
\end{figure}

\textbf{SE(3) position uncertainty} ($n{=}6$, $\bar{d}{=}1$):
Figure~\ref{fig:se3-banana} shows the $(x,y)$ marginal
of the position density for an SE(3) rigid body following a curved
trajectory with rotation noise
($T_{k+1} = T_k \exp(U_k + w_k)$, $w_k \sim \mathcal{N}(0,Q)$
in $\mathfrak{se}(3)$).
Moments are computed from $5 \times 10^5$ MC samples. Dynkin
propagation on SE(3) requires a Wigner $\times$ Legendre basis and is
left for the Lie-group filter extension.
The rotation-translation coupling produces a curved, asymmetric shape
that the Gaussian ($r{=}2$) does not capture.
At $r{=}5$ ($M{=}56$, $\kappa(A) = 86$ after Jacobi preconditioning),
the reconstructed density matches the MC reference.

\begin{figure}[t]
\centering
\includegraphics[width=\linewidth]{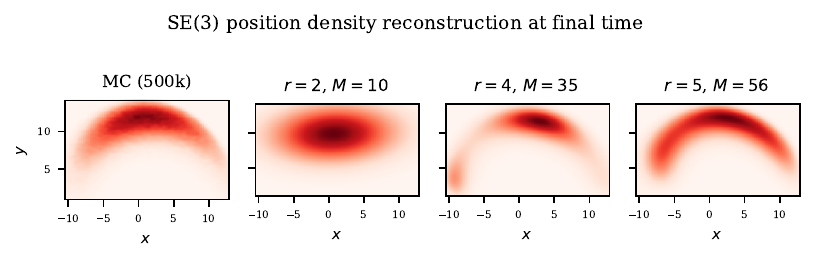}
\caption{SE(3) position density at the final time step,
shown as the $(x,y)$ marginal.
The leftmost panel is the 500k-particle MC reference.
The remaining panels show the score-matching reconstruction at
$r = 2$, $4$, and $5$.
The non-Gaussian curvature induced by rotation-translation coupling
is captured from $r = 4$ onward.}
\label{fig:se3-banana}
\end{figure}

\subsection{Duffing filter}
\label{app:duffing-filter}

The single Duffing oscillator ($n{=}2$, $r{=}4$, $\bar{d}{=}1$) with
$\sigma{=}0.15$ serves as a basic validation of the full predict-update
loop.
The measurement model is $y = x_1 + v$ with $v \sim \mathcal{N}(0, 0.04)$,
which is linear ($d_g{=}1$) and therefore conjugate for any $r \ge 2$.
We run 25 predict-update cycles at $\Delta t_{\mathrm{pred}} = 0.2$\,s,
with prediction via Dynkin + Stein closure (RK4) and update via the
conjugate score addition $\lambda^+ = \lambda^- + \lambda_{\mathrm{lik}}$
followed by Stein moment recovery (42 equations, 27 unknowns,
full column rank).
The initial law is $\mathcal{N}((0.5,0)^\top,0.1^2 I)$.
All computations are performed in centered coordinates $z = x - \mu$,
so each step reduces to a single linear system solve.

On this mildly nonlinear system, all methods track the ground truth
(Figure~\ref{fig:duffing-filter}).
The EKF, UKF, and EnKF perform comparably, indicating that for weak
nonlinearity the Gaussian approximation itself is adequate.
A clearer separation between the SKF and the Gaussian baselines emerges
at higher dimensions with stronger nonlinearity
(Section~\ref{sec:exp-highdim}).

\begin{figure}[t]
\centering
\includegraphics[width=\linewidth]{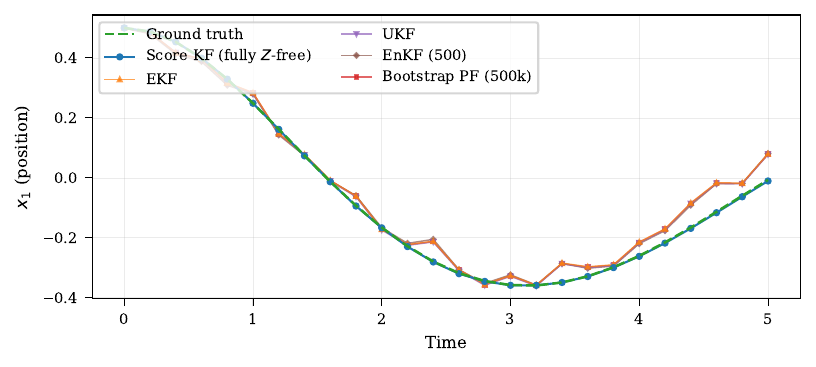}
\caption{Score Kalman Filter on the Duffing oscillator
($n{=}2$, $r{=}4$).
Ground truth (green dashed), SKF (blue), EKF (orange), UKF (purple),
EnKF (brown), and bootstrap PF ($5{\times}10^5$ particles, red).
On this mildly nonlinear benchmark all methods track the ground truth.}
\label{fig:duffing-filter}
\end{figure}

\subsection{Coupled oscillator filter time series}
\label{app:coupled-figures}

Figures~\ref{fig:coupled-n4}--\ref{fig:coupled-n20} show the full
predict-update filter trajectories for the coupled Duffing oscillator
network at $n{=}4,6,\ldots,20$.
At all dimensions, the SKF (blue) tracks the ground truth (green dashed)
more closely than the reported baselines: EKF (orange), UKF (purple),
EnKF (brown), and the 500k-particle bootstrap PF (red).
The EKF, UKF, and EnKF cluster together at similar RMSE,
suggesting that the limiting factor is the Gaussian approximation,
not the specific approximation strategy (linearization vs.\ sigma points vs.\ ensemble).
The PF RMSE is also close to these Gaussian baselines under partial
observation of the odd-indexed positions.
For $n{>}10$, the $r{=}3$ filter uses active Stein closure for the
degree-five moments requested by the moment ODE. This keeps the closure
overdetermined without introducing a chordal or banded sparsification.
Because this algorithmic change begins at $n{=}12$, the drop in RMSE from
$n{=}10$ to $n{=}12$ should be read as the effect of the active closure on
this structured generator, not as a pure dimension-scaling phenomenon.
The SKF infers the unobserved oscillators from coupling alone,
exploiting the full moment-based belief representation.
Quantitative RMSE comparisons are in Table~\ref{tab:comparison}.
For reproducibility, the coupled oscillator filter uses
$\gamma=0.3$, $\alpha=1.0$, $\beta=0.6$, $\kappa=0.3$, $\sigma=0.4$,
$\Delta t_{\mathrm{pred}}=0.15$, RK4 step $\Delta t_{\mathrm{ode}}=0.005$,
and $25$ predict-update cycles. The observed coordinates are
$q_1,q_3,\ldots$ with measurement covariance $R=0.3^2 I$.
The initial law is Gaussian with covariance $0.15^2 I$ and mean
$(q_1,\ldots,q_N,p_1,\ldots,p_N)$ whose position block cycles through
$(0.3,-0.2,0.1,-0.3,0.15,0.25,-0.1,0.2)$ and whose momentum block is zero.

\begin{figure}[t]
\centering
\includegraphics[width=\linewidth]{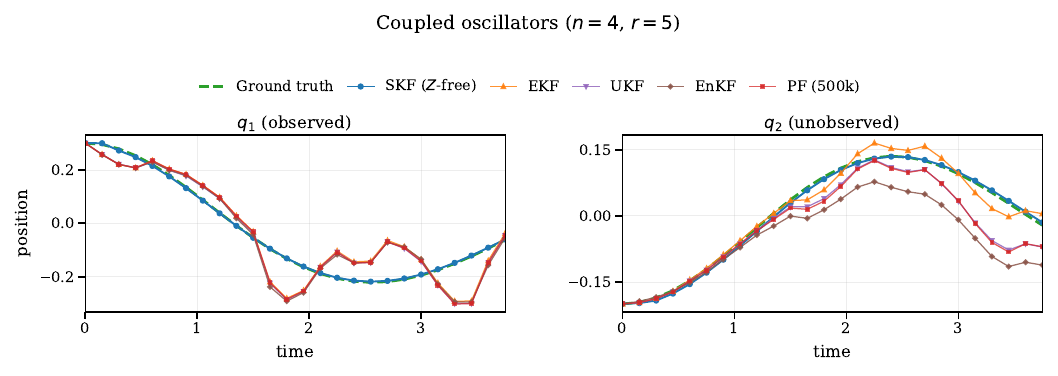}
\caption{Coupled oscillators ($n{=}4$, $r{=}5$).
Mean RMSE over 25 steps: SKF $0.005$, EKF $0.057$, UKF $0.061$,
EnKF $0.067$, PF $0.062$.}
\label{fig:coupled-n4}
\end{figure}

\begin{figure}[t]
\centering
\includegraphics[width=\linewidth]{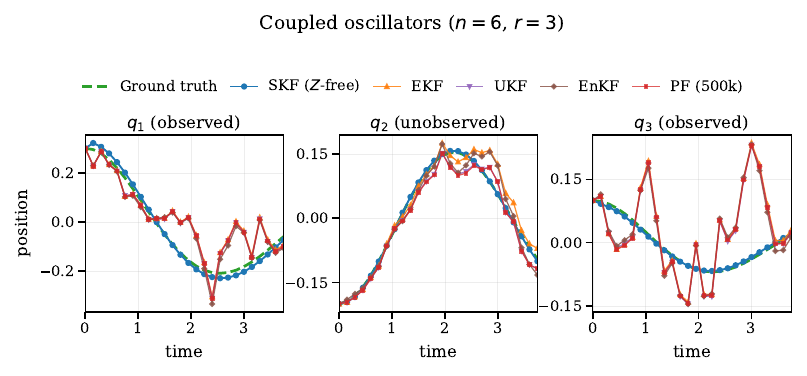}
\caption{Coupled oscillators ($n{=}6$, $r{=}3$).
Mean RMSE over 25 steps: SKF $0.011$, EKF $0.078$, UKF $0.077$,
EnKF $0.078$, PF $0.079$.}
\label{fig:coupled-n6}
\end{figure}

\begin{figure}[t]
\centering
\includegraphics[width=\linewidth]{fig_coupled_n8.pdf}
\caption{Coupled oscillators ($n{=}8$, $r{=}3$).
Mean RMSE over 25 steps: SKF $0.013$, EKF $0.068$, UKF $0.070$,
EnKF $0.073$, PF $0.072$.}
\label{fig:coupled-n8}
\end{figure}

\begin{figure}[t]
\centering
\includegraphics[width=\linewidth]{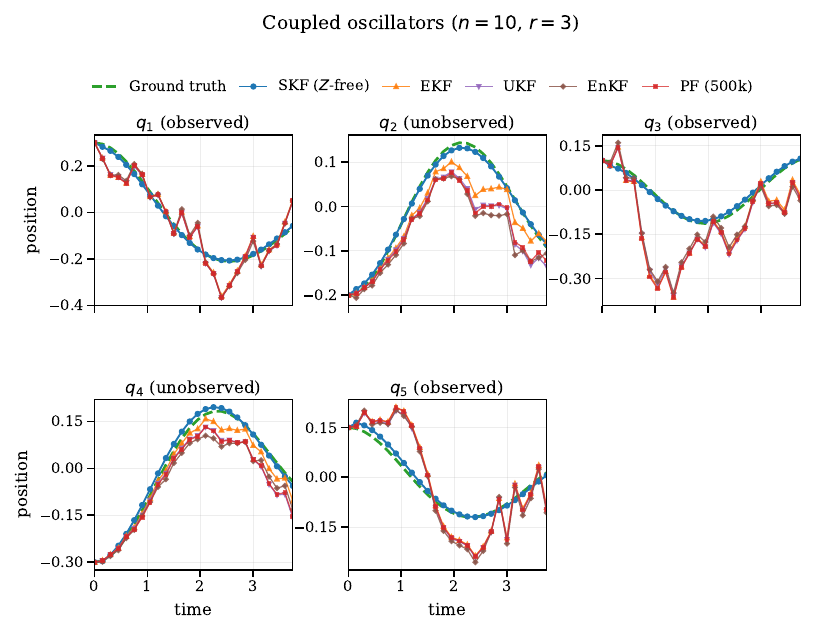}
\caption{Coupled oscillators ($n{=}10$, $r{=}3$).
Mean RMSE over 25 steps: SKF $0.012$, EKF $0.080$, UKF $0.081$,
EnKF $0.083$, PF $0.083$.}
\label{fig:coupled-n10}
\end{figure}

\begin{figure}[!b]
\centering
\includegraphics[width=\linewidth,height=.86\textheight,keepaspectratio]{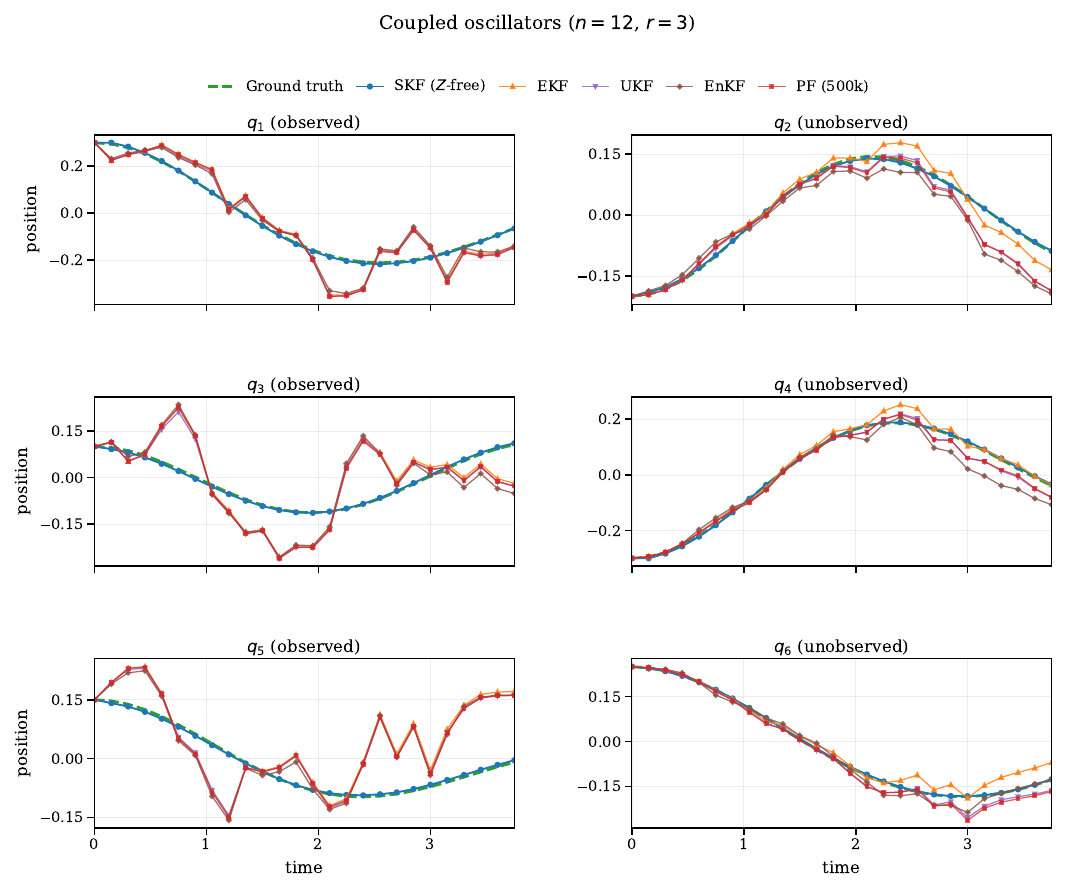}
\caption{Coupled oscillators ($n{=}12$, $r{=}3$, active Stein closure).
Mean RMSE over 25 steps: SKF $0.0040$, EKF $0.0696$, UKF $0.0718$,
EnKF $0.0756$, PF $0.0731$.}
\label{fig:coupled-n12}
\end{figure}

\begin{figure}[p]
\centering
\includegraphics[width=\linewidth,height=.86\textheight,keepaspectratio]{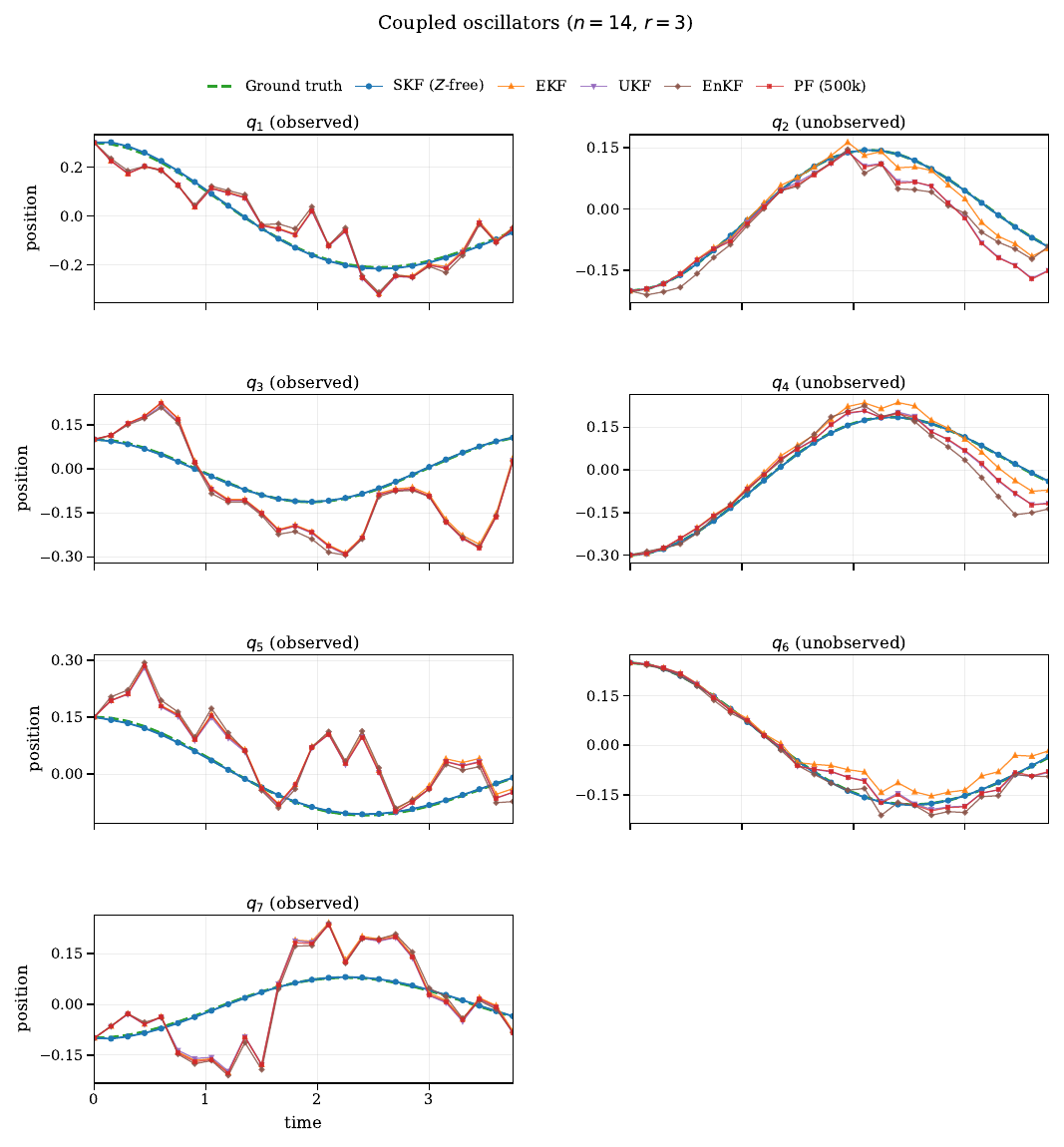}
\caption{Coupled oscillators ($n{=}14$, $r{=}3$, active Stein closure).
Mean RMSE over 25 steps: SKF $0.0024$, EKF $0.0814$, UKF $0.0836$,
EnKF $0.0880$, PF $0.0843$.}
\label{fig:coupled-n14}
\end{figure}

\begin{figure}[p]
\centering
\includegraphics[width=\linewidth,height=.86\textheight,keepaspectratio]{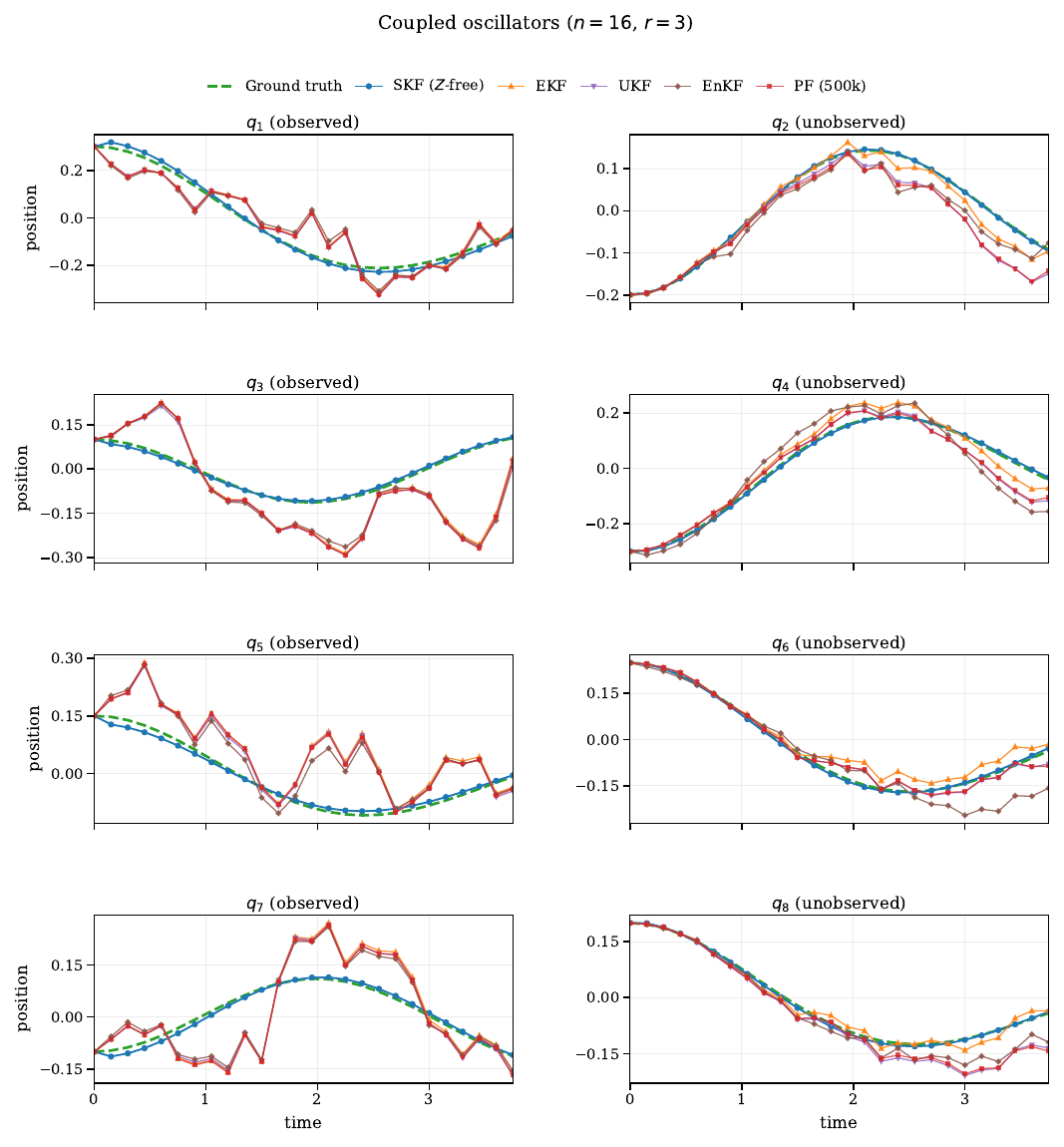}
\caption{Coupled oscillators ($n{=}16$, $r{=}3$, active Stein closure).
Mean RMSE over 25 steps: SKF $0.0085$, EKF $0.0763$, UKF $0.0795$,
EnKF $0.0845$, PF $0.0796$.}
\label{fig:coupled-n16}
\end{figure}

\begin{figure}[p]
\centering
\includegraphics[width=\linewidth,height=.86\textheight,keepaspectratio]{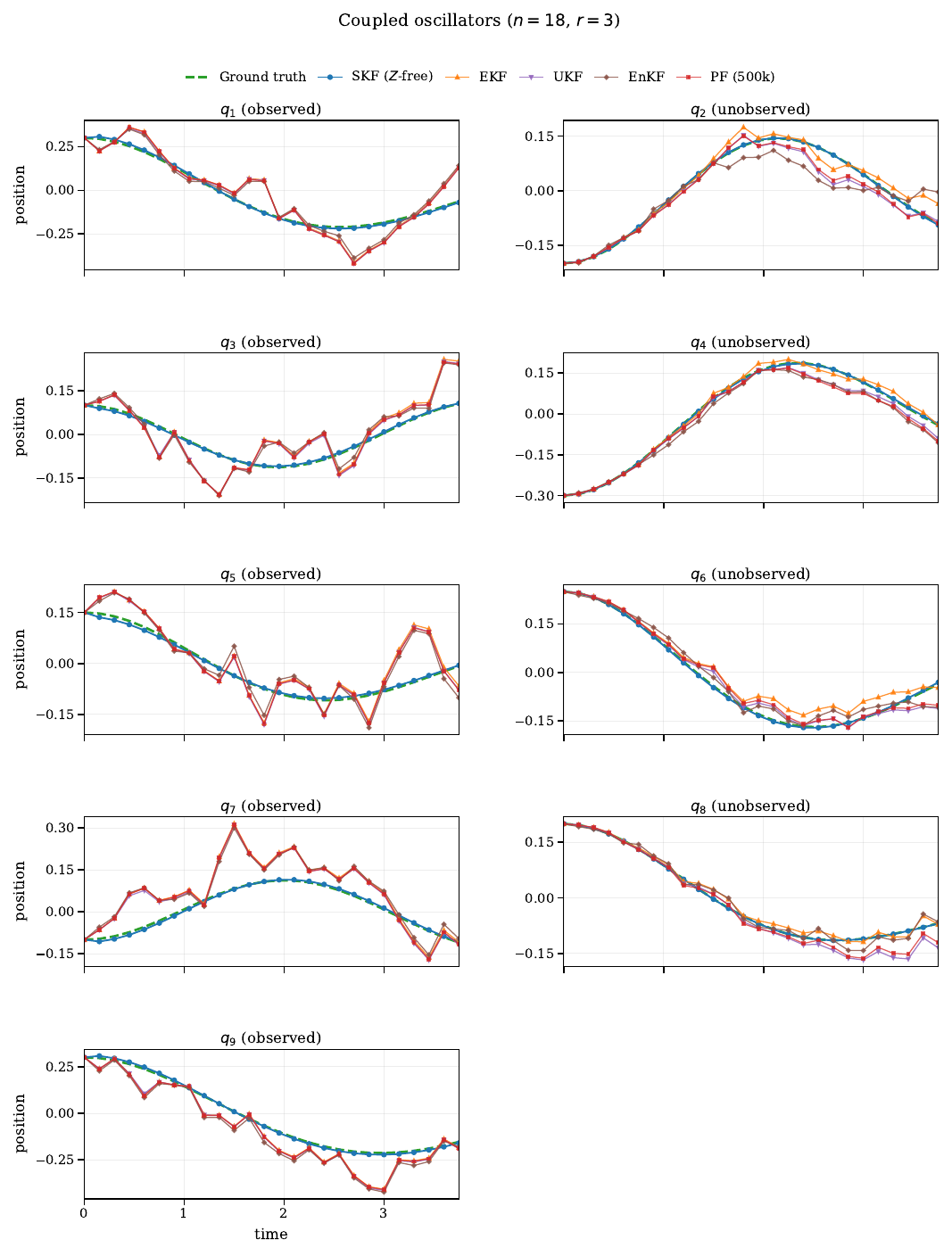}
\caption{Coupled oscillators ($n{=}18$, $r{=}3$, active Stein closure).
Mean RMSE over 25 steps: SKF $0.0052$, EKF $0.0646$, UKF $0.0660$,
EnKF $0.0673$, PF $0.0668$.}
\label{fig:coupled-n18}
\end{figure}

\begin{figure}[p]
\centering
\includegraphics[width=\linewidth,height=.86\textheight,keepaspectratio]{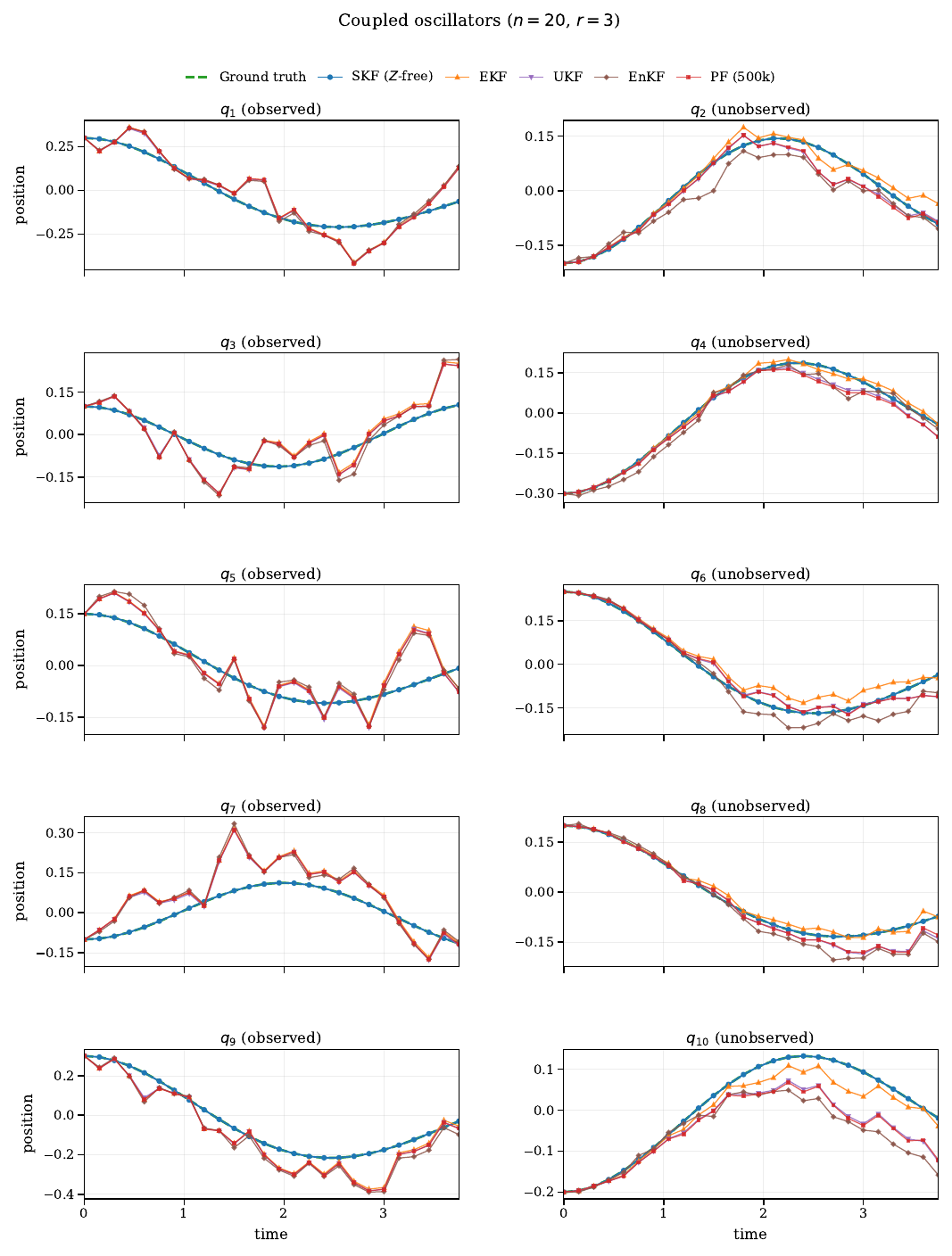}
\caption{Coupled oscillators ($n{=}20$, $r{=}3$, active Stein closure).
Mean RMSE over 25 steps: SKF $0.0002$, EKF $0.0617$, UKF $0.0644$,
EnKF $0.0680$, PF $0.0650$.}
\label{fig:coupled-n20}
\end{figure}

\section{Stein perspective on the duality of parameter and moment recovery}
\label{app:stein-duality}

This section provides a unified, high-level interpretation of both score parameter estimation (Proposition~\ref{prop:sm}) and moment recovery (Proposition~\ref{prop:stein}) through the Stein identity. Rather than viewing these procedures as separate constructions, we show that they arise from the same underlying weak formulation.

\paragraph{Stein identity as a weak constraint.}
For a smooth density $p$ with score $s(x) = \nabla \log p(x)$, the Stein identity~\cite{stein1972} states that
\begin{equation}\label{eq:stein-weak}
    \E[s(x) \cdot \nabla f(x) + \Delta f(x)] = 0,
\end{equation}
for all sufficiently regular test functions $f$. This is naturally interpreted as a \emph{weak formulation}: instead of imposing a pointwise condition on $p$, the constraint is enforced in an integrated sense against arbitrary test functions. This weak-form perspective underlies a range of modern computational-statistics methods built on Stein operators, including kernel Stein discrepancies~\cite{gorham2015} and minimum Stein discrepancy estimation~\cite{barp2019}. Anastasiou et al.~\cite{anastasiou2023} review these developments.

\paragraph{Score matching as a projected Stein system.}
The score matching formulation in Proposition~\ref{prop:sm} can be interpreted as a projection of the weak Stein identity onto a finite-dimensional test space with a parametric score model, paralleling the broader Stein-discrepancy framing of score-matching estimators in~\cite{barp2019,kume2026}. Restrict the test functions to
\[
    \mathcal{V}_r = \mathrm{span}\{\phi_\alpha(x) = x^\alpha \mid |\alpha| \le r\}.
\]
Substituting the score model $s_\lambda(x) = -J_\phi(x)\lambda$ into~\eqref{eq:stein-weak} and taking $f = \phi_\beta$ yields
\[
    \E_p\!\left[ -(J_\phi \lambda) \cdot \nabla \phi_\beta + \Delta \phi_\beta \right] = 0,
\]
for all $|\beta| \le r$. Collecting these equations for all basis functions gives
\[
    \E_p[J_\phi^\top J_\phi]\, \lambda = \E_p[\Delta \phi],
\]
which recovers~\eqref{eq:lambda-star}.

\paragraph{Moment recovery as the dual problem.}
Conversely, if the score parameters $\lambda$ are given, the same projected Stein identity becomes a system of linear relations among the moments. The unknowns are now the higher-order moments, and the Stein equations provide a mechanism for reconstructing them from lower-order statistics. This is precisely the role of the Stein closure in Proposition~\ref{prop:stein}.

In the ideal (infinite-dimensional) setting, the Stein identity characterizes the distribution through an infinite family of constraints. In practice, both the test space and the moment representation are truncated. For moment recovery we use the component-wise identities underlying Proposition~\ref{prop:stein}, not only the scalar collapsed form~\eqref{eq:stein-weak}. With $x^\beta$ restricted to $\mathcal{V}_K = \mathrm{span}\{x^\beta : |\beta| \le K\}$, given $\lambda$ and moments $m^{(\le K)}$, the moments of degree $K{+}1$ are obtained by minimizing the residual of the projected component-wise Stein operator:
\[
    m^{(K+1)} = \arg\min_{m} \sum_{|\beta| \le K}\sum_{i=1}^n
    \left( \E\!\left[ s_{\lambda,i}(x)x^\beta + \partial_i x^\beta \right] \right)^2.
\]

\paragraph{Summary.}
Score matching and Stein-based moment recovery arise from the same fundamental principle: enforcing the Stein identity over a finite set of test functions. The difference lies only in the choice of unknowns. Score matching solves for the score parameters given moments (Proposition~\ref{prop:sm}). Stein closure solves for moments given the score parameters (Proposition~\ref{prop:stein}). In both cases, an infinite collection of constraints is approximated by a finite one, typically solved in a least-squares sense due to truncation. This unifies parameter estimation and moment closure as dual instances of the same projection principle.

This connects to weak formulations of stochastic dynamics more broadly. Moment propagation follows from Dynkin's formula, which describes how expectations evolve over time, while the Stein identity enforces consistency between the score representation and the moments at each time. The proposed framework can therefore be viewed as combining two complementary projections: one governing temporal evolution (via Dynkin) and one enforcing distributional consistency (via Stein). Together, they provide a finite-dimensional approximation of the underlying density evolution problem.
\section{Future research directions and open questions}
\label{app:limitations}

(i) \emph{Off-model moment consistency.}
The reconstruction step minimizes Fisher divergence rather than KL
divergence. When the true density lies in the polynomial exponential
family, Theorem~\ref{thm:sm-maxent} shows that the distinction
disappears: score matching recovers the same parameters as maximum
entropy. Away from the model class, however, the fitted density need
not reproduce every propagated moment exactly. The Stein augmentation
in Section~\ref{sec:stein} reduces this mismatch, but the remaining
error should depend on both the basis order $r$ and the regularity of
the target density. Making that dependence explicit is still open.

(ii) \emph{Automatic bases and conditioning.}
The theory is basis-invariant, but the linear algebra is not.
Raw monomials give $\kappa(A) \sim 10^4$--$10^6$ in practice.
Centering coordinates and switching to orthogonal bases such as
Legendre polynomials bring this down to $O(10)$--$O(100)$ in the
examples here. For that reason, centered orthogonal coordinates are the
default implementation rather than a cosmetic change
(Appendix~\ref{app:centering} gives the centering formulas and
Appendix~\ref{app:basis-change} gives the change-of-basis theory).
A practical open problem is that this basis choice is still hand tuned.
Automatic basis selection or adaptive preconditioning along the filter
trajectory remains future work.

(iii) \emph{Lie group score matching beyond compact groups.}
On SO(3), the Wigner D-matrix basis is already intrinsic:
the block-diagonal moment propagation follows from the
representation-theoretic structure of the generator
(Casimir element $+$ Lie algebra drift), and the embedding-coordinate
derivation in Appendix~\ref{app:so3-generator} is merely a
convenient route to the same result.
Score matching on general Riemannian manifolds is
well-studied~\cite{debortoli2022riemannian},
and on compact groups (e.g., tori, SO($N$)),
harmonic analysis provides a finite spectral basis with
closed-form moment propagation~\cite{wang2022lie}.
The gap is in noncompact groups such as SE(3) $=$ SO(3) $\ltimes \R^3$:
the irreducible unitary representations are infinite-dimensional
(induced representations from the semidirect structure).
This is exactly where the compact-group story stops being automatic:
we do not know a finite spectral basis with comparable moment-closure
properties.
Our SE(2) experiments use a product basis
(Fourier for the compact SO(2) factor, Legendre for $\R^2$)
adapted to the semidirect structure, but this is not the full SE(2)
harmonic analysis.
For SE(3), the natural product basis is Wigner $\times$ Legendre
(Wigner for the rotation subgroup, Legendre for the translational
coordinates).
The SE(3) density reconstruction in
Appendix~\ref{app:additional-experiments}
uses MC-sampled moments as a proof of concept.
Dynkin + Stein closure on SE(3) with the Wigner $\times$ Legendre basis,
and the corresponding predict-update filter on Lie groups, are the
next step rather than part of the present paper.

(iv) \emph{Stein closure counts and conditioning.}
There are two questions hiding under ``well-posedness.''
The first is combinatorial: does the finite Stein system provide enough
equations for the degree-$(K{+}1)$ moments?
The second is numerical: once there are enough equations, is the
resulting matrix full column rank and well conditioned along the
trajectory?
Appendix~\ref{app:stein-wellposed} answers the counting part for the
full-basis system and gives a sufficient condition for when the Stein
closure is overdetermined.
For $r{=}3$, the standard augmented closure is overdetermined through
$n{=}15$ (including the even-dimensional oscillator cases
$n \le 14$) and becomes underdetermined at $n{=}16$.
The oscillator sweep at $n{=}12$--$20$ uses the generator itself:
the active closure solves only for the degree-five moments requested
by the moment ODE, a restriction that becomes essential at
$n{=}16,18,20$.
At $n{=}20$, this leaves $14{,}500$ targets and $32{,}800$ Stein
equations.
The remaining gap is not the equation count but a structural rank and
conditioning theory for these counted and active systems.
Even with full column rank, $\Lambda_1(\lambda)$ can become poorly
conditioned as the fitted parameters $\lambda(t)$ approach the
bifurcation surface $\mathcal{B}$ discussed below, making the closure
numerically fragile near modality transitions.

(v) \emph{Stein recovery is a projection after truncation.}
The posterior moment recovery (Section~\ref{sec:recovery})
inverts the parameter-to-moment map via a truncated Stein system.
The self-contained system (moments $\le K$) is underdetermined.
The system we use instead adds higher-$|\beta|$ equations and
truncates the moments above the tracked degree.
This is a projection onto the consistency manifold of
the $(\lambda, m)$ map, not an exact inversion.
After propagation and measurement update, the moments being
projected are no longer exactly on that manifold, and the
projection can move the recovered moments away from the
physically propagated ones---even when the fitted density
looks good.
The approximation is accurate in centered coordinates on $\R^n$
(Duffing: residual $\sim 10^{-5}$--$10^{-9}$), but on manifold
state spaces with truncated product algebras (e.g., the
trigonometric ring for SE(2)), the product truncation introduces
additional error.
The iterative SM-Stein consistency refinement
(Appendix~\ref{app:smom}) reduces this truncation error
empirically by 2--3 orders of magnitude, but a formal
convergence proof for the fixed-point iteration remains open.

(vi) \emph{Generator-aware active closure.}
The $n{=}12$--$20$ coupled oscillator experiments exploit the generator:
only the degree-five moments produced by the local quadratic force
need to be closed. This active closure is structured in a useful way,
because it is tied directly to the moment ODE, but it does not amount
to a general chordal or banded sparse moment method. A system whose
generator requests a dense set of high-degree moments may still require
larger $r$, extended Stein equations, or a genuinely sparse
basis/closure construction. This is where a true sparse moment method
would enter beyond the present $n{=}20$ demonstration. We view this as
a promising direction for future work building on the SKF framework
developed here.

(vii) \emph{Model-based filtering under misspecification.}
Like MEM-KF and other model-based nonlinear filters, the SKF assumes
that the dynamics, measurement model, and noise statistics are accurate
enough for moment propagation to be useful. The contribution here is to
remove the partition-function bottleneck in MaxEnt reconstruction. It
does not by itself solve model misspecification or partially known
state-space models.

(viii) \emph{Polynomial families favor moderate-order structure.}
The polynomial exponential family is most effective when the filtering
density is reasonably captured by finitely many moderate-order moments
in the chosen basis.
For broad heavy-tailed distributions, or beliefs requiring very high-order
moments to represent accurately, richer bases or different density
families may be more appropriate.
In the settings considered here (nonlinear oscillators, predator-prey
dynamics, fluid tracer advection, rigid body kinematics), the dominant
challenge is multimodality and skewness rather than extreme tail behavior,
which is why this family is a useful approximation. Broadening the
scope to heavy-tailed or multi-scale distributions would require either
different sufficient statistics or a different density family.

(ix) \emph{Boundaries and hybrid jumps reintroduce density values.}
The present work addresses SDEs on open domains without boundaries.
More generally, moment propagation via Dynkin's formula can involve
additional integral terms (e.g., boundary flux in guard-triggered
hybrid systems, codimension-1 surface integrals for reflecting
boundaries) where the probability current across the boundary
requires pointwise density evaluation (not just the score)
to ensure mass conservation. This reintroduces the partition function
$Z$, and with it the obstacle that SKF was designed to avoid. Extending
the method to stochastic hybrid systems or domains with nontrivial
boundary conditions therefore requires new boundary-aware closure
machinery.

(x) \emph{Modality transitions for single-component closures when
$\bar{d} \ge 2$.}
\label{item:dbar2}

A key open case is filtering through changes in modality. The
score-matching step can still be well behaved, but the propagated moments
must first survive the closure. For filtering, this issue is most relevant
when prediction must carry a multimodal belief across a long, noisy, or
partially observed interval. Frequent informative measurements can collapse
the posterior before the closure error dominates. If those moments are
already consistent with the polynomial exponential family,
Theorem~\ref{thm:sm-maxent} recovers the corresponding parameters. The
trouble begins earlier, inside the closure used to propagate those
moments. When $\bar{d}\ge 2$, Stein closure has to resolve more than
one layer of missing moments. A single component that moves from one
mode to two modes also has to pass through a degenerate geometry where
the parameter-to-moment map is badly conditioned. In that regime the
multi-layer Stein backsolve can amplify small errors until the closed
moment ODE becomes unusable.
The double-well example separates representation from propagation. The
polynomial MED can still \emph{represent} the multimodal density
(Figure~\ref{fig:double-well}: $r{=}6$, $\kappa(A){=}42$, all four modes
recovered), but the single-component closure does not propagate cleanly
through the modal split.

\textbf{Bifurcation surface.}
As an illustrating example, we use one-dimensional systems to show this obstruction. Consider the
polynomial exponential family
$p(x;\lambda) \propto \exp(-E(x;\lambda))$ with
$E = \sum_{k=1}^{2r}\lambda_k x^k$ and $\lambda_{2r} > 0$
(normalizability).
The derivative $E'(x) = \sum_{k=1}^{2r} k\lambda_k x^{k-1}$
is a polynomial of degree $2r{-}1$ in $x$ whose coefficients
depend on $\lambda$.
The density $p$ is unimodal when $E$ has a single local minimum,
i.e., $E'$ has exactly one real root,
and bimodal when $E$ has two local minima separated by a
local maximum, i.e., $E'$ has at least three real roots.
The number of distinct real roots of $E'$ is locally constant
on each connected component of the complement of the
\emph{bifurcation surface}
$\mathcal{B} := \{\lambda \in \R^{2r} : \lambda_{2r} > 0,\;
\mathrm{disc}(E') = 0\}$,
where $\mathrm{disc}(E')$ is the discriminant of $E'$.
It vanishes exactly when $E'$ has a repeated root.
Since $\mathrm{disc}(E')$ is a polynomial in $\lambda$,
$\mathcal{B}$ is a codimension-$1$ algebraic variety.
Thus any continuous path in $\lambda$-space from a unimodal to a
bimodal configuration has to cross $\mathcal{B}$.

\textbf{Two-layer Stein closure and its sensitivity.}
For the sensitivity calculation, apply the Stein identity
(Proposition~\ref{prop:stein}) to the test function $x^k$. For each
$k\ge 1$,
\begin{equation}\label{eq:stein-1d}
  \textstyle\sum_{j=1}^{2r} j\lambda_j\, m_{k+j-1}
  = k\, m_{k-1}.
\end{equation}
If moments are truncated at degree $K = 2r{-}2$, the unclosed moments
$m_{K+1}$ and $m_{K+2}$ are resolved in two layers.
Layer~1 sets $k = K{-}2r{+}2$ in~\eqref{eq:stein-1d}
and isolates the leading term $2r\lambda_{2r}\,m_{K+1}$:
\begin{equation}\label{eq:layer1}
  m_{K+1}
  = \frac{(K{-}2r{+}2)\, m_{K-2r+1}
         - \sum_{j=1}^{2r-1}j\lambda_j\,m_{K-2r+j+1}}
         {2r\,\lambda_{2r}}.
\end{equation}
Layer~2 sets $k = K{-}2r{+}3$ and isolates $m_{K+2}$.
The resulting equation contains the layer~1 quantity $m_{K+1}$ with
coefficient $(2r{-}1)\lambda_{2r-1}$.
Substituting~\eqref{eq:layer1} yields the composed map
$\widetilde{m}_{K+2}(\lambda, m_{\le K})$ with sensitivity
\begin{equation}\label{eq:two-layer-sens}
  \frac{\partial \widetilde{m}_{K+2}}{\partial m_l}
  = O\!\Bigl(\frac{1}{\lambda_{2r}}\Bigr)
  + O\!\Bigl(\frac{\lambda_{2r-1}}{\lambda_{2r}^2}\Bigr),
  \qquad l \le K.
\end{equation}
The second term is the fragile one: layer~2 inherits the
$1/\lambda_{2r}$ amplification from layer~1 and then divides by
$\lambda_{2r}$ again.
For a single-layer closure ($\bar{d} = 1$), only the first
term is present and remains bounded as long as $\lambda_{2r} > 0$.
For two layers ($\bar{d} = 2$), the cross-term
$\lambda_{2r-1}/\lambda_{2r}^2$ can grow without bound
as $\lambda(t)$ evolves toward $\mathcal{B}$.

\textbf{Stiffness and spurious finite-time blowup.}
Specializing the separable double-well moment equation
\eqref{eq:double-well-moment-ode} to one coordinate gives the cubic
double-well
$dX = (X - X^3)\,dt + \sigma\,dW$ ($\bar{d} = 2$),
for which Dynkin's formula gives
\begin{equation}\label{eq:dw-moment-ode}
  \dot{m}_K = Km_K - K\,\widetilde{m}_{K+2}(\lambda, m_{\le K})
            + \frac{\sigma^2}{2}K(K{-}1)\,m_{K-2}.
\end{equation}
The Jacobian $J = \partial\dot{m}/\partial m$
inherits the $O(\lambda_{2r-1}/\lambda_{2r}^2)$
entries from~\eqref{eq:two-layer-sens} through the
$-K\,\widetilde{m}_{K+2}$ coupling.
As the density evolves from unimodal toward bimodal, the score-matching
fit tracks a path $\lambda(t)$ moving toward $\mathcal{B}$, and the
spectral radius $\rho(J)$ grows rapidly.
Once $\rho(J)$ exceeds the stability boundary of the integrator,
the numerical solution diverges.
The observed divergence time $t^*$ is mainly determined by the path of
$\lambda(t)$ in parameter space, not by the step size $\Delta t$. It is important to note that this is
the signature of a singularity in the closed moment ODE rather than an
ordinary time-discretization artifact.
The true density does not have this singularity. The additive noise makes
the Fokker--Planck equation uniformly parabolic, and the cubic drift is
dissipative at infinity, so standard Lyapunov estimates give a
non-explosive solution with finite polynomial moments for all finite
times. The blowup is therefore an artifact of the closure, not a
physical singularity.

\textbf{Numerical evidence.}
On the 2D double-well
(Appendix~\ref{app:double-well-generator},
Figure~\ref{fig:double-well}),
the closed moment ODE diverges at $t^* \approx 1.1$~s.
Refining $\Delta t$ by $16\times$ shifts $t^*$ by less than $5\%$,
which rules out a plain integration error as the main explanation.
Increasing the truncation order accelerates the blowup
($t^*{=}2.9$ for $r{=}4$, $t^*{=}0.54$ for $r{=}12$),
because higher $K$ exposes more eigendirections of $J$ to
the $1/\lambda_{2r}^2$ amplification.
The score matching Gram matrix remains well-conditioned
($\kappa(A) \approx 1$) throughout,
confirming the instability is in the closure.

\textbf{Universality.}
This bifurcation picture is not special to the linear parameterization
$E=\lambda^\top\phi$. In a smooth finite-dimensional family
$p(x;\theta)$, unimodal and bimodal regions are generically separated by
a codimension-$1$ boundary in parameter space, which is the usual
singularity-theory picture~\cite{thom1975}.
The specific amplification mechanism
($1/\lambda_{2r}^2$ from two-layer Stein closure)
is particular to the polynomial exponential family. The broader
geometric obstruction is that a single component has to pass through a
root-merging event in order to change modality.

\textbf{Possible resolutions.}
The same example also points to a useful path forward. The issue is not
that moments stop carrying information in highly nonlinear systems.
Rather, the exact Stein backsolve can be too rigid as a closure rule.
This makes it natural to treat closure as a small optimal control problem in
moment space. We only borrow standard control ingredients here:
linear-quadratic
optimal control for local feedback design~\cite{kalman1960optimal},
constrained MPC for receding-horizon optimization~\cite{mayne2000mpc},
and CBF-QPs for enforcing forward-invariance constraints through input
inequalities~\cite{ames2017cbf}. Note that in our setting the input is not a
physical actuator but is the unresolved moment, chosen to keep the
retained moment trajectory realizable and stable.

The resulting closure architecture has three parts. First, choose a
retained moment vector $x$ and treat the first unresolved moment block as
an input $u$, so the truncated Dynkin hierarchy has the control-affine
form
\begin{equation}\label{eq:closure-control-template}
  \dot{x}=F(x)+G(x)u.
\end{equation}
Second, restrict $u$ by moment-realizability constraints, such as Hankel
positive semidefiniteness. Third, use barrier inequalities to keep the
retained trajectory inside the feasible moment cone. If these constraints
leave a small amount of freedom, that freedom must be fixed by
system-specific structure such as the invariant law. This is not a
free lunch: the control architecture must know something about the
system, just as the higher-order moments in the moment ODE carry
physical information about that system. The point is that such structure can be analytic or
physics-informed, and need not come from particles. In the example
below, the anchor is the known invariant density of the one-dimensional
double well. Specializing the separable generator
\eqref{eq:double-well-generator} to one coordinate, consider the factor
\[
  dX = (X-X^3)\,dt + \sigma\,dW,
  \qquad \sigma=0.5,
\]
let $s_p=\mathbb{E}[X^{2p}]$ denote the even moments, set the retained
state to $x=(s_1,s_2,s_3)$, and use the unclosed eighth moment
$u=s_4=m_8$ as the control input. The retained
hierarchy is affine in $u$:
\begin{equation}\label{eq:dw-control-moments}
  \dot{s}_p =
  2p\,s_p - 2p\,s_{p+1}
  + \sigma^2 p(2p{-}1)s_{p-1},
  \qquad p=1,2,3.
\end{equation}
Equivalently, with $s_0=1$ absorbed into the constant term, this is the
standard affine control system
\begin{equation}\label{eq:dw-affine-control}
  \begin{gathered}
  \dot{x}=Ax+Bu+c,\\
  A=
  \begin{bmatrix}
    2 & -2 & 0\\
    6\sigma^2 & 4 & -4\\
    0 & 15\sigma^2 & 6
  \end{bmatrix},
  \quad
  B=
  \begin{bmatrix}
    0\\ 0\\ -6
  \end{bmatrix},
  \quad
  c=
  \begin{bmatrix}
    \sigma^2\\ 0\\ 0
  \end{bmatrix}.
  \end{gathered}
\end{equation}
The pair $(A,B)$ is controllable, since
$\operatorname{rank}[B,AB,A^2B]=3$. Thus, before imposing realizability
constraints, the unresolved moment can move the retained moment
dynamics. The only unusual part is what the input means: $u=m_8$ is
supplied by the closure, not by a physical actuator.
Instead of backsolving a second Stein layer, we require the next moment
to keep the truncated Stieltjes sequence feasible. With $Y=X^2$,
the even moments of $X$ become moments of a nonnegative variable:
$(1,s_1,s_2,s_3,u)=(\mathbb{E}[Y^0],\ldots,\mathbb{E}[Y^4])$.
We therefore impose the Hankel condition
\begin{equation}\label{eq:dw-hankel-condition}
  \begin{bmatrix}
    1 & s_1 & s_2\\
    s_1 & s_2 & s_3\\
    s_2 & s_3 & u
  \end{bmatrix}\succeq 0
\end{equation}
which gives the Schur-complement Hankel lower bound $u\ge L_H(x)$.
For the lower-order minor, we use the moment-cone barrier
$g(x)=s_1s_3-s_2^2$ and impose the control-barrier inequality
\begin{equation}\label{eq:dw-cbf-condition}
  \dot{g}(x,u)+\kappa_{\rm cbf} g(x)\ge 0,
  \qquad \kappa_{\rm cbf}>0,
\end{equation}
where $\kappa_{\rm cbf}$ is the barrier rate and
$\dot g(x,u)=\nabla g(x)\cdot(Ax+Bu+c)$. This gives an upper bound
$u\le U_{\rm CBF}(x)$, the CBF upper bound. We then use the one-step
constrained closure
\begin{equation}\label{eq:cone-cbf-closure}
  u(x)=L_H(x)+\theta_\infty(\kappa_{\rm cbf})
  \bigl(U_{\rm CBF}(x)-L_H(x)\bigr).
\end{equation}
This last scalar is the system-specific part of the construction:
$\theta_\infty(\kappa_{\rm cbf})\in[0,1]$, the stationary calibration
fraction, is calibrated from the known invariant density, not from
rollout samples. Thus the closure is physics-informed rather than
sample-informed. The one-dimensional double well has
$p_\infty(x)\propto\exp((x^2-x^4/2)/\sigma^2)$, so we compute its
stationary moments by deterministic quadrature and choose
$\theta_\infty$ so that \eqref{eq:cone-cbf-closure} is exact at that
stationary point. For $\kappa_{\rm cbf}=6$, this gives
$\theta_\infty=0.3496$.

Figure~\ref{fig:double-well-control-resolution} shows the closure
tracking the MC reference moments through $T=3$ without using MC
moments in the closure.
Figure~\ref{fig:double-well-control-evolution} reconstructs the 2D
density at $t=0,1,2,3$ from the same particle-free moments and recovers
the transition to the four-modal structure.

This is not a general solution to the two-layer closure problem yet, but
it is concrete evidence that the modality-transition failure is not fatal
to a moment and score-matching filter. Score matching can still supply
the density representation. The closure step just needs a stabilizing
layer that respects the deterministic moment cone. The lesson is to treat
Stein consistency as a soft preference or residual, while realizability
and barrier inequalities define the admissible closure.

\begin{figure}[H]
\centering
\includegraphics[width=\linewidth]{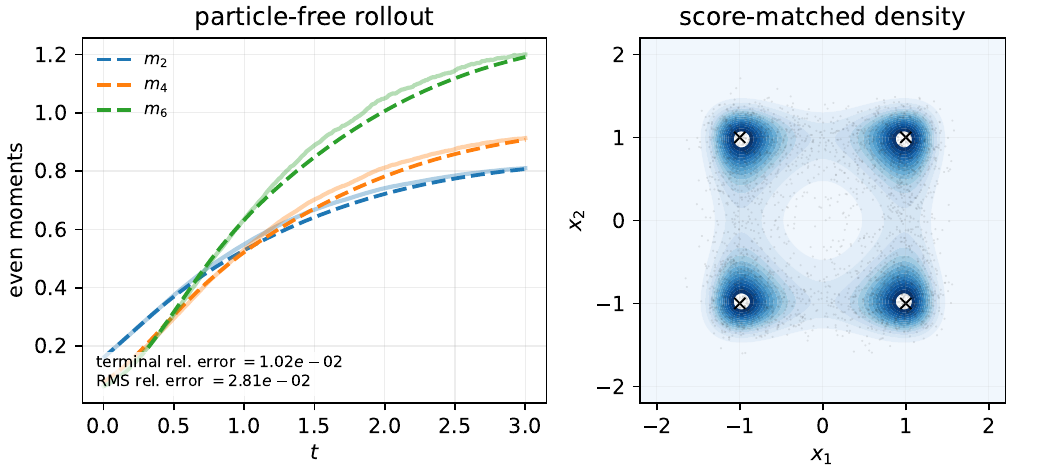}
\caption{Particle-free constrained closure on the double-well test case.
Left: even moments of the one-dimensional factor through $T=3$.
Solid faint curves are MC evaluation moments. Dashed curves are the
cone-CBF closure from~\eqref{eq:cone-cbf-closure}, which does not use MC
moments. Right: score-matched density for the separable 2D double well,
obtained from the product of the two particle-free marginal
reconstructions. MC samples are shown only as a reference.
The rollout has terminal relative error $1.0\times 10^{-2}$ and RMS
relative trajectory error $2.8\times 10^{-2}$ for $(m_2,m_4,m_6)$.}
\label{fig:double-well-control-resolution}
\end{figure}

\begin{figure}[H]
\centering
\includegraphics[width=\linewidth]{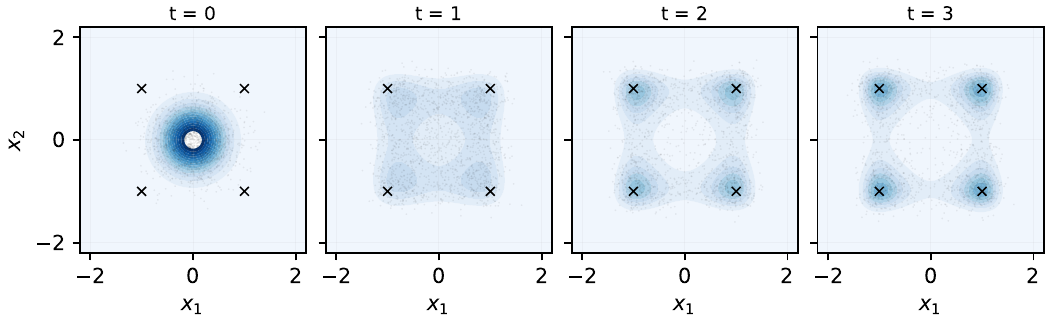}
\caption{Particle-free density evolution on the separable double-well
test case. Each panel shows the 2D score-matched density reconstructed
from the propagated moment sequence at the indicated time. MC samples
are shown only as a visual reference. The closure captures the transition
from the initial central mass to the four-well structure through $T=3$
without using particle moments.}
\label{fig:double-well-control-evolution}
\end{figure}


\section{Broader impacts}
\label{app:broader-impacts}

The Score Kalman Filter is a general-purpose nonlinear estimation
algorithm for stochastic dynamical systems. Its primary positive impact
is computational. By replacing the partition-function evaluation of
maximum-entropy density reconstruction with a linear solve, the SKF
makes high-order moment-based filtering tractable on commodity
hardware, which broadens access to uncertainty quantification
for applications that have until now relied on Gaussian-only or
particle-based filters. Likely beneficiaries are robotics and autonomous
systems, scientific data assimilation, and downstream tasks in
estimation-aware control where calibrated higher moments matter.

As with any improvement to nonlinear state estimation, the same
machinery could be deployed in surveillance or tracking pipelines whose
societal use is contested. We do not view these dual-use concerns as
unique to our method, since they apply broadly to the nonlinear
filtering literature, but we acknowledge them as a negative
impact. The paper releases no pretrained models, datasets, or other
artifacts with elevated risk of misuse, and all experiments use
synthetic dynamical systems.

\FloatBarrier

\end{document}